\newcolumntype{P}[1]{>{\centering\arraybackslash}p{#1}}
\newcolumntype{M}[1]{>{\centering\arraybackslash}m{#1}}
\newcolumntype{C}[1]{>{\centering\arraybackslash}c{#1}}
\definecolor{cellgray}{RGB}{220,220,220}
\newcommand{\pick} [1] {\stackrel{#1}{\leftarrow}}
\newcommand\NoThen{\renewcommand\algorithmicthen{}}
\newcommand{\ignore}[1]{}
\newcommand{\la}{\lambda}
\newcommand{\A}{\mathcal{A}}
\newcommand{\negl}{\mathrm{negl}}
\newcommand{\Cl}{\mathsf{Clock}}
\newcommand{\Com}{\mathsf{Com}}
\newcommand{\rec}{\mathsf{Rec}}
\newcommand{\set}[1]  {\left\{#1\right\}}
\newcommand{\mr}{\mathrm}
\newcommand{\mc}{\mathcal}
\newcommand{\tc}[1]{\tilde{\mathcal{#1}}}
\newcommand{\mbf}{\mathbf}
\newcommand{\Ch}{\mathcal{C}}
\newcommand{\EA}{\textsl{EA}}
\newcommand{\BB}{\textsl{BB}}
\newcommand{\VC}{\textsl{VC}}
\newcommand{\trustees}{\textsl{trustees}}
\newcommand{\Trustees}{\textsl{Trustees}}
\newcommand{\Trustee}{\textsl{Trustee}}
\newcommand{\trustee}{\textsl{trustee}}
\newcommand{\VOTE}{\texttt{VOTE}}
\newcommand{\VOTEP}{\texttt{VOTE\_P}}
\newcommand{\ENDORSE}{\texttt{ENDORSE}}
\newcommand{\ENDORSEMENT}{\texttt{ENDORSEMENT}}
\newcommand{\ANNOUNCE}{\texttt{ANNOUNCE}}
\newcommand{\RECREQ}{\texttt{RECOVER}-\texttt{REQUEST}}
\newcommand{\RECRES}{\texttt{RECOVER}-\texttt{RESPONSE}}
\newcommand{\UCERT}{$\mathsf{UCERT}$}
\newcounter{claimcounter}
\numberwithin{claimcounter}{theorem}
\newenvironment{myclaim}{\refstepcounter{claimcounter}{\par\noindent \textsc{\small{Claim \theclaimcounter}}:}}{}
\newenvironment{claimproof}{\par\noindent\textit{Proof of Claim \theclaimcounter:}}{\hspace*{\fill} \emph{(End of Claim \theclaimcounter)} $\dashv$}
\newenvironment{boxfig}[2]{%
     \begin{figure}[h!]
     \newcommand{\FigCaption}{#1}
     
     \begin{center}
       \begin{small}
         \begin{tabular}{@{}|@{~~}l@{~~}|@{}}
           \hline
           \rule[-1.5ex]{0pt}{1ex}\begin{minipage}[b]{.95\linewidth}
             \vspace{1ex}
             \smallskip
             }{%
           \end{minipage}\\
           \hline
         \end{tabular}
       \end{small}
      \caption{\FigCaption}
     \end{center}
     \vspace{-0.5cm}
   \end{figure}
}
\newif\ifextended
\newif\ifjournal
\newif\ifextended
\begin{document}
\markboth{N. Chondros et al.}{Distributed, End-to-end Verifiable, and Privacy-Preserving Internet Voting Systems}

\title{Distributed, End-to-end Verifiable, and Privacy-Preserving Internet Voting Systems} 
\author{
    Nikos Chondros
    \affil{University of Athens}
    Bingsheng Zhang
    \affil{Lancaster University}
    Thomas Zacharias
    \affil{University of Athens}
    Panos Diamantopoulos
    \affil{University of Athens}
    Stathis Maneas
    \affil{University of Toronto}
    Christos Patsonakis
    \affil{University of Athens}
    Alex Delis
    \affil{University of Athens}
    Aggelos Kiayias
    \affil{University of Edinburgh}
    Mema Roussopoulos
    \affil{University of Athens}
}
\begin{abstract}
  E-voting systems are a powerful technology for improving democracy by reducing election cost, increasing voter participation, and even allowing voters to directly verify the entire election procedure.
Unfortunately, prior internet voting systems have single points of failure, which may result in the compromise of availability, voter secrecy, or integrity of the election results.

In this paper, we present the design, implementation, security analysis, and evaluation of the D-DEMOS suite of distributed, privacy-preserving, and end-to-end verifiable e-voting systems.
We present two systems: one completely asynchronous and one with minimal timing assumptions but better performance.
Our systems include a distributed vote collection subsystem that provides immediate assurance to the voter her vote was recorded as cast, without requiring cryptographic operations on behalf of the voter. 
We also include a distributed, replicated and fault-tolerant Bulletin Board component, that stores all necessary election-related information, and allows any party to read and verify the complete election process.
Finally, we also incorporate trustees, i.e., individuals who control election result production while guaranteeing privacy and end-to-end-verifiability as long as their strong majority is honest.
 
Our suite of e-voting systems are the first whose voting operation is human verifiable, i.e., a voter can vote over the web, even when her web client stack is potentially unsafe, without sacrificing her privacy, and still be assured her vote was recorded as cast.
Additionally, a voter can outsource election auditing to third parties, still without sacrificing privacy. 
Finally, as the number of auditors increases, the probability of election fraud going undetected is diminished exponentially.

We provide a model and security analysis of the systems.
We implement prototypes of the complete systems, we measure their performance experimentally, and we demonstrate their ability to handle large-scale elections. 
Finally, we demonstrate the performance trade-offs between the two versions of the system.
A preliminary version of our system was used to conduct exit-polls at three voting sites for two national-level elections and is being adopted for use by the largest civil union of workers in Greece, consisting of over a half million members.

\end{abstract}

%
%
\begin{CCSXML}
<ccs2012>
  <concept>
	<concept_id>10002978.10003006.10003013</concept_id>
	<concept_desc>Security and privacy~Distributed systems security</concept_desc>
	<concept_significance>500</concept_significance>
  </concept>
  <concept>
	<concept_id>10010405.10010476.10010936.10003590</concept_id>
	<concept_desc>Applied computing~Voting / election technologies</concept_desc>
	<concept_significance>500</concept_significance>
  </concept>
</ccs2012>
\end{CCSXML}

\ccsdesc[500]{Security and privacy~Distributed systems security}
\ccsdesc[500]{Applied computing~Voting / election technologies}
%
%

\keywords{Distributed systems, Fault tolerance}

\begin{bottomstuff}
This work is supported in part by ERC Starting Grants \#  279237 and \#  259152 funded
by the European Research Council, and the FINER Project funded by the General
Secretariat for Research and Technology ARISTEIA Program.

Author's addresses: N. Chondros, T. Zacharias, P. Diamantopoulos, C. Patsonakis, A. Delis, {and} M. Roussopoulos, 
Department of Informatics and Telecommunications, 
University of Athens,
Panepistiomiopolis, Ilisia, 157 84, Athens, Greece,
emails: \{n.chondros, thzacharias, panosd, c.patswnakis, ad, aggelos, mema\}@di.uoa.gr; 
B. Zhang,
School of Computing and Communications, Lancaster University, InfoLab21, Bailrigg, Lancaster LA1 4WA, UK,
email: b.zhang2@lancaster.ac.uk;
S. Maneas, Department of Computer Science, University of Toronto, 40 St. George Street, Toronto, ON, M5S2E4, Canada,
email: smaneas@cs.toronto.edu;
Aggelos Kiayias,
School of Informatics, University of Edinburgh, Office 5.16, 10 Crichton St., Edinburgh EH8 9AB, UK,
email: Aggelos.Kiayias@ed.ac.uk.
\end{bottomstuff}

\maketitle

\section{Introduction}
E-voting systems are a powerful technology to improve the election process. 
Kiosk-based e-voting systems, e.g., \cite{chaum2001surevote,chaum-esorics-2005,fisher-wote-2006,chaum2008scantegrity,benaloh2013starvote,culnane2014peered}, allow the tally to be produced faster, but  require the voter's physical presence at the booth. 
Internet e-voting systems, e.g., \cite{CGS-eurocrypt-1997,adida-helios-2008,clarkson2008civitas,kutylowski2010SCV,gjosteen2013norway,zagorski2013remotegrity,chaum2001surevote,chaum2008scantegrity,zagorski2013remotegrity,DEMOS}, however, allow voters to cast their votes remotely. 
Internet voting systems have the potential to enhance the democratic process by reducing election costs and by increasing voter participation for social groups that face considerable physical barriers and overseas voters. 
In addition, several internet voting systems~\cite{adida-helios-2008,kutylowski2010SCV,zagorski2013remotegrity,DEMOS} allow voters and auditors to directly verify the integrity of the entire election process, providing \emph{end-to-end verifiability}.
This is a highly desired property that has emerged in the last decade, where voters can be assured that no entities, even the election authorities, have manipulated  the election result.
Despite their potential, existing internet voting systems suffer from single points of failure, which may result in the compromise of voter secrecy, service availability, or integrity of the result~\cite{chaum2001surevote,chaum-esorics-2005,fisher-wote-2006,chaum2008scantegrity,benaloh2013starvote,CGS-eurocrypt-1997,adida-helios-2008,clarkson2008civitas,kutylowski2010SCV,gjosteen2013norway,zagorski2013remotegrity,DEMOS}.

In this paper, we present the design and prototype implementation of the \emph{D-DEMOS} suite of distributed, end-to-end verifiable internet voting systems, with no single point of failure during the election process (that is, besides setup). 
We set out to overcome two major limitations in existing internet voting systems. 
The first, is their dependency on centralized components. 
The second is their requirement for the voter to run special software on their devices, which processes cryptographic operations. 
Overcoming the latter allows votes to be cast with a greater variety of client devices, such as feature phones using SMS, or untrusted public web terminals.
Our design is inspired by the novel approach proposed in~\cite{DEMOS}, where the voters are used as a source of randomness to challenge the zero-knowledge proof protocols~\cite{feige1988zero}. 
We use the latter to enable end-to-end verifiability.

We design a distributed \emph{Vote Collection} (\VC{}) subsystem that is able to collect votes from voters and assure them their vote was recorded as cast, without requiring any cryptographic operation from the client device. 
This allows voters to vote via SMS, a simple console client over a telnet session, or a public web terminal, while preserving their privacy.
At election end time, \VC{} nodes agree on a single set of votes.
We introduce two versions of D-DEMOS that differ in how they achieve agreement on the set of cast votes.
The D-DEMOS/Async version is completely asynchronous, while D-DEMOS/IC makes minimal synchrony assumptions but is more efficient than the alternative.
Once agreement has been achieved, \VC{} nodes upload the set of cast votes to a second distributed component, the \emph{Bulletin Board} (\BB{}). 
This is a replicated service that publishes its data immediately and makes it available to the public forever.
Finally, our \trustees{} subsystem, comprises a set of persons entrusted with secret keys which can unlock information stored in the \BB{}. 
We share these secret keys among the \trustees{}, making sure only an honest majority can uncover information from the \BB{}.
\Trustees{} interact with the \BB{} once the votes are uploaded to the latter, to produce and publish the final election tally.

The resulting voting systems are end-to-end verifiable, by the voters themselves and third-party auditors, while preserving voter privacy. 
To delegate auditing, a voter provides an auditor specific information from her ballot. 
The auditor, in turn, reads from the distributed \BB{} and verifies the complete election process, including the correctness of the election setup by election authorities. 
Additionally, as the number of auditors increases, the probability of election fraud going undetected diminishes exponentially.

Finally, we implement prototypes of both D-DEMOS voting system versions. 
We measure their performance experimentally, under a variety of election settings, demonstrating their ability to handle thousands of concurrent connections, and thus manage large-scale elections.
We also compare the two systems and emphasize the trade-offs between them, regarding security and performance.

To summarize, we make the following contributions:
\begin{itemize}
 \item We present the world's first suite of state-of-the-art, end-to-end verifiable, distributed voting systems with no single point of failure besides setup. 
 \item Both systems allow voters to verify their vote was tallied-as-intended without the assistance of special software or trusted devices, and allow external auditors to verify the correctness of the election process. 
 Additionally, both systems allow voters to delegate auditing to a third party auditor, without sacrificing their privacy.
 \item We provide a model and a security analysis of D-DEMOS/IC.
\ifextended
\else
    Due to lack of space, we omit the corresponding sections of D-DEMOS/Async, and refer the interested reader to the extended version of this paper~\cite{extended}.
\fi
 \item We implement prototypes of the systems, measure their performance and demonstrate their ability to handle large-scale elections. Finally, we demonstrate the performance trade-offs between the two versions of the system.
\end{itemize}
Note that, a preliminary version of one of our systems was used to conduct exit-polls at three voting sites for two national-level elections and is being adopted for use by the largest civil union of workers in Greece, consisting of over a half million members.

\ifextended
The remainder of this paper is organized as follows. 
Section~\ref{section:background} introduces required background knowledge we reference throughout the paper, while Section~\ref{section:related} presents related work. 
Section~\ref{section:system} gives an overview of the system components, defines the system and threat model, and describes each system component in detail.
Section~\ref{section:attacks} goes over some interesting attack vectors, which help to clarify our design choices.
Section~\ref{section:impl_and_eval} describes our prototype implementations and their evaluation, and Section~\ref{section:conclusion} concludes the main body of the paper.
Finally, Appendix~\ref{sec:security_full} provides, for the interested reader, the full proofs of liveness, safety, privacy and end-to-end verifiability of both our systems.
\fi

\section{Background}\label{section:background}
\ifextended
In this section we provide basic background knowledge required to comprehend the system description in the next
section. This includes some voting systems terminology, a quick overview of Interactive Consistency, and a series of cryptographic tools we use to design our systems.
These tools include additively homomorphic commitment schemes and zero-knowledge proofs, which are used in the System Description (Section~\ref{section:system}), and are needed to understand the system design. 
Additionally, we provide details about collision resistant hash functions, IND-CPA symmetric encryption schemes, and digital signatures, which we use as building blocks for our security proofs in Appendix~\ref{sec:security_full}.
\fi

\ifextended
\subsection{Voting Systems requirements}\label{section:voteback}
An ideal electronic voting system would address a specific list of requirements (see \cite{neumann-ncsc-1993,IPI-2001} for an extensive description).
Our system addresses the following requirements:
\begin{itemize}
\item\textbf{End-to-end verifiability:} the voters can verify that their votes were counted as they intended and any party can verify that the election procedure was executed correctly.
\item\textbf{Privacy}: a party that does not monitor voters during the voting phase of the election, cannot extract information about the voters' ballots. In addition, a voter cannot prove how she voted to any party that did not monitor her during the voting phase of the election\footnote{In~\cite{DEMOS}, this property is referred as \emph{receipt-freeness}.}.
\item\textbf{Fault tolerance:} the voting system should be resilient to the faulty behavior of up to a number of components or parts, and be both live and safe.
\end{itemize}
\fi

\subsection{Interactive Consistency}
\label{section:ic}
Interactive consistency (IC), first introduced and studied by Pease et al.~\cite{lamport.ic},
is the problem in which $n$ nodes, where up to $t$ may be byzantine, each with its own private value, run an algorithm that allows all non-faulty nodes to infer the values of each other.
In our D-DEMOS/IC system, we use the \emph{IC,BC-RBB} algorithm from~\cite{IC@ICPADS2015}, which achieves IC using a single synchronous round. 
This algorithm uses two phases to complete. 
The synchronous \emph{Value Dissemination Phase} comes first, aiming to disperse the values across nodes.
Consequently, an asynchronous \emph{Result Consensus Phase} starts, which results in each honest node holding a vector with every honest node's slot filled with the corresponding value.

\subsection{Cryptographic tools}\label{subsec:sec_tools}

\subsubsection{Additively homomorphic commitments}\label{subsubsec:commit}
To achieve integrity against a malicious election authority, our D-DEMOS utilizes lifted ElGamal~\cite{elgamal-crypto-1985} over elliptic curves as
a \emph{non-interactive commitment scheme} that achives the following properties:
\begin{enumerate}
 \item \emph{Perfectly binding}: no adversary can open a commitment $\Com(m)$ of $m$ to a value other than $m$.
 \item \emph{Hiding}: there exists a constant $c<1$ s.t. the probability that a commitment $\Com(m)$ to $m$ leaks information about $m$ to an adversary running in $O(2^{\lambda^c})$ steps is no more than $\mathsf{negl}(\lambda)$.
 \item \emph{Additively homomorphic}: $\forall m_1,m_2$, we have that $\Com(m_1) \cdot \Com(m_2) = \Com(m_1+m_2)\enspace.$
\end{enumerate}

\subsubsection{Zero-knowledge Proofs}\label{subsubsec:zkback}
D-DEMOS's security requires the election authority to show the correctness of the election setup to the public without compromising privacy. We enable this kind of verification with the use of zero-knowledge proofs. 
In a zero-knowledge proof, the prover is trying to convince the verifier that a statement is true, without revealing any information about the statement apart from the fact that it is true~\cite{quisquater1990explain}. More specifically, we say an interactive proof system has the \emph{honest-verifier zero-knowledge (HVZK)} property if there exists a probabilistic polynomial time simulator $\mathcal{S}$ that , for any given challenge, can output an accepting proof transcript that is distributed indistinguishable to the real transcript between an honest prover and an honest verifier. Here, we adopt  Chaum-Pedersen  zero-knowledge proofs~\cite{CP}, which belong in the special class of $\Sigma$ protocols (i.e., 3-move public-coin special HVZK proofs), allowing the Election Authority to show that the content inside each commitment is a valid option encoding. 

\subsubsection{Collision resistant hash functions}\label{subsubsec:crhf}
 Given the security parameter $\lambda\in\mathbb{N}$, we say that a hash function $h:\{0,1\}^{*}\mapsto \{0,1\}^{\ell(\lambda)}$, where $\ell(\lambda)$  is polynomial in $\lambda$, is $(t,\epsilon)$-\emph{collision resistant} if for every adversary $\A$ running in time at most $t$, the probability of $\A$ finding two distinct preimages $m_1\neq m_2$ such that $h(m_1) = h(m_2)$ is less than $\epsilon$. By the birthday attack, in order for $h$ to be $(t,\epsilon)$-collision resistant, we necessitate that $t^2/2^{\ell(\lambda)}<\epsilon$. In this work, we use SHA-256 as the instantiation of a $(t,t^2\cdot2^{-256})$-collision resistant hash function.
 %
\subsubsection{IND-CPA symmetric encryption schemes}\label{subsubsec:aes}
We say that a symmetric encryption scheme $\mc{SE}$ achieves $(t,q,\epsilon)$-\emph{indistinguishability against chosen plaintext attacks (IND-CPA)}, if for every adversary $\A$ that (i) runs in time at most $t$, (ii) makes at most $q$ encryption queries that are pairs of messages $(m_{0,1},m_{1,1}),\ldots,(m_{0,q},m_{1,q})$ and (iii) for every encryption query $(m_{0,i},m_{1,i})$, it receives the encryption of $m_{b,i}$, where $b$ is the outcome of a coin-flip, it holds that
\begin{equation*}
 \begin{split}
  &\mathbf{Adv}_\mc{SE}^{\mathsf{IND-CPA}}(\A):=\big|\Pr[\A\mbox{ outputs }1\mid b=1]-Pr[\A\mbox{ outputs }1\mid b=0]\big|<\epsilon\hspace{2pt},
 \end{split}
\end{equation*}
where by $\mathbf{Adv}_\mc{SE}^{\mathsf{IND-CPA}}(\A)$ we denote the \emph{advantage} of $\A$. 
D-DEMOS applies AES-128-CBC\$ encryption, for which a known
safe conjecture is that $\mathbf{Adv}_\mathsf{AES-128}^{\mathsf{PRF}}(\mathcal{B})\leq (t+129\cdot q+q^2)\cdot2^{-128}$,
so in our proofs we assume that AES-128-CBC\$ is $(t,q,\allowbreak(2t+258\cdot q+3q^2)\cdot2^{-128})$-IND-CPA secure. For further details, we refer the reader to~\cite[Chapters 3 \& 4]{bellare-rogaway-notes}.
\ifextended
\subsubsection{Digital Signature Schemes}\label{subsubsec:ds}
A digital signature system is said to be secure if it is \emph{existentially unforgeable
under a chosen-message attack (EUF-CMA)}. Roughly speaking, this means that an adversary running in  polynomial time and adaptively querying signatures for (polynomially many) messages has no more than $\mathsf{negl}(\lambda)$ probability to forge a valid signature for a new message. D-DEMOS/Async utilizes the standard the RSA signature scheme, which is EUF-CMA secure under the factoring assumption.
\fi
\section{Related work}\label{section:related}
\subsection{Voting systems}
Several end-to-end verifiable e-voting systems have been introduced, e.g. the kiosk-based systems~\cite{chaum-esorics-2005,fisher-wote-2006,chaum2008scantegrity,benaloh2013starvote,Moran:2010:SVE:1698750.1698756} and the internet voting systems~\cite{adida-helios-2008,kutylowski2010SCV,zagorski2013remotegrity,DEMOS}. 
In all these works, the Bulletin Board (\BB{}) is a single point of failure and has to be trusted.

Dini presents a distributed e-voting system, which however is not end-to-end verifiable~\cite{dini2003secure}.  
In~\cite{culnane2014peered}, there is a distributed \BB{} implementation, also handling vote collection, according to the design of the vVote end-to-end verifiable e-voting system~\cite{culnane2014vVote}, which in turn is an adaptation of the Pr{\^{e}}t {\`{a}} Voter e-voting system~\cite{chaum-esorics-2005}.
In~\cite{culnane2014peered}, the proper operation of the \BB{} during ballot casting requires a trusted device for signature verification. 
In contrast, our vote collection subsystem is done so that correct execution of ballot casting can be ``human verifiable'', i.e., by simply checking the validity of the obtained receipt. 
Additionally, our vote collection subsystem in D-DEMOS/Async is fully asynchronous, always deciding with exactly $n-f$ inputs, while in~\cite{culnane2014peered}, the system uses a synchronous approach based on the FloodSet algorithm from~\cite{Lynch:1996:DA} to agree on a single version of the state.

DEMOS~\cite{DEMOS} is an end-to-end verifiable e-voting system, which introduces the novel idea 
of extracting the challenge of the zero-knowledge proof protocols from the voters' random choices; we leverage this idea in our system too.
However, DEMOS uses a centralized Election Authority (EA), which maintains all secrets throughout the entire election procedure, collects votes, produces the result and commits to verification data in the \BB{}.
Hence, the EA is a single point of failure, and because it knows the voters' votes, it is also a critical privacy vulnerability.
In this work, we address these issues by introducing distributed components for vote collection and result tabulation, and we do not assume any trusted component during election.  
Additionally, DEMOS does not provide any recorded-as-cast feedback to the voter, whereas our system includes such a mechanism.

Besides, DEMOS encodes the $i$-th option to $N^{i-1}$, where $N$ is greater than the total number of voters, and this option encoding has to fit in the message space of commitments. 
Therefore, the size of the underlying elliptic curve grows linearly with the number of options, which makes DEMOS not scalable with respect to the number of options. 
In this work, we overcome this problem by using a different scheme for option encoding commitments. 
Moreover, the zero-knowledge proofs in DEMOS have a big soundness error, and it decreases the effectiveness of zero-knowledge application; whereas, in our work, we obtain nearly optimal overall zero-knowledge soundness.

Furthermore, none of the above works provide any performance evaluation results.
Finally,~\cite{Appel:2011:SSV:2019599.2019603} outlines the difficulties in managing seals for kiosks and ballot boxes, supporting our position towards the use of internet voting. 

\subsection{State Machine Replication}
Castro et al.~\cite{castro2002practical} introduce a practical Byzantine Fault Tolerant replicated state machine protocol. 
In the last several years, several protocols for Byzantine Fault Tolerant state machine replication have been introduced to improve performance (\cite{cowling2006hq,kotla2007zyzzyva}), robustness (\cite{aublin2013rbft,clement2009making}), or both (\cite{clement2009upright,next700tocs}). 
Our system does not use the state machine replication approach to handle vote collection, as it would be inevitably more costly. 
Each of our vote collection nodes can validate a voter's requests on its own. 
In addition, we are able to process multiple different voters' requests concurrently, without enforcing the total ordering inherent in replicated state machines. 
Finally, we do not wish voters to use special client-side software to access our system.

\section{System description}\label{section:system}
\subsection{Problem Definition and Goals}
We consider an \emph{election} with a single \emph{question} and $m$ \emph{options}, for a voter population of size $n$, where voting takes place between a certain \emph{begin} and \emph{end} time (the \emph{voting hours}), and each voter may select a single \emph{option}.

Our major goals in designing our voting system are three. 
\begin{inparaenum}[1)]
\item It has to be end-to-end verifiable, so that anyone can verify the complete election process. 
Additionally, voters should be able to outsource auditing to third parties, without revealing their voting choice.
\item It has to be fault-tolerant, so that an attack on system availability and correctness is hard. 
\item Voters should not have to trust the terminals they use to vote, as such devices may be malicious. Instead, voters should be assured their vote was recorded, without disclosing any information on how they voted to the malicious entity controlling their device.
\end{inparaenum}

%
%

\subsection{System overview}
\label{sec:sysoverview}
We employ an election setup component in our system, which we call the Election Authority (\EA{}), to alleviate the voter from employing any cryptographic operations. 
The \EA{} initializes all other system components, and then gets immediately destroyed to preserve privacy.
The \emph{Vote Collection} (\VC{}) subsystem collects the votes from the voters during election hours, and assures them their vote was \emph{recorded-as-cast}.
Our \emph{Bulletin Board} (\BB{}) subsystem, which is a public repository of all election-related information, is used to hold all ballots, votes, and the result, either in encrypted or plain form, allowing any party to read from the \BB{} and verify the complete election process.
The \VC{} subsystem uploads all votes to the \BB{} at election end time.
Finally, our design includes \trustees{}, who are persons entrusted with managing all actions needed until result tabulation and publication, including all actions supporting end-to-end verifiability.
\Trustees{} hold the keys to uncover any information hidden in the \BB{}, and we use threshold cryptography to make sure a malicious minority cannot uncover any secrets or corrupt the process.

\ifextended
\else
We outline the interactions between these subsystems and the actors in Figure~\ref{figure:ddemos}. In the following paragraphs, we explain these interactions in more detail.

\begin{figure}[ht!]
  \centering
  \includegraphics[width=0.95\textwidth]{ddemos}
  \caption{High-level diagram of interactions between subsystems and actors. Subsystems are distributed systems of their own, but are depicted as a unified entity in this diagram.}
  \label{figure:ddemos}
\end{figure}
\fi



Our system starts with the \EA{} generating initialization data for every component of our system. 
The \EA{} encodes each election option, and \emph{commits} to it using a commitment scheme, as described below.
It encodes the $i$-th option as $\vec{e}_i$, a unit vector where the $i$-th element is $1$ and the remaining elements are $0$. 
The commitment of an option encoding is a vector of (lifted) ElGamal ciphertexts~\cite{ElGamal} over elliptic curve, that element-wise encrypts a unit vector. Note that this commitment scheme is also additively homomorphic, i.e., the commitment of $e_a + e_b$ can be computed by component-wise multiplying the corresponding commitments of $e_a$ and $e_b$.
The \EA{} then creates a $\mathsf{vote code}$ and a $\mathsf{receipt}$ for each option.
Subsequently, the \EA{} prepares one ballot for each voter, with two functionally equivalent parts. 
Each part contains a list of options, along with their corresponding vote codes and receipts.
We consider ballot distribution to be outside the scope of this paper, but we do assume ballots, after being produced by the \EA{}, are distributed in a secure manner to each voter; thus only each voter knows the vote codes listed in her ballot.
We make sure vote codes are not stored in clear form anywhere besides the voter's ballot.
\ifextended
We depict this interaction in Figure~\ref{figure:ddemos-ea}.

\begin{figure}
  \centering
  \includegraphics[width=0.95\textwidth]{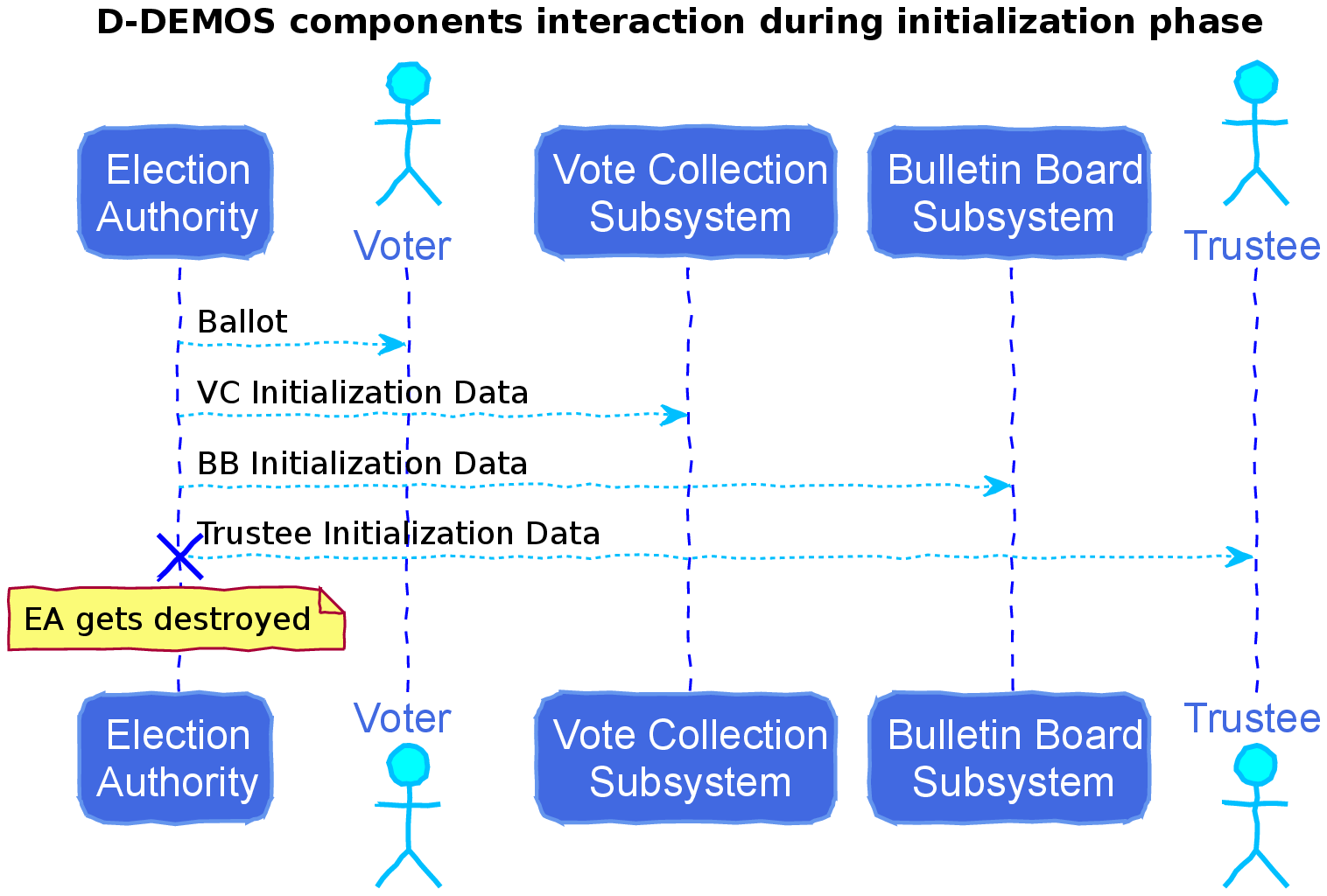}
  \caption{High-level diagram of component interactions during system initialization. Each subsystem is a distributed system of its own, but is depicted as a unified entity in this diagram for brevity.}
  \label{figure:ddemos-ea}
\end{figure}

\begin{figure}
  \centering
  \includegraphics[width=0.95\textwidth]{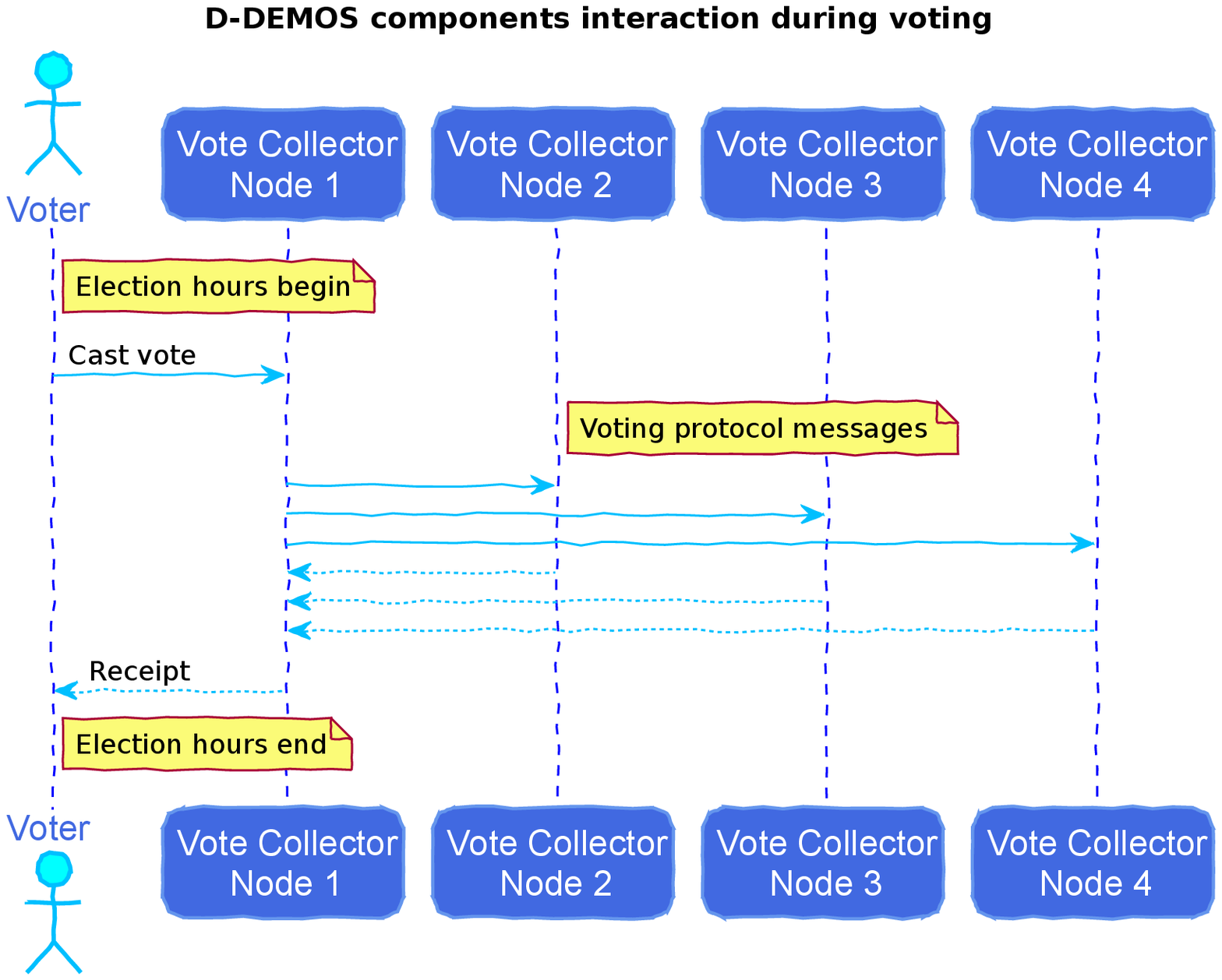}
  \caption{High-level diagram of component interactions during the voting phase. Message exchanges between \VC{} nodes are simplified for this diagram. In this diagram, there are $4$ \VC{} nodes, tolerating up to $1$ fault.}
  \label{figure:ddemos-voting}
\end{figure}
\fi

Our \VC{} subsystem collects the votes from the voters during election hours, by accepting up to one vote code from each voter
\ifextended
(see Figure~\ref{figure:ddemos-voting}). 
\else
.
\fi
The \EA{} initializes each \VC{} node with the vote codes and the receipts of the voters' ballots. 
However, it hides the vote codes, using a simple commitment scheme based on symmetric encryption of the plaintext along with a random salt value.
This way, each \VC{} node can verify if a vote code is indeed part of a specific ballot, but cannot recover any vote code until the voter actually chooses to disclose it. 
Additionally, we secret-share each receipt across all \VC{} nodes using an $(N-f, N)$-VSS (verifiable secret-sharing) scheme with trusted dealer~\cite{schneier1996applied}, making sure that a receipt can be recovered and posted back to the voter only when a strong majority of \VC{} nodes participates successfully in our voting protocol.
With this design, our system adheres to the following contract with the voters: \emph{Any honest voter who receives a valid receipt from a Vote Collector node, is assured her vote will be published on the \BB{}, and thus it will be included in the election tally}.

The voter selects one part of her ballot at random, and posts her selected vote code to one of the \VC{} nodes. 
When she receives a receipt, she compares it with the one on her ballot corresponding to the selected vote code.
If it matches, she is assured her vote was correctly recorded and will be included in the election tally.
The other part of her ballot, the one not used for voting, will be used for auditing purposes.
This design is essential for verifiability, in the sense that the \EA{} cannot predict which part a voter may use, and the unused part will betray a malicious \EA{} with $1 \over 2$ probability per audited ballot.

Our second distributed subsystem is the \BB{}, which is a replicated service of isolated nodes.
Each \BB{} node is initialized from the \EA{} with vote codes and associated option encodings in committed form (again, for vote code secrecy), and each \BB{} node provides public access to its stored information.
At election end time, \VC{} nodes run our Vote Set Consensus protocol, which guarantees all \VC{} nodes agree on a single set of voted vote codes.
After agreement, each \VC{} node uploads this set to every \BB{} node, which in turn publishes this set once it receives the same copy from enough \VC{} nodes
\ifextended
(see Figure~\ref{figure:ddemos-vcs}).

\begin{figure}
  \centering
  \includegraphics[width=0.95\textwidth]{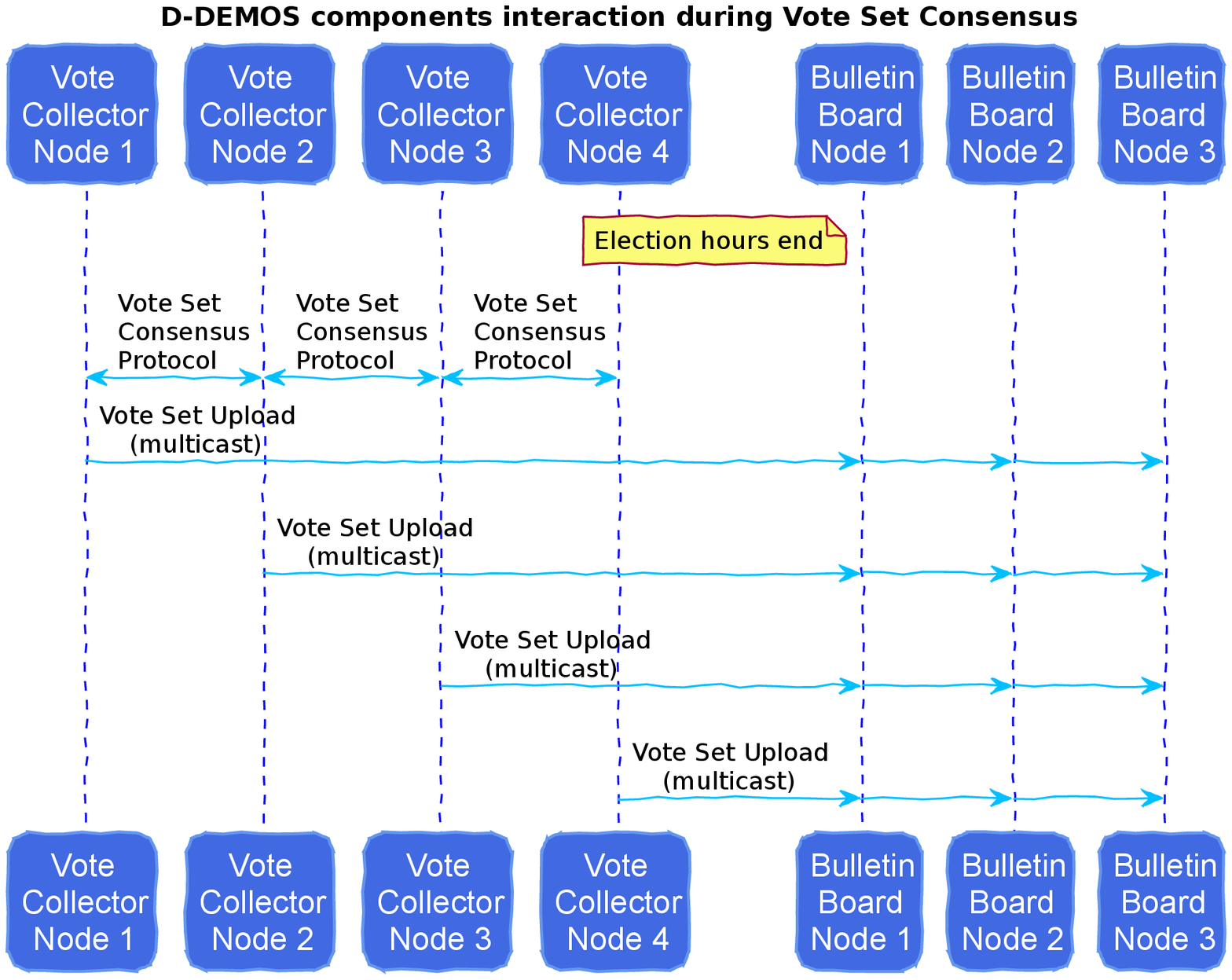}
  \caption{High-level diagram of component interactions during the vote set consensus phase. $4$ \VC{} nodes and $3$ \BB{} nodes are shown, where each subsystem tolerates $1$ fault. ``VSC'' stands for ``Vote Set Consensus''. After agreeing on a single Vote Set $S$, each \VC{} node uploads $S$ to every \BB{} node. Messages are simplified for this diagram. }
  \label{figure:ddemos-vcs}
\end{figure}
\else
.
\fi

Our third distributed subsystem is a set of \trustees{}, who are persons entrusted with managing all actions needed after vote collection, until result tabulation and publication; this includes all actions supporting end-to-end verifiability.
Secrets that may uncover information in the \BB{} are shared across \trustees{}, making sure malicious \trustees{} under a certain threshold cannot uncover and disclose sensitive information.
We use Pedersen's Verifiable linear Secret Sharing (VSS)~\cite{vss} to split the election data among the \trustees{}. 
In a $(k,n)$-VSS, at least $k$ shares are required to reconstruct the original data, and any collection of less than $k$ shares leaks no information about the original data. 
Moreover, Pedersen's VSS is  additively homomorphic, i.e., one can compute the share of $a+b$ by adding the share of $a$ and the share of $b$ respectively.
This approach allows \trustees{} to perform homomorphic ``addition'' on the option-encodings of cast vote codes, and contribute back a share of the opening of the homomorphic ``total''. 
Once enough \trustees{} upload their shares of the ``total'', the election tally is uncovered and published at each \BB{} node
\ifextended
(see Figure~\ref{figure:ddemos-trustees}).

\begin{figure}
  \centering
  \includegraphics[height=8cm]{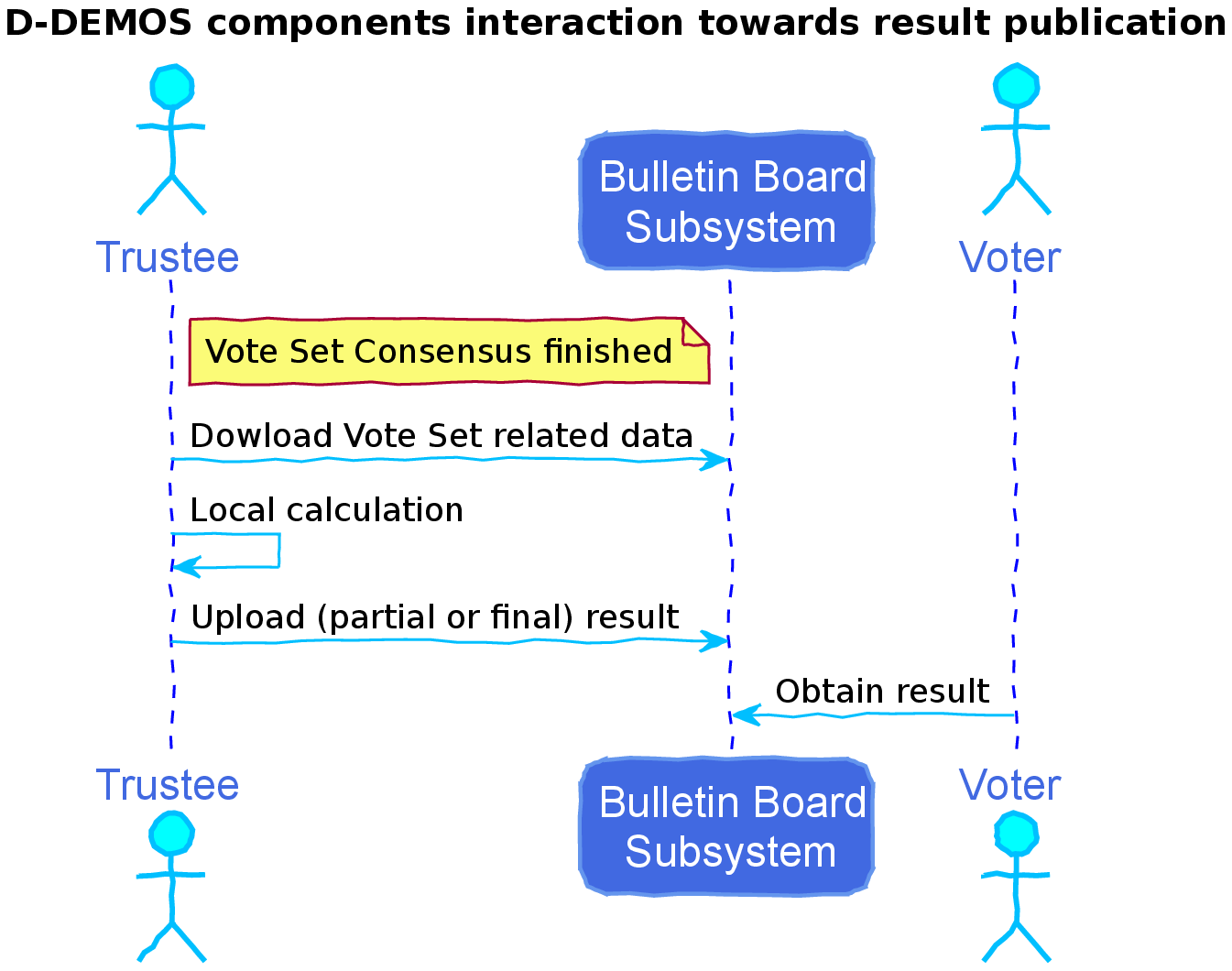}
  \caption{High-level diagram of \trustee{} interactions with the \BB{}, towards result tabulation and publication. \Trustees{} are more than one, and interact with the \BB{} in any order. The \BB{} is a distributed system of its own, but is depicted as a unified entity in this diagram for brevity.}
  \label{figure:ddemos-trustees}
\end{figure}
\else
.
\fi

To ensure voter privacy, the system cannot reveal the content inside an option encoding commitment at any point. 
However, a malicious \EA{} might put an arbitrary value (say $9000$ votes for option $1$) inside such a commitment, causing an incorrect tally result. 
To prevent this, we utilize the Chaum-Pedersen  zero-knowledge proof~\cite{CP}, allowing the \EA{} to show that the content inside each commitment is a valid option encoding, without revealing its actual content.
Namely, the prover uses Sigma OR proof to show that each ElGamal ciphertext encrypts either $0$ or $1$, and the sum of all elements in a vector is $1$.
Our zero knowledge proof is organized as follows.
First, the \EA{} posts the initial part of the proofs on the \BB{}.
Second, during the election, each voter's A/B part choice is viewed as a source of randomness, $0/1$, and all the voters' choices are collected and used as the challenge of our zero knowledge proof.
Finally, the \trustees{} will jointly produce the final part of the proofs and post it on the \BB{} before the opening of the tally. 
Hence, everyone can verify those proofs on the \BB{}. 
We omit the zero-knowledge proof components in this paper and refer the interested reader to~\cite{CP} for details.

\ifextended
\begin{figure}
  \centering
  \includegraphics[height=7cm]{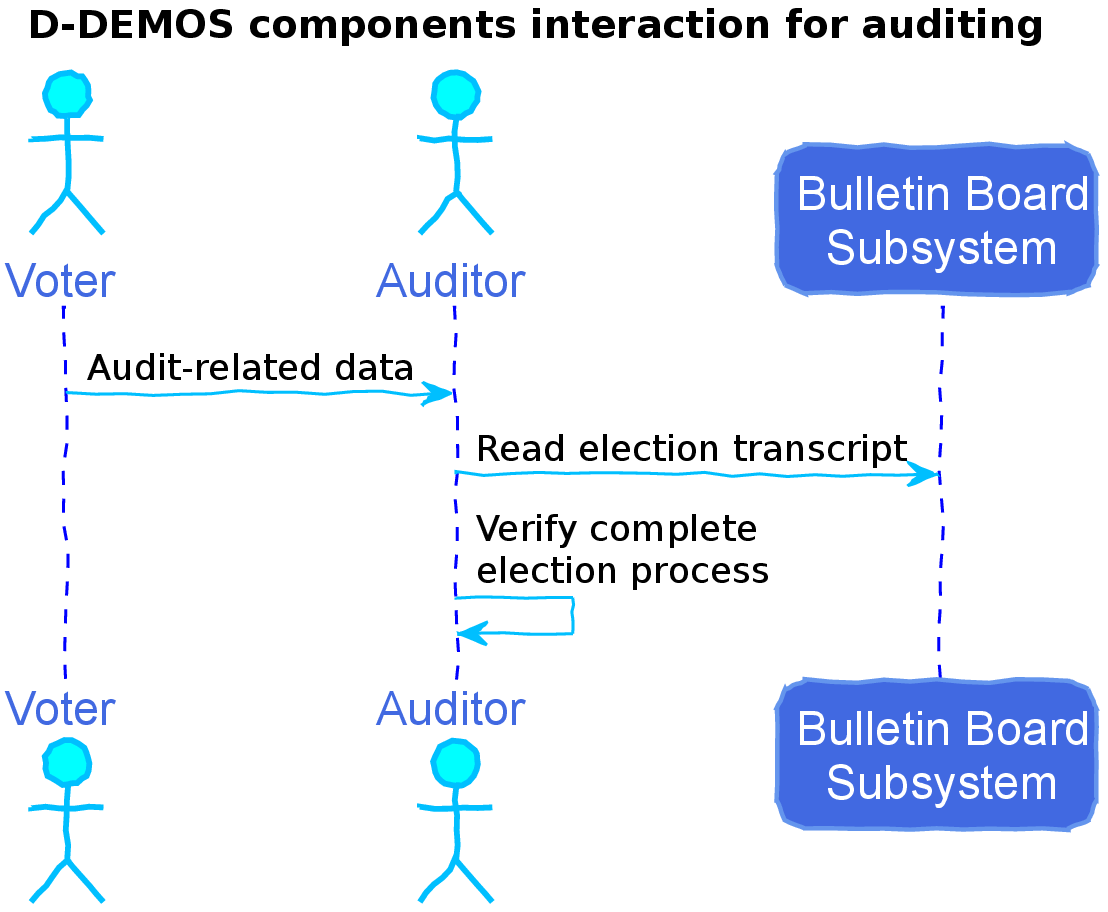}
  \caption{High-level diagram of the system auditing. Voters send Auditors audit-related data that does not violate the voter's privacy. Auditors in turn read from the \BB{} and verify the complete election process. The \BB{} is a distributed system of its own, but is depicted as a unified entity in this diagram for brevity.}
  \label{figure:ddemos-audit}
\end{figure}
\fi

Our design allows any voter to read information from the \BB{}, combine it with her private ballot, and verify her ballot was included in the tally.
Additionally, any third-party auditor can read the \BB{} and verify the complete election process
\ifextended
(see Figure~\ref{figure:ddemos-audit}).
\else
.
\fi
As the number of auditors increases, the probability of election fraud going undetected diminishes exponentially.
For example, even if only $10$ people audit, with each one having $1 \over 2$ probability of detecting ballot fraud, the probability of ballot fraud going undetected is only ${1 \over 2}^{10} = 0.00097$.
Thus, even if the \EA{} is malicious and, e.g.,  tries to point all vote codes to a specific option, this faulty setup will be detected because of the end-to-end verifiability of the complete system.

In this paper, we present two different versions of our voting system, with different performance and security trade-offs.
In the first version, called \emph{D-DEMOS/IC}, Vote Set Consensus is realized by an algorithm achieving Interactive Consistency, and thus requiring synchronization. 
The second version, \emph{D-DEMOS/Async}, uses an asynchronous binary consensus algorithm for Vote Set Consensus, and thus is completely asynchronous. The performance trade-offs between the two are analyzed in Section~\ref{sect:evaluation}.
\subsection{System and Threat Model}
\label{subsec:threat}
We assume a fully connected network, where each node can reach any other node with which it needs to communicate. 
The network can drop, delay, duplicate, or deliver messages out of order. 
However, we assume messages are eventually delivered, provided the sender keeps retransmitting them.
For all nodes, we make no assumptions regarding processor speeds.

We assume the EA sets up the election and is destroyed upon completion of the setup, as it does not directly interact with the remaining components of the system, thus reducing the attack surface of the privacy of the voting system as a whole.
We also assume initialization data for every system component is relayed to it via untappable channels.
We assume the adversary does not have the computational power to violate the security of any underlying cryptographic primitives. 
We place no bound on the number of faulty nodes the adversary can coordinate, as long as the number of malicious nodes of each
subsystem is below its corresponding fault threshold. 
Let $N_v$, $N_b$, and $N_t$ be the number of VC nodes, BB nodes, and trustees respectively. 
The voters are denoted by $V_\ell$, $\ell=1,\ldots,n$. 

For both versions of our system, we assume the clocks of VC nodes are synchronized with real world time; this is needed to prohibit voters from casting votes outside election hours. For the safety of \emph{D-DEMOS/Async} version, we make no further timing assumptions.
To ensure liveness, we assume the adversary cannot delay communication between honest nodes above a certain threshold.

For the \emph{D-DEMOS/IC} version, we use the \emph{IC,BC-RBB} algorithm achieving Interactive Consistency (IC) from~\cite{IC@ICPADS2015}, which requires a single synchronization point after the beginning of the algorithm. To accommodate this, we use the election-end time as the starting point of IC, and additionally assume the adversary cannot cause clock drifts between VC nodes also for safety, besides liveness. This is because lost messages in the first round of~\emph{IC,BC-RBB} are considered failures of the sending node.

Formally, we assume that there exists a \emph{global clock} variable $\Cl\in\mathbb{N}$, and that every VC node, BB node and voter $X$ is equipped with an \emph{internal clock} variable $\Cl[X]\in\mathbb{N}$. 
We define the following two events on the clocks:  
\begin{enumerate}[(i).]
 \item The event $\mathsf{Init}(X):$ $\mathsf{Clock}[X]\leftarrow\mathsf{Clock}$, that initializes a node $X$ by synchronizing its internal clock with the global clock. 
 \item The event $\mathsf{Inc}(i):$ $i\leftarrow i+1$, that causes some clock $i$ to advance by one time unit.
\end{enumerate}
\par The adversarial setting for $\A$ upon D-DEMOS is defined in Figure~\ref{fig:threat}. 
\begin{boxfig}{\label{fig:threat} The adversarial setting for the adversary $\A$ acting upon the distributed bulletin board system.}{}
  {\it \underline{The adversarial setting.}}
\begin{enumerate}[(1)]
 \item The EA  initializes every VC node, BB node, trustee of the D-DEMOS system by running $\mathsf{Init}(\cdot)$ in all clocks for synchronization. Then, EA prepares the voters' ballots and all the VC nodes', BB nodes', and trustees' initialization data. Finally, it forwards the ballots for ballot distribution to the voters $V_\ell$, $\ell=1,\ldots,n$.
 \item $\A$ corrupts a fixed subset of VC nodes, a fixed subset of BB nodes, and a fixed subset of trustees. In addition, it defines a fixed subset of corrupt voters $\mathcal{V}_\mathsf{corr}$. 
\item When an honest node $X$ wants to transmit a message $\mathbf{M}$ to an honest node $Y$, then it just sends $(X,\mathbf{M},Y)$ to $\A$.
 \item $\A$ may arbitrarily invoke the events $\mathsf{Inc}(\mathsf{Clock})$ or $\mathsf{Inc}(\mathsf{Clock}[X])$, for any node $X$. Moreover, $\A$ may write on the incoming network tape of any honest component node of D-DEMOS.
 \item For every voter $V_\ell$:
 \begin{enumerate}
  \item If $V_\ell\in\mathcal{V}_\mathsf{corr}$, then $\A$ fully controls $V_\ell$.
  \item If $V_\ell\notin\mathcal{V}_\mathsf{corr}$, then $\A$ may initialize $V_\ell$ by running $\mathsf{Init}(V_\ell)$ only once. If this happens, then the only control of $\A$ over $V_\ell$ is $\mathsf{Inc}(\mathsf{Clock}[V_\ell])$ invocations. Upon initialization, $V_\ell$ engages in the voting protocol.
 \end{enumerate} 
\end{enumerate}
\end{boxfig}
\medskip
\par The description in Figure~\ref{fig:threat} poses no restrictions on the control the adversary has over all internal clocks, or the number of nodes that it may corrupt (arbitrary denial of service attacks or full corruption of D-DEMOS nodes are possible). Therefore, it is necessary to strengthen the model so that we can perform a meaningful security analysis and prove the properties (liveness, safety, end-to-end verifiability, and voter privacy) that D-DEMOS achieves. 
Namely, we require the following: \medskip
\begin{enumerate}[\textbf{A.}]
\item\label{req:tolerance}\textsc{Fault tolerance}. 
We consider arbitrary (Byzantine) failures, because we expect our system to be deployed across separate administrative domains. 
For each of the subsystems, we have the following fault tolerance thresholds:
\begin{itemize}
 \item 
The number of faulty VC nodes, $f_v$, is strictly less than $1/3$ of $N_v$
\ifextended
, i.e., for fixed $f_v$: $$\boxed{N_v\geq3f_v+1.}$$
\else
.
\fi
 \item 
The number of faulty BB nodes, $f_b$, is strictly less than $1/2$ of $N_b$
\ifextended
, i.e., for fixed $f_b$: $$\boxed{N_b\geq2f_b+1.}$$
\else
.
\fi
 \item 
For the trustees' subsystem, we apply $h_t$ out-of $N_t$ threshold secret sharing, where $h_t$ is the number of honest trustees, thus we tolerate $f_t=N_t-h_t$ malicious trustees.  
\end{itemize}
\end{enumerate}
%
%
%
%
\begin{enumerate}[\textbf{B.}]
\item\label{req:loss}{\textsc{Bounded synchronization loss.}}
For the liveness of D-DEMOS (both versions), all system participants are aware of a value $T_\mathsf{end}$ such that for each node $X$, if $\mathsf{Clock}[X]\geq T_\mathsf{end}$, then $X$ considers that the election has ended.  In addition, the safety of D-DEMOS/IC version, assumes two timing points, a starting point (that we set as $T_\mathsf{end}$) and a \emph{barrier}, denoted by $T_\mathsf{barrier}$, that determine the beginning of the \emph{Value Dissemination} phase and the transition to the \emph{Result Consensus} phase of the underlying Interactive Consistency protocol (see Section~\ref{section:ic}), respectively.
\par\hspace{4pt} For the above reasons, we bound the drift on the nodes' internal clocks, assuming an upper bound $\Delta$ of the drift of all honest nodes' internal clocks with respect to the global clock. Formally, we have that: $|\mathsf{Clock}[X]-\mathsf{Clock}|\leq\Delta$  for every node $X$, where $|\cdot|$ denotes the absolute value.
 \end{enumerate}
\begin{enumerate}[\textbf{C.}]
\item\label{req:delay}{\textsc{Bounded communication delay.}}
For the liveness of D-DEMOS (both versions) and the safety of D-DEMOS/IC, we need to ensure eventual message delivery in bounded time. Therefore, we assume that there exists an upper bound $\delta$ on the time that $\A$ can delay the delivery of the messages between honest nodes. Formally, when the honest node $X$ sends $(X,\mathbf{M},Y)$ to $\A$, if the value of the global clock is $T$, then $\A$ must write $\mathbf{M}$ on the incoming network tape of $Y$ by the time that  $\mathsf{Clock}=T+\delta$. We note that $\delta$ should be a reasonably small value for liveness, while for safety of D-DEMOS/IC it suffices to be dominated by the predetermined timeouts of the VC nodes.
 \end{enumerate}
\ifextended
\par\noindent For clarity, we recap the aforementioned requirements in Fig.~\ref{fig:requirements}.
 \begin{figure}[ht]
 \centering
 \begin{small}
 \begin{tabular}{|M{2.7cm}||M{2.1cm}|M{2.1cm}|M{2.1cm}|M{2.1cm}|}
 \hline
\multirow{2}{*}{\textbf{Requirement}}&\multicolumn{2}{c|}{\textbf{D-DEMOS/IC}}&\multicolumn{2}{c|}{\textbf{D-DEMOS/Async}}\\\cline{2-5}
&\small{\textbf{Liveness}}&\small{\textbf{Safety}}&\small{\textbf{Liveness}}&\small{\textbf{Safety}}\\
\hline\hline
Fault tolerance of the VC subsystem&\checkmark&\checkmark&\checkmark&\checkmark\\
\hline
Fault tolerance of the BB subsystem&&\checkmark&&\checkmark\\
\hline
Fault tolerance of the trustees' subsystem&&\checkmark&&\checkmark\\
\hline
Bounded synchronization loss&\checkmark&\checkmark&\checkmark&\\
\hline
Bounded communication delay&\checkmark&\checkmark&\checkmark&\\
\hline
 \end{tabular}
 \end{small}
\caption{Requirements for the liveness and safety of D-DEMOS/IC and D-DEMOS/Async.}
\label{fig:requirements}
 \end{figure}
\fi


%
\subsection{Election Authority}
\label{sect:EA}
\EA{} produces the initialization data for each election entity in the setup phase. 
To enhance the system robustness, we let the \EA{} generate all the public/private key pairs for all the system components (except voters) without relying on external PKI support.
We use zero knowledge proofs to ensure the correctness of all the initialization data produced by the \EA{}. 

\subsubsection{Voter Ballots} 
The \EA{} generates one ballot $\mathsf{ballot}_\ell$ for each voter $\ell$, and assigns a unique $64$-bit $\mathsf{serial\textrm{-}no}_\ell$ to it.
As shown below, each ballot consists of two parts: Part A and Part B. 
Each part contains a list of $m$ $\langle\mathsf{vote\textrm{-}code},\mathsf{option},\mathsf{receipt}\rangle$ tuples, one tuple for each election option.
The \EA{} generates the vote-code as a $128$-bit random number, unique within the ballot, and the receipt as $64$-bit random number.

\begin{center}
\ifextended
\else
\begin{scriptsize}
\fi
	\begin{tabular}{|l l l |}
		\hline
		$\mathsf{serial\textrm{-}no}_\ell$ &  &\\
		\hline
		& Part A &\\
		$\mathsf{vote\textrm{-}code}_{\ell,1}$ & $\mathsf{option}_{\ell,1}$ & $\mathsf{receipt}_{\ell,1}$\\
		$\quad\hdots$ & $\quad\hdots$ & $\quad\hdots$\\
		$\mathsf{vote\textrm{-}code}_{\ell,m}$ & $\mathsf{option}_{\ell,m}$ & $\mathsf{receipt}_{\ell,m}$\\
		\hline
		& Part B &\\
		$\mathsf{vote\textrm{-}code}_{\ell,1}$ & $\mathsf{option}_{\ell,1}$ & $\mathsf{receipt}_{\ell,1}$\\
		$\quad\hdots$ & $\quad\hdots$ & $\quad\hdots$\\
		$\mathsf{vote\textrm{-}code}_{\ell,m}$ & $\mathsf{option}_{\ell,m}$ & $\mathsf{receipt}_{\ell,m}$\\
		\hline
	\end{tabular}
\ifextended
\else
\end{scriptsize}
\fi
\end{center}

\subsubsection{\BB{} initialization data} 
The initialization data for all \BB{} nodes is identical, and each \BB{} node publishes its initialization data immediately. 
The \BB{}'s data is used to show the correspondence between the vote codes and their associated cryptographic payload.
This payload comprises the committed option encodings, and their respective zero knowledge proofs of valid encoding (first move of the prover), as described in section~\ref{sec:sysoverview}. 
However, the vote codes must be kept secret during the election, to prevent the adversary from ``stealing'' the voters' ballots and using the stolen vote codes to vote. 
To achieve this, the \EA{} first randomly picks a $128$-bit key, $\mathsf{msk}$, and encrypts each $\mathsf{vote\textrm{-}code}$ using AES-128-CBC with random initialization vector (AES-128-CBC\$) encryption, denoted as $[\mathsf{vote\textrm{-}code}]_{\mathsf{msk}}$.
Each \BB{} node is given $H_{\mathsf{msk}}\leftarrow SHA256(\mathsf{msk} , \mathsf{salt}_\mathsf{msk})$ and $\mathsf{salt}_\mathsf{msk}$, where $\mathsf{salt}_\mathsf{msk}$ is a fresh $64$-bit random salt. Hence, each \BB{} node can be assured the key it reconstructs from \VC{} key-shares (see below) is indeed the key that was used to encrypt these vote-codes. 

The rest of the \BB{} initialization data is as follows: for each $\mathsf{serial\textrm{-}no}_\ell$, and for each ballot part, there is a \emph{shuffled} list of $\left\langle[\mathsf{vote\textrm{-}code}_{\ell,\pi_\ell^X(j)}]_{\mathsf{msk}},\mathsf{payload}_{\ell,\pi_\ell^X(j)} \right\rangle$ tuples, where $\pi_\ell^X\in S_m$ is a random permutation ($X$ is $A$ or $B$).
\ifextended
\begin{center}
	\begin{tabular}{|l c|}
		\hline
		\hspace{55pt}$(H_{\mathsf{msk}}, \mathsf{salt}_\mathsf{msk})$  &\\
		\hline
		\hline
		$\mathsf{serial\textrm{-}no}_\ell$  &\\
		\hline
		& Part A \\
		$[\mathsf{vote\textrm{-}code}_{\ell,\pi_\ell^A(1)}]_{\mathsf{msk}}$ & $\mathsf{payload}_{\ell,\pi_\ell^A(1)}$  \\
		\quad\quad $\vdots$ & $\vdots$\\
		$[\mathsf{vote\textrm{-}code}_{\ell,\pi_\ell^A(m)}]_{\mathsf{msk}}$ & $\mathsf{payload}_{\ell,\pi_\ell^A(m)}$  \\
		\hline
		& Part B \\
		$[\mathsf{vote\textrm{-}code}_{\ell,\pi_\ell^B(1)}]_{\mathsf{msk}}$ & $\mathsf{payload}_{\ell,\pi_\ell^B(1)}$  \\
		\quad\quad $\vdots$ & $\vdots$\\
		$[\mathsf{vote\textrm{-}code}_{\ell,\pi_\ell^B(m)}]_{\mathsf{msk}}$ & $\mathsf{payload}_{\ell,\pi_\ell^B(m)}$  \\
		\hline
	\end{tabular}
\end{center}
\fi
We shuffle the list of tuples of each part to ensure voter's privacy. 
This way, nobody can guess the voter's choice from the position of the cast vote-code in this list.

\subsubsection{\VC{} initialization data} 
The \EA{} uses an $(N_v-f_v,N_v)$-VSS (Verifiable Secret-Sharing) scheme to split $\mathsf{msk}$ and every $\mathsf{receipt}_{\ell,j}$ into $N_v$ shares, denoted as $(\|\mathsf{msk}\|_1,\ldots, \|\mathsf{msk}\|_{N_v})$ and $(\|\mathsf{receipt}_{\ell,j}\|_1,\ldots, \allowbreak\|\mathsf{receipt}_{\ell,j}\|_{N_v})$ respectively. 
For each $\mathsf{vote\textrm{-}code}_{\ell,j}$ in each ballot, the \EA{} also computes $H_{\ell,j}\leftarrow SHA256(\mathsf{vote\textrm{-}code}_{\ell,j} , \mathsf{salt}_{\ell,j})$, where $\mathsf{salt}_{\ell,j}$ is a $64$-bit random number. 
$H_{\ell,j}$ allows each \VC{} node to validate a $\mathsf{vote\textrm{-}code}_{\ell,j}$ individually (without network communication), while still keeping the $\mathsf{vote\textrm{-}code}_{\ell,j}$ secret. 
To preserve voter privacy, these tuples are also shuffled using $\pi_\ell^X$.
\ifextended
The initialization data for $VC_i$ is structured as below:
\begin{center}
	\begin{tabular}{|l c|}
	    \hline
	    \multicolumn{2}{|c|}{$\|\mathsf{msk}\|_i$}\\
	    \hline
		\hline
		$\mathsf{serial\textrm{-}no}_\ell$ &  \\
		\hline
                \multicolumn{2}{|c|}{Part A} \\
		$(H_{\ell,\pi_{\ell}^A(1)},\mathsf{salt}_{\ell,\pi_{\ell}^A(1)})$  & $\|\mathsf{receipt}_{\ell,\pi_{\ell}^A(1)}\|_i$ \\
		\quad $\hdots$ & $\hdots$ \\
	$(H_{\ell,\pi_{\ell}^A(m)},\mathsf{salt}_{\ell,\pi_{\ell}^A(m)})$ & $\|\mathsf{receipt}_{\ell,\pi_{\ell}^A(m)}\|_i$\\
                \hline
                \multicolumn{2}{|c|}{Part B} \\
                $(H_{\ell,\pi_{\ell}^B(1)},\mathsf{salt}_{\ell,\pi_{\ell}^B(1)})$  & $\|\mathsf{receipt}_{\ell,\pi_{\ell}^B(1)}\|_i$ \\
                \quad $\hdots$ & $\hdots$ \\
        $(H_{\ell,\pi_{\ell}^B(m)},\mathsf{salt}_{\ell,\pi_{\ell}^B(m)})$ & $\|\mathsf{receipt}_{\ell,\pi_{\ell}^B(m)}\|_i$\\
		\hline
	\end{tabular}
\end{center}
\fi

\subsubsection{\Trustee{} initialization data} 
The \EA{} uses $(h_t,N_t)$-VSS to split the opening of encoded option commitments $\Com(\vec{e}_i)$ into $N_t$ shares, denoted as $(\|\underline{\vec{e}_i}\|_1,\ldots, \allowbreak\|\underline{\vec{e}_i}\|_{N_t})$.
\ifextended
The initialization data for $\mathsf{Trustee}_i$ is structured as below:
\begin{center}
	\begin{tabular}{|l c|}
		\hline
		$\mathsf{serial\textrm{-}no}_\ell$  &\\
		\hline
		\multicolumn{2}{|c|}{Part A}\\
		 $\Com(\vec{e}_{\pi_\ell^A(i)})$  & $\|\underline{\vec{e}_{\pi_\ell^A(i)}}\|_\ell$\\
		$\cdots$ & $\cdots$\\
		\hline
		\multicolumn{2}{|c|}{Part B}\\
	  $\Com(\vec{e}_{\pi_\ell^B(i)})$  & $\|\underline{\vec{e}_{\pi_\ell^B(i)}}\|_\ell$\\
		$\cdots$ & $\cdots$ \\
		\hline
	\end{tabular}
\end{center}
\fi

Similarly, the state of zero knowledge proofs for ballot correctness is shared among the \trustees{} using $(h_t,N_t)$-VSS. 
For further details, we refer the interested reader to~\cite{CP}.

\subsection{Vote Collectors}\label{subsec:Vcnodes}
The Vote Collection subsystem comprises $N_v$ nodes that collect the votes from the voters and, at election end time, agree on a single set of cast vote codes and upload it to the Bulletin Board. 
In the following subsections, we present two different versions of the \VC{} subsystem, one with a timing assumption (\emph{D-DEMOS/IC}) and one fully asynchronous (\emph{D-DEMOS/Async}).

\subsubsection{Vote Collectors for D-DEMOS/IC}\label{subsec:VcnodesIC}
\VC{} is a distributed system of $N_v$ nodes, running our \emph{voting} and \emph{vote-set consensus} protocols.
\VC{} nodes have private and authenticated channels to each other, and a public (unsecured) channel for voters.
%
\ifextended
The algorithms implementing our \emph{D-DEMOS/IC} \emph{voting} protocol are presented in Algorithm~\ref{alg:VCIC}.
For simplicity, we present our algorithms operating for a single election.
\else
The algorithms are presented in plain text to preserve space. 
Please consult the extended version of this paper~\cite{extended} for an algorithmic view and a message flow diagram.
\fi

The \emph{voting} protocol starts when a voter submits a \VOTE{}$\langle\mathsf{serial\textrm{-}no},\mathsf{vote\textrm{-}code}\rangle$
message to a \VC{} node.
We call this node the \emph{responder}, as it is responsible for delivering the receipt to the voter.
The \VC{} node confirms the current system time is within the defined election hours, and locates the ballot with the specified $\mathsf{serial\textrm{-}no}$. 
It also verifies this ballot has not been used for this election, either with the same or a different vote code.
Then, it compares the $\mathsf{vote\textrm{-}code}$ against every hashed vote code in each ballot line, until it locates the correct entry.
Subsequently, it obtains from its local database the $\mathsf{receipt\textrm{-}share}$ corresponding to the specific vote-code.
Next, it marks the ballot as $\mathsf{pending}$ for the specific $\mathsf{vote\textrm{-}code}$.
Finally, it multicasts a 
\VOTEP{}$\langle\mathsf{serial\textrm{-}no},\mathsf{vote\textrm{-}code}, \mathsf{receipt\textrm{-}share}\rangle$
message to all \VC{} nodes, disclosing its share of the receipt.
In case the located ballot is marked as $\mathsf{voted}$ for the specific $\mathsf{vote\textrm{-}code}$, the \VC{} node sends the stored $\mathsf{receipt}$ to the voter without any further interaction with other \VC{} nodes.
\\\indent 
Each \VC{} node that receives a \VOTEP{} message, first validates the received $\mathsf{receipt\textrm{-}share}$ according to the verifiable secret sharing scheme used.
Then, it performs the same validations as the responder, and multicasts another \VOTEP{} message (only once), disclosing its share of the receipt. 
When a node collects $h_v = N_v - f_v$ valid shares, it uses the verifiable secret sharing reconstruction algorithm to reconstruct the receipt (the secret) and marks the ballot as $\mathsf{voted}$ for the specific $\mathsf{vote\textrm{-}code}$. 
Additionally, the \emph{responder} node sends this receipt back to the voter.
\ifextended
A message flow diagram of our \emph{voting} protocol is depicted in Figure~\ref{figure:VoteCollectionPhase}.
As is evident from the diagram, the time from the multicast of the first \VOTEP{} message until collecting all receipt shares, is only slightly longer than a single round-trip between two \VC{} nodes.
\begin{figure}[ht!]
  \centering
  \includegraphics[width=0.95\textwidth]{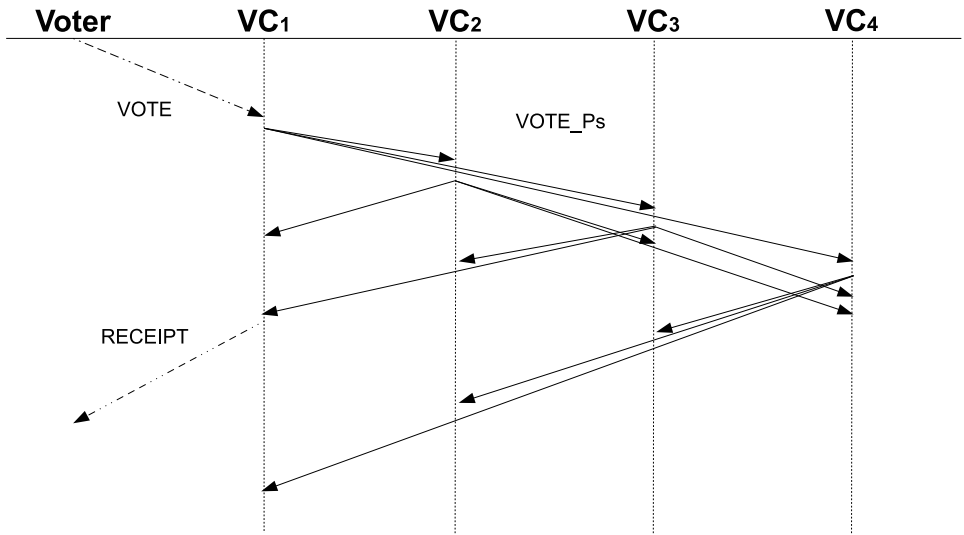}
  \caption{Diagram of message exchanges for a single vote during the D-DEMOS/IC vote collection phase.}
  \label{figure:VoteCollectionPhase}
\end{figure}
\fi
\\\indent 
At election end time, each \VC{} node stops processing \VOTE{} and \VOTEP{} messages, and initiates the \emph{vote-set consensus} protocol.
It creates a set $VS_i$ of $\langle \mathsf{serial\textrm{-}no}, \mathsf{vote\textrm{-}code} \rangle$  tuples, including all \emph{voted} and \emph{pending} ballots.
Then, it participates in the Interactive Consistency (IC) protocol of~\cite{IC@ICPADS2015}, with this set.
At the end of IC, each node contains a vector $\langle VS_1,\ldots,VS_n \rangle$ with the Vote Set of each node, and follows the algorithm of Figure~\ref{fig:afterIC}.
\begin{boxfig}{\label{fig:afterIC}High level description of algorithm after IC.}{}
Cross-tabulate $\langle VS_1,\ldots,VS_n \rangle$ per ballot, creating a list of vote codes for each ballot. Perform the following actions for each ballot:
\begin{enumerate}
 \item If the list contains two or more distinct vote codes, mark the ballot as $\mathsf{NotVoted}$ and exit.\label{afteric:discard}
 \item If a vote code $vc_a$ appears at least $N_v - 2f_v$ times in the list, mark the ballot as $\mathsf{Voted}$ for $vc_a$ and exit.\label{afteric:keep}
 \item Otherwise, mark the ballot as $\mathsf{NotVoted}$ and exit.\label{afteric:not-voted}
\end{enumerate}
\end{boxfig}
Step~\ref{afteric:discard} makes sure any ballot with multiple submitted vote codes is discarded. 
Since vote codes are private, and cannot be guessed by malicious vote collectors, the only way for multiple vote codes to appear is if malicious voters are involved, against whom our system is not obliged to respect our \emph{contract}.\\
\indent With a single vote code remaining, step~\ref{afteric:keep} considers the threshold above which to consider a ballot as voted for a specific vote code. 
We select the $N_v - 2f_v$ threshold for which we are certain that even the following extreme scenario is handled. 
If the \emph{responder} is malicious, submits a receipt to an honest voter, but denies it during \emph{vote-set consensus}, the remaining $N_v - 2f_v$ honest \VC{} nodes that revealed their receipt shares for the generation of the receipt, are enough for the system to accept the vote code (receipt generation requires $N_v - f_v$ nodes, of which $f_v$ may be malicious, thus $N_v - 2f_v$ are necessarily honest). 
\\
\indent Finally, step~\ref{afteric:not-voted} makes sure vote codes that occur less than $N_v - 2f_v$ times are discarded. 
Under this threshold, there is no way a receipt was ever generated.\\
\\\indent 
At the end of this algorithm, each node submits the resulting set of \emph{voted} $\langle \mathsf{serial\textrm{-}no}, \mathsf{vote\textrm{-}code} \rangle$ tuples to each \BB{} node, which concludes its operation for the specific election. 

\ifextended
\begin{algorithm}
\caption{Vote Collector algorithms for D-DEMOS/IC}
\label{alg:VCIC}
\begin{algorithmic}[1]
\NoThen
\footnotesize
\Procedure{on VOTE}{$\mathsf{serial\textrm{-}no},\mathsf{vote\textrm{-}code}$} from $source$:
  \If{$SysTime()$ between $start$ and $end$}
    \State $b := $locateBallot($\mathsf{serial\textrm{-}no}$)
    \If{$b.\mathsf{status} == \mathsf{NotVoted}$}
      \State $l$ := ballot.VerifyVoteCode($\mathsf{vote\textrm{-}code}$)
      \If{$l \neq null$}
        \State $b.\mathsf{status} := \mathsf{Pending}$
        \State $b.\mathsf{used\textrm{-}vc} := \mathsf{vote\textrm{-}code}$
        \State $b.\mathsf{lrs} := \{\}$ \Comment{list of receipt shares}
        \State sendAll(VOTE\_P$\langle\mathsf{serial\textrm{-}no},\mathsf{vote\textrm{-}code}, l.\mathsf{share}\rangle$)
        \State wait for $(N_v-f_v)$ VOTE\_P messages, fill $b.\mathsf{lrs}$
        \State $b.\mathsf{receipt} := \rec(b.\mathsf{lrs})$
        \State $b.\mathsf{status} := \mathsf{Voted}$
        \State $send(source, b.\mathsf{receipt})$
      \EndIf
    \ElsIf{$b.\mathsf{status} == \mathsf{Voted}$ \textbf{AND} $b.\mathsf{used\textrm{-}vc} == \mathsf{vote\textrm{-}code}$}
      \State send ($source$, $\mathsf{ballot.receipt}$)
    \EndIf
  \EndIf
\EndProcedure
\item[]
%
\Procedure{on VOTE\_P}{$\mathsf{serial\textrm{-}no},\mathsf{vote\textrm{-}code}, \mathsf{share}$} from $source$:
   \If{$SysTime()$ between $start$ and $end$}
      \State $b := $locateBallot($\mathsf{serial\textrm{-}no}$)
    \If{$b.\mathsf{status} == \mathsf{NotVoted}$}
       \State $l$ := ballot.VerifyVoteCode($\mathsf{vote\textrm{-}code}$)
       \If{$l \neq null$}
         \State $b.\mathsf{status} := \mathsf{Pending}$
         \State $b.\mathsf{used\textrm{-}vc} := \mathsf{vote\textrm{-}code}$
         \State $b.\mathsf{lrs}.\mathsf{Append}(\mathsf{share})$
         \State sendAll(VOTE\_P($\mathsf{serial\textrm{-}no},\mathsf{vote\textrm{-}code}, l.\mathsf{share}$) )
       \EndIf
    \ElsIf{$b.\mathsf{status} == \mathsf{Voted}$ \textbf{AND} $b.\mathsf{used\textrm{-}vc} == \mathsf{vote\textrm{-}code}$}
        \State $b.\mathsf{lrs}.\mathsf{Append}(\mathsf{share})$
        \If{size($b.\mathsf{lrs}$) $>= N_v-f_v$}
          \State $b.\mathsf{receipt} := \rec(b.\mathsf{lrs})$
          \State $b.\mathsf{status} :=  \mathsf{Voted}$
        \EndIf
    \EndIf
  \EndIf
\EndProcedure
\item[]

\Function{Ballot::VerifyVoteCode}{$\mathsf{vote\textrm{-}code}$}
  \For{$l=1$ to $\mathsf{ballot\_lines}$}
    \If{$\mathsf{lines}[l].\mathsf{hash} == h(\mathsf{vote\textrm{-}code}||\mathsf{lines}[l].\mathsf{salt})$}
      \Return $l$
    \EndIf
  \EndFor
  \Return $null$
\EndFunction
\end{algorithmic}
\end{algorithm}
\fi

\subsubsection{Vote Collectors for D-DEMOS/Async}\label{subsec:VcnodesAsync}
We make the following enhancements to the Vote Collection subsystem, to achieve the completely asynchronous version \emph{D-DEMOS/Async}.
During voting we introduce another step, which guarantees only a single vote code can be accepted (towards producing a receipt) for a given ballot. We also employ an asynchronous binary consensus primitive to achieve Vote Set Consensus.
\\\indent
More specifically, during voting, the \emph{responder} \VC{} node validates the submitted vote code, but before disclosing its receipt share, it multicasts an \ENDORSE{}$\langle\mathsf{serial\textrm{-}no},\mathsf{vote\textrm{-}code}\rangle$ message to all \VC{} nodes.
Each \VC{} node, after making sure it has not endorsed another vote code for this ballot, responds with an \ENDORSEMENT{}$\langle\mathsf{serial\textrm{-}no},\mathsf{vote\textrm{-}code},\mathsf{sig_{VC_i}}\rangle$ message, where $\mathsf{sig_{VC_i}}$ is a digital signature of the specific serial-no and vote-code, with $VC_i$'s private key.
The responder collects $N_v-f_v$ valid signatures and forms a uniqueness certificate $\mathsf{UCERT}$ for this ballot. 
It then discloses its receipt share via the \VOTEP{} message, but also attaches the formed $\mathsf{UCERT}$ in the message.
\\\indent
Each \VC{} node that receives a VOTE\_P message, first verifies the validity of $\mathsf{UCERT}$ and discards the message on error.
On success, it proceeds as per the \emph{D-DEMOS/IC} protocol (validating the receipt share it receives and then disclosing its own receipt share).
\\\indent
\ifextended
The algorithms implementing our \emph{D-DEMOS/Async} \emph{voting} protocol are presented in Algorithm~\ref{alg:VC}. 

The voting process is outlined in the diagram of Figure~\ref{figure:VoteCollectionAsyncPhase}, where we now see two round-trips are needed before the receipt is reconstructed and posted to the voter.
\begin{figure}[ht!]
  \centering
  \includegraphics[width=0.95\textwidth]{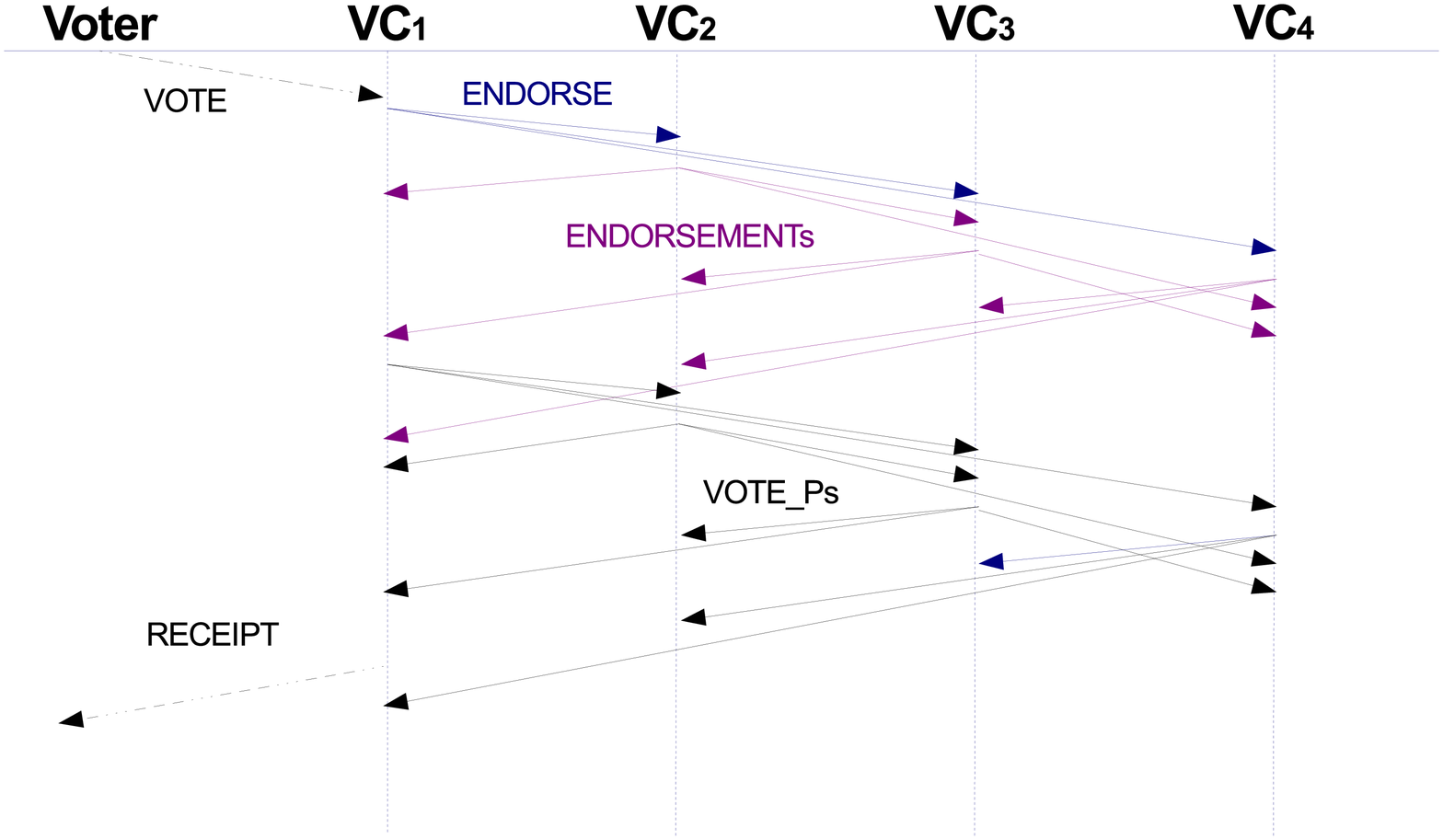}
  \caption{Diagram of message exchanges for a single vote during the D-DEMOS/Async vote collection phase.}
  \label{figure:VoteCollectionAsyncPhase}
\end{figure}
\fi
\\\indent 
The formation of a valid $\mathsf{UCERT}$ gives our algorithms the following guarantees:
\begin{enumerate}[a)]
\item No matter how many responders and vote codes are active at the same time for the same ballot, if a $\mathsf{UCERT}$ is formed for vote code $vc_a$, no other uniqueness certificate for any vote code different than $vc_a$ can be formed.
\item By verifying the $\mathsf{UCERT}$ before disclosing a \VC{} node's receipt share, we guarantee the voter's receipt cannot be reconstructed unless a valid $\mathsf{UCERT}$ is present. 
\end{enumerate}
\indent
At election end time, each \VC{} node stops processing \ENDORSE{}, \ENDORSEMENT{}, \VOTE{} and \VOTEP{} messages, and follows the \emph{vote-set consensus} algorithm in Figure~\ref{fig:vcsasync}, for each registered ballot.

\begin{boxfig}{\label{fig:vcsasync}High level description of algorithm for asynchronous vote set consensus.}{}
\begin{enumerate}
\item \label{bc-step-announce} 
Send \ANNOUNCE{}$\langle\mathsf{serial\textrm{-}no},\mathsf{vote\textrm{-}code}, \mathsf{UCERT}\rangle$ to all nodes. 
The vote-code will be \emph{null} if the node knows of no vote code for this ballot.

\item \label{bc-step-wait} 
Wait for $N_v-f_v$ such messages. 
If any of these messages contains a valid vote code $vc_a$, accompanied by a valid $\mathsf{UCERT}$, change the local state immediately, by setting $vc_a$ as the vote code used for this ballot.

\item \label{bc-step-bc} 
Participate in a Binary Consensus protocol, with the subject ``Is there a valid vote code for this ballot?''. 
Enter with an opinion of $1$, if a valid vote code is locally known, or a $0$ otherwise.

\item \label{bc-step-result-0} 
If the result of Binary Consensus is $0$, consider the ballot not voted. 

\item \label{bc-step-result-1} 
Else, if the result of Binary Consensus is $1$, consider the ballot voted. 
There are two sub-cases here: 
  \begin{enumerate}[a)]
  \item \label{bc-step-result-1-known} 
  If vote code $vc_a$, accompanied by a valid $\mathsf{UCERT}$ is locally known, consider the ballot voted for $vc_a$. 

  \item \label{bc-step-result-1-unknown} 
  If, however, $vc_a$ is not known, send a \RECREQ{}$\langle\mathsf{serial\textrm{-}no}\rangle$ message to all \VC{} nodes, wait for the first valid \RECRES{}$\langle\mathsf{serial\textrm{-}no}, vc_a, \mathsf{UCERT}\rangle$ response, and update the local state accordingly.
  \end{enumerate}
\end{enumerate}
\end{boxfig}

Steps~\ref{bc-step-announce}-\ref{bc-step-wait} ensure used vote codes are dispersed across nodes. 
Recall our receipt generation requires $N_v-f_v$ shares to be revealed by distinct \VC{} nodes, of which at least $N_v-2f_v$ are honest. 
Note that any two $N_v-f_v$ subsets of $N_v$ have at least one honest node in common. 
Because of this, if a receipt was generated, at least one  honest node's \ANNOUNCE{} will be processed by every honest node, and all honest \VC{} nodes will obtain the corresponding vote code in these two steps.
Consequently, all honest nodes enter step~\ref{bc-step-bc} with an opinion of $1$ and binary consensus is guaranteed to deliver $1$ as the resulting value, thus safeguarding our contract against the voters.
In any case, step~\ref{bc-step-bc} guarantees all \VC{} nodes arrive at the same conclusion, on whether this ballot is voted or not.

In the algorithm outlined above, the result from binary consensus is translated from $0$/$1$ to a status of ``not-voted'' or a unique valid vote code, in steps~\ref{bc-step-result-0}-\ref{bc-step-result-1}. 
The~\ref{bc-step-result-1-unknown} case of this translation, in particular, requires additional explanation. 
Assume, for example, that a voter submitted a valid vote code $vc_a$, but a receipt was not generated before election end time. 
In this case, an honest vote collector node $VC_i$ may not be aware of $vc_a$ at step~\ref{bc-step-bc}, as steps~\ref{bc-step-announce}-\ref{bc-step-wait} do not make any guarantees in this case. 
Thus, $VC_i$ may rightfully enter consensus with a value of $0$.  
However, when honest nodes' opinions are mixed, the consensus algorithm may produce any result.
In case the result is $1$, $VC_i$ will not possess the correct vote code $vc_a$, and thus will not be able to properly translate the result. 
This is what our recovery sub-protocol is designed for. 
$VC_i$ will issue a \RECREQ{} multicast, and we claim that another honest node, $VC_h$ exists that \emph{possesses} $vc_a$ and \emph{replies} with it. 
The reason for the existence of an honest $VC_h$ is straightforward and stems from the properties of the binary consensus problem definition.
If all honest nodes enter binary consensus with the same opinion $a$, the result of any consensus algorithm is guaranteed to be $a$. 
Since we have an honest node $VC_i$, that entered consensus with a value of $0$, but a result of $1$ was produced, there has to exist another honest node $VC_h$ that entered consensus with an opinion of $1$. 
Since $VC_h$ is honest, it must \emph{possess} $vc_a$, along with the corresponding $\mathsf{UCERT}$ (as no other vote code $vc_b$ can be active at the same time for this ballot). 
Again, because $VC_h$ is honest, it will follow the protocol and \emph{reply} with a well formed RECOVER-REPLY. 
Additionally, the existence of $\mathsf{UCERT}$ guarantees that any malicious replies can be safely identified and discarded.

As per \emph{D-DEMOS/IC}, at the end of this algorithm, each node submits the resulting set of \emph{voted} $\langle \mathsf{serial\textrm{-}no}, \mathsf{vote\textrm{-}code} \rangle$ tuples to each \BB{} node, which concludes its operation for the specific election. 


\ifextended
\begin{algorithm}
\caption{Vote Collector algorithms for D-DEMOS/Async}
\label{alg:VC}
\begin{algorithmic}[1]
\NoThen
\footnotesize
\Procedure{on VOTE}{$\mathsf{serial\textrm{-}no},\mathsf{vote\textrm{-}code}$} from $source$:
  \If{$SysTime()$ between $start$ and $end$}
    \State $b := $locateBallot($\mathsf{serial\textrm{-}no}$)
    \If{$b.\mathsf{status} == \mathsf{NotVoted}$}
      \State $l$ := ballot.VerifyVoteCode($\mathsf{vote\textrm{-}code}$)
      \If{$l \neq null$}
        \State $b.\mathsf{UCERT} := \{\}$ \Comment{Uniqueness certificate}
        \State sendAll(ENDORSE$\langle\mathsf{serial\textrm{-}no},\mathsf{vote\textrm{-}code}\rangle$)
        \State wait for $(N_v-f_v)$ valid replies, fill $b.\mathsf{UCERT}$
        \State $b.\mathsf{status} := \mathsf{Pending}$
        \State $b.\mathsf{used\textrm{-}vc} := \mathsf{vote\textrm{-}code}$
        \State $b.\mathsf{lrs} := \{\}$ \Comment{list of receipt shares}
        \State sendAll(VOTE\_P$\langle\mathsf{serial\textrm{-}no},\mathsf{vote\textrm{-}code}, l.\mathsf{share}\rangle$)
        \State wait for $(N_v-f_v)$ VOTE\_P messages, fill $b.\mathsf{lrs}$
        \State $b.\mathsf{receipt} := \rec(b.\mathsf{lrs})$
        \State $b.\mathsf{status} := \mathsf{Voted}$
        \State $send(source, b.\mathsf{receipt})$
      \EndIf
    \ElsIf{$b.\mathsf{status} == \mathsf{Voted}$ \textbf{AND} $b.\mathsf{used\textrm{-}vc} == \mathsf{vote\textrm{-}code}$}
      \State send ($source$, $\mathsf{ballot.receipt}$)
    \EndIf
  \EndIf
\EndProcedure
\item[]
%
\Procedure{on VOTE\_P}{$\mathsf{serial\textrm{-}no},\mathsf{vote\textrm{-}code}, \mathsf{share}, \mathsf{UCERT}$} from $source$:
   \If{$\textsf{UCERT}$ is not valid}
      \State return
   \EndIf
   \If{$SysTime()$ between $start$ and $end$}
      \State $b := $locateBallot($\mathsf{serial\textrm{-}no}$)
    \If{$b.\mathsf{status} == \mathsf{NotVoted}$}
       \State $l$ := ballot.VerifyVoteCode($\mathsf{vote\textrm{-}code}$)
       \If{$l \neq null$}
         \State $b.\mathsf{status} := \mathsf{Pending}$
         \State $b.\mathsf{used\textrm{-}vc} := \mathsf{vote\textrm{-}code}$
         \State $b.\mathsf{lrs}.\mathsf{Append}(\mathsf{share})$
         \State sendAll(VOTE\_P($\mathsf{serial\textrm{-}no},\mathsf{vote\textrm{-}code}, l.\mathsf{share}$) )
       \EndIf
    \ElsIf{$b.\mathsf{status} == \mathsf{Voted}$ \textbf{AND} $b.\mathsf{used\textrm{-}vc} == \mathsf{vote\textrm{-}code}$}
        \State $b.\mathsf{lrs}.\mathsf{Append}(\mathsf{share})$
        \If{size($b.\mathsf{lrs}$) $>= N_v-f_v$}
          \State $b.\mathsf{receipt} := \rec(b.\mathsf{lrs})$
          \State $b.\mathsf{status} :=  \mathsf{Voted}$
        \EndIf
    \EndIf
  \EndIf
\EndProcedure
\item[]

\Function{Ballot::VerifyVoteCode}{$\mathsf{vote\textrm{-}code}$}
  \For{$l=1$ to $\mathsf{ballot\_lines}$}
    \If{$\mathsf{lines}[l].\mathsf{hash} == h(\mathsf{vote\textrm{-}code}||\mathsf{lines}[l].\mathsf{salt})$}
      \Return $l$
    \EndIf
  \EndFor
  \Return $null$
\EndFunction
\end{algorithmic}
\end{algorithm}
\fi

\subsection{Voter}\label{subsec:voter}
We expect the voter, who has received a ballot from \EA{}, to know the URLs of at least $f_\mathsf{v}+1$ \VC{} nodes.
To vote, she picks one part of the ballot at random, selects the vote code representing her chosen option, and loops, selecting a \VC{} node at random and posting the vote code, until she receives a valid receipt. 
After the election, the voter can verify two things from the updated \BB{}. 
First, she can verify her cast vote code is included in the tally set. 
Second, she can verify that the unused part of her ballot, as ``opened'' at the \BB{}, matches the copy she received before the election started.
This step verifies that the vote codes are associated with the expected options as printed in the ballot.
Finally, the voter can delegate both of these checks to an \emph{auditor}, without sacrificing her privacy.
This is because the cast vote code does not reveal her choice, and because the unused part of the ballot is completely unrelated to the used one.

\subsection{Bulletin Board}\label{sec:BB}
A \BB{} node functions as a public repository of election-specific information. 
By definition, it can be read via a public and anonymous channel.
Writes, on the other hand, happen over an authenticated channel, implemented with PKI originating from the voting system.
\BB{} nodes are independent from each other, as a \BB{} node never directly contacts another \BB{} node.
Readers are expected to issue a read request to all \BB{} nodes, and trust the reply that comes from the majority.
Writers are also expected to write to all \BB{} nodes; their submissions are always verified, and explained in more detail below.
 
After the setup phase, each \BB{} node publishes its initialization data. 
During election hours, \BB{} nodes remain inert.
After the voting phase, each \BB{} node receives from each \VC{} node, the final vote-code set and the shares of $\mathsf{msk}$.
Once it receives $f_v+1$ identical final vote code sets, it accepts and publishes the final vote code set.
Once it receives $N_v-f_v$ valid key shares (again from \VC{} nodes), it reconstructs the $\mathsf{msk}$, decrypts all the encrypted vote codes in its initialization data, and publishes them.
%
%
 
At this point, the cryptographic payloads corresponding to the cast vote codes are made available to the \trustees{}. 
\Trustees{}, in turn, read from the \BB{} subsystem, perform their individual calculations and then write to the \BB{} nodes; these writes are verified by the \trustees{}' keys, generated by the \EA{}. 
Once enough \trustees{} have posted valid data, the \BB{} node combines them and publishes the final election result.

We intentionally designed our \BB{} nodes to be as simple as possible for the reader, refraining from using a \emph{Replicated State Machine}, which would require readers to run algorithm-specific software. 
The robustness of \BB{} nodes comes from controlling all write accesses to them. 
Writes from \VC{} nodes are verified against their honest majority threshold. 
Further writes are allowed only from \trustees{}, verified by their keys.
 
Finally, a reader of our \BB{} nodes should post her read request to all nodes, and accept what the majority responds with ($f_b+1$ is enough). 
We acknowledge there might be temporary state divergence (among \BB{} nodes), from the time a writer updates the first  \BB{} node, until the same writer  updates the last \BB{} node. 
However, given our thresholds, this should be only momentary, alleviated with simple retries.
Thus, if there is no reply backed by a clear majority, the reader should retry until there is one.
\subsection{\Trustees{}}
After the end of election hours, each \trustee{} fetches all the election data from the \BB{} subsystem and verifies its validity. 
For each ballot, there are two possible valid outcomes:
\begin{inparaenum}[i)]
 \item \label{trustees-one-voted}
 one of the A/B parts are voted,
 \item \label{trustees-none-voted}
 none of the A/B parts are voted.
\end{inparaenum}
If both A/B parts of a ballot are marked as voted, then the ballot is considered as invalid and is discarded. 
Similarly, \trustees{} also discard those ballots where more than one commitments in an A/B part are marked as voted.

In case (\ref{trustees-one-voted}), for each encoded option commitment in the unused part, $\mathsf{Trustee}_\ell$ submits its corresponding share of the opening of the commitment to the \BB{}.
For each encoded option commitment in the voted part, $\mathsf{Trustee}_\ell$ computes and posts the share of the final message of the corresponding zero knowledge proof, showing the validity of those commitments.
Meanwhile, those commitments marked as voted are collected to a tally set $\mathbf{E}_{\mr{tally}}$. 
In case (\ref{trustees-none-voted}), for each encoded option commitment in both parts, $\mathsf{Trustee}_\ell$ submits its corresponding share of the opening of the commitment to the \BB{}.
Finally, denote $\mathbf{D}^{(\ell)}_{\mr{tally}}$ as $\mathsf{Trustee}_\ell$'s set of shares of option encoding commitment openings,   corresponding to the commitments in $\mathbf{E}_{\mr{tally}}$.
$\mathsf{Trustee}_\ell$ computes the opening share for $E_{\mr{sum}}$ as $T_\ell = \sum_{D\in \mathbf{D}^{(\ell)}_{\mr{tally}}}$ and then submits $T_\ell$ to each \BB{} node. 

\subsection{Auditors}
\label{sec:auditors}
Auditors are participants of our system who can verify the election process. 
The role of the auditor can be assumed by voters or any other party.
After election end time, auditors read information from the \BB{} and verify the correct execution of the election, by verifying the following:
\ifextended
\begin{enumerate}
\else
\begin{inparaenum}[1)]
\fi
 \item within each opened ballot, no two vote codes are the same;
 \item there are no two submitted vote codes associated with any single ballot part;
 \item within each ballot, no more than one part has been used; 
 \item all the openings of the commitments are valid;
 \item all the zero-knowledge proofs associated with the used ballot parts are completed and valid.
\ifextended
\end{enumerate}
\else
\end{inparaenum}
\fi
In case they received audit information (an unused ballot part and a cast vote code) from voters who wish to delegate verification, they can also verify:
\ifextended
\begin{enumerate}\setcounter{enumi}{5}
\else
\begin{inparaenum}[a)]
\fi
 \item the submitted vote codes are consistent with the ones received from the voters; 
 \item the openings of the unused ballot parts are consistent with the ones received from the voters.
\ifextended
\end{enumerate}
\else
\end{inparaenum}
\fi

\section{Potential attacks}\label{section:attacks}
In this section, we outline some of the possible attacks against the D-DEMOS systems, and the way our systems thwart them.
This is a high level discussion, aiming to help the reader understand \emph{why} our systems work reliably.
In Appendix~\ref{sec:security_full}, we provide the formal proofs of correctness and privacy, which are the foundation of this discussion.

In this high-level description, we intentionally do not focus on Denial-of-Service attacks, as these kind of attacks attempt to stop the system from producing a result, or stop voters from casting their votes. 
Although these attacks are important, they cannot be hidden, as voters will notice immediately the system not responding (either because of our receipt mechanism and our liveness property, or because of lack of information in the \BB{}).
Instead, we focus on attacks on the correctness of the election result, as these have consequences simple voters cannot identify easily.
In this discussion, we assume the fault thresholds of section~\ref{subsec:threat} are not violated, and the attacker cannot violate the security of the underlying cryptographic primitives.


In this section, we focus on correctness, noting that our systems' privacy is achieved by the security of our cryptographic schemes (see Sections~\ref{subsec:sec_tools} and \ref{subsec:sec_priv} for details), and the partial initialization data that each node of the distributed subsystems receives at the setup phase.

\subsection{Malicious Election Authority Component}
At a high level, the \EA{} produces vote codes and corresponding receipts.
Vote codes are pointers to the associated cryptographic payload, which includes \emph{option encodings}.
Options encodings are used to produce the tally using homomorphic addition.
If the \EA{} miss-encodes any option, it will be identified by the Zero-Knowledge proof validation performed by the Auditors.

The \EA{} may instead try to ``point'' a vote code to a valid but different option encoding (than the one described in the voter's ballot), in an attempt to manipulate the result.
In this case, the \EA{} cannot predict which one of the two parts the voter will use.
Recall that the unused part of the ballot will be opened in the \BB{} by the \trustees{}, and thus the voters can read and verify the correctness of their unused ballot parts.

As explained in detail in section \ref{subsec:sec_e2e}, if none of the above attacks take place, there is perfect consistency between each voter's ballot and its corresponding information on the \BB{}. 
Because of this, as well as the correctness and the perfect hiding property of our commitment scheme, the homomorphic tally will be opened to the actual election result.

\subsection{Malicious Voter}
\label{sec:AttacksVoter}
A malicious voter can try to submit multiple vote codes to the \VC{} subsystem, attempting to cause disagreement between its nodes.
In this case, a receipt \emph{may} be generated, depending on the order of delivery of network messages.
Note that, our safety \emph{contract} allows our system to either accept only one vote code for this ballot, or discard the ballot altogether, as the voter is malicious and our contract holds only against honest voters.

In the D-DEMOS/IC case, this is resolved at the \emph{Vote Set Consensus} phase.
During the \emph{voting} phase, each \VC{} node accepts only the first vote code it receives (via either a \VOTE{} or a \VOTEP{} message), and attempts to follow our \emph{voting} protocol.
This results in the generation of at most one receipt, for one of the posted vote codes.
However, during \emph{Vote Set Consensus}, honest \VC{} nodes will typically identify the multiple posted vote codes and discard the ballot altogether, even if a receipt was indeed generated.
If the ballot is not discarded (e.g., because malicious vote collector nodes hid the extra vote codes and honest nodes knew only of one), our $N_v-2f_v$ threshold guarantees that no vote codes with generated receipts are discarded.

In the D-DEMOS/Async case, this is resolved completely at the \emph{voting} phase.
Each \VC{} node still accepts only the first vote code it receives, but additionally attempts to build a \UCERT{} for it.
As the generation of a \UCERT{} is guaranteed to be successful only for a single vote code, the outcome of the \emph{voting} protocol will be either no \UCERT{} being built, resulting in considering the ballot as not-voted, or a single \UCERT{} generated.

Thus, the two systems behave differently in the case of multiple posted vote codes, as D-DEMOS/IC typically discards such ballots, while D-DEMOS/Async may process some of them, when a \UCERT{} is successfully built.

\subsection{Malicious Vote Collector}
A malicious \VC{} node cannot easily guess the vote codes in the voters' ballots, as they are randomly generated.
Additionally, because vote codes are encrypted in the local state of each \VC{} node, the latter cannot decode and use them.
Note that, a vote code in a voter's ballot is considered private until the voter decides to use it and transmits it over the network.
From this point on, the vote code can be intercepted by the attacker, as the only power it gives him is to cast it.

A malicious \VC{} node can obtain vote codes from colluding malicious voters.
In this case, the only possible attack on correctness is exactly the same as if it originated from the malicious voter herself, and we already described our counter-measures in Section~\ref{sec:AttacksVoter}.

A malicious \VC{} node may become a \emph{responder}. 
In this case, this \VC{} node may \emph{selectively} transmit the cast vote code to a subset of the remaining \VC{} nodes, potentially including all the other malicious and colluding nodes, and deliver the receipt to an honest voter. 
Consequently, the attacker controlling the malicious entities, may try to ``confuse'' the honest \VC{} nodes and have them disagree on whether the ballot is voted or not, by having all malicious \VC{} nodes lie at \emph{vote set consensus} time, reporting the ballot as not voted.

Recall that, for the receipt to be generated, $N_v-f_v$ \VC{} nodes need to cooperate, of which up to $f_v$ may be malicious.
This leaves $N_v-2f_v$ honest nodes always present.

In the case of D-DEMOS/IC, these $N_v-2f_v$ honest nodes will show up in the per ballot cross-tabulation, and will drive the decision to mark the ballot as voted (note that, in the algorithm of Figure~\ref{fig:afterIC}, $N_v-2f_v$ is the lower threshold for a ballot to be marked as voted).
In the case of D-DEMOS/Async, we include the \ANNOUNCE{}-exchanging phase before the consensus algorithm, to guarantee at least one of the $N_v-2f_v$ honest nodes' \ANNOUNCE{} message will be processed by every honest node.
In this case, all honest nodes will agree on entering consensus that the ballot is voted, which guarantees the outcome of consensus to be in accordance.

\subsection{Malicious \BB{} nodes and \trustees{}}
Malicious entities between both the \BB{} nodes and the \trustees{} cannot influence the security of both systems.
The reason is, a node of each of these two subsystems does not communicate with the remaining nodes of the same subsystem, and thus cannot influence either the correctness, or progress of the system as a whole.

\section{Implementation and evaluation}\label{section:impl_and_eval}
\subsection{Implementation}
\par\noindent\textbf{\emph{Voting system:}}
We implement the Election Authority component of our system as a standalone C++ application, and all other components in Java.
Whenever we store data structures on disk, or transmit them over the network, we use Google Protocol Buffers~\cite{protobuf} to encode and decode them efficiently. 
We use the MIRACL library~\cite{MIRACL} for elliptic-curve cryptographic operations.
In all applications requiring a database, we use the PostgreSQL relational database system~\cite{PostgreSQL}.

We build an \emph{asynchronous communications stack} (ACS) on top of Java, using Netty~\cite{Netty} and the asynchronous PostgreSQL driver from~\cite{PGASYNC}, using TLS based authenticated channels for inter-node communication, and a public HTTP channel for public access.
This infrastructure uses connection-oriented sockets, but allows the applications running on the upper layers to operate in a message-oriented fashion.
We use this infrastructure to implement \VC{} and \BB{} nodes. 
We implement ``verifiable secret sharing with honest dealer'', by utilizing Shamir's Secret Share library implementation~\cite{JavaShamir}, and having the \EA{} sign each share.

For D-DEMOS/IC, we use the implementation of \emph{IC,BC-RBB} (Interactive Consistency algorithm, using asynchronous binary consensus and reliable broadcast without signatures) from~\cite{IC@ICPADS2015}. 
We use the election end time as a synchronization point to start the algorithm, and configure the timeout of the first phase of the algorithm according to the number of \VC{} nodes and the number of ballots in the election.

For D-DEMOS/Async, we implement Bracha's Binary Consensus directly on top of the ACS, and we use that to implement our Vote Set Consensus algorithm (depicted in Figure~\ref{fig:vcsasync}). 
We introduce a version of Binary Consensus that operates in batches of arbitrary size; this way, we achieve greater network efficiency.

Additionally, we batch most of the asynchronous vote set consensus ``announce'' phase's messages.
If this phase was implemented without optimization, it would result in a message complexity of $n*N_v$ (individual ANNOUNCE messages), imposing a significant network load.
This is because each node has to multicast an \ANNOUNCE{} message for each ballot, and wait for $n(N_v-f_v)$ replies to progress.
To optimize it, we have each node consult its local database and diagnose cases where another node already knows the correct vote code and $\mathsf{UCERT}$ for a specific ballot.
This is feasible because when a node $VC_b$ discloses its share using the $\mathsf{VOTE\_P}$ message, it also includes the $\mathsf{UCERT}$, and this fact is recorded in the recipient's node ($VC_a$) database along with the sender node's share.
For these cases, we produce $\mathsf{ANNOUNCE\_RANGE}$ messages addressed to individual nodes, having the source node $VC_a$ announce a range of ballot serial numbers as voted, a fact that is already known to the recipient node $VC_b$ (because $VC_a$ located recorded $\mathsf{VOTE\_P}$ messages from $VC_b$).
We use the same mechanism to announce ranges of not-voted ballots.
\\\par\noindent
\textbf{Trustee Android application:}
In addition to the web interface for \trustees{}, we also implement a specialized \Trustee{} Android application.
We re-use the MIRACL library on Android and provide a simple user interface for \trustees{}, where they use a single button press to perform each of their required tasks: download their initialization data from the \EA{}, download election data from the \BB{}, calculate their cryptographic contribution to the result opening, and finally upload their share of the opening to the \BB{}.
\\\par\noindent
\textbf{Web browser replicated service reader:} 
Our choice to model the Bulletin Board as a replicated service of non-cooperating nodes puts the burden of response verification on the reader of the service; a human reader is expected to manually issue a read request to all nodes, then compare the responses and pick the one posted by the majority of nodes. 
To alleviate this burden, we implement a web browser extension which automates this task, as an extension for Mozilla Firefox. The user sets up the list of URLs for the replicated service.  The add-on 1) intercepts any HTTP request towards any of these URLs, 2) issues the same request to the rest of the nodes, and 3) captures the responses, compares them in binary form, and routes the response coming from the majority, as a response to the original request posted by the user. Majority is defined by the number of defined URL prefixes; for 3 such URLs, the first 2 equal replies suffice. 
 
With the above approach, the user never sees a wrong reply, as it is filtered out by the extension. 
Also note this process will be repeated for all dependencies of the initial web page (images, scripts, CSS), as long at they come from the same source (with the same URL prefix), verifying the complete user visual experience in the browser.  

Note that, this mechanism is required only when reading data from the Bulletin Board, such as the election result, or audit information. This mechanism is neither needed nor used during voting, where the voter interacts with the Vote Collection subsystem using our voting protocol.

\subsection{Evaluation}
\label{sect:evaluation}
We experimentally evaluate the performance of our voting system, focusing mostly on our vote collection algorithm, which is the most performance critical part. 
We conduct our experiments using a cluster of 12 machines, connected over a Gigabit Ethernet switch. 
The first 4 are equipped with Hexa-core Intel Xeon E5-2420 @ 1.90GHz, 16GB RAM, and one 1TB SATA disk, running CentOS 7 Linux, and we use them to run our VC nodes. 
The remaining 8 comprise dual Intel(R) Xeon(TM) CPUs @ 2.80GHz, with 4GB of main memory, and two 50GB disks, running CentOS 6 Linux, and we use them as clients. 

We implement a multi-threaded voting client to simulate concurrency. 
This client starts the requested number of threads, each of which loads its corresponding ballots from disk and waits for a signal to start.
From then on, the thread enters a loop where it picks one VC node and vote code at random, requests the voting page from the selected VC (HTTP GET), submits its vote (HTTP POST), and waits for the reply (receipt). 
This simulates multiple concurrent voters casting their votes in parallel, and gives an understanding of the behavior of the system under the corresponding load. 
We employ the PostgreSQL RDBMS~\cite{PostgreSQL} to store all VC initialization data from the \EA{}. 

\begin{figure*}
\centering
{
  \subfloat[]
  {
    \includegraphics[width=0.50\textwidth]{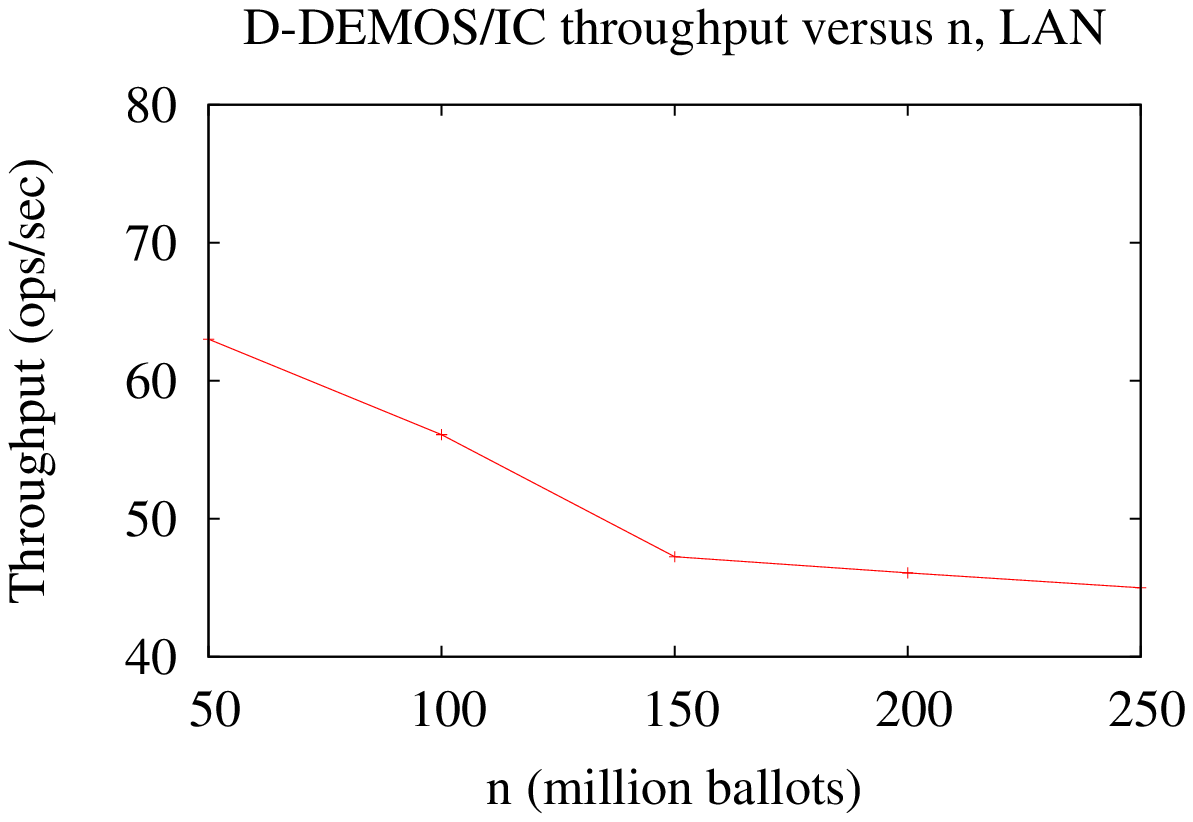}
    \label{fig:LANThroughputnIC}
  }
  \subfloat[]
  {
    \includegraphics[width=0.50\textwidth]{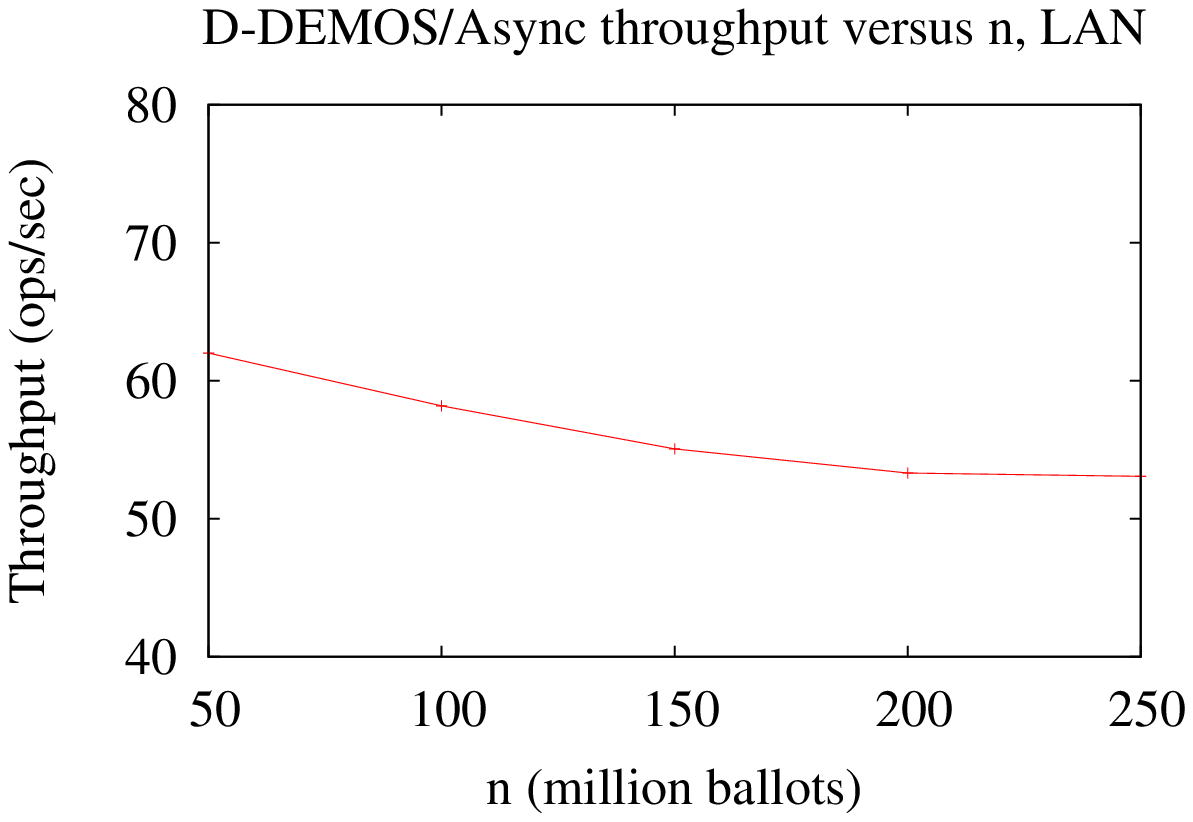}
    \label{fig:LANThroughputnAsync}
  }
  \caption{Vote collection throughput graphs for D-DEMOS/IC (\ref{fig:LANThroughputnIC}) and D-DEMOS/Async(\ref{fig:LANThroughputnAsync}), versus the number of total election ballots $n$.} 
  \label{fig:LANThroughputnAll}
}
\end{figure*}

We start off by demonstrating our system's capability of handling large-scale elections. 
To this end, we generate election data for referendums, i.e., $m=2$, and vary the total number of ballots $n$ from 50 million to 250 million (note the 2012 US voting population size was 235 million). 
This causes the database size to increase accordingly and impact queries.
We fix the number of concurrent clients to 400 and cast a total of 200,000 ballots, which are enough for our system to reach its steady-state operation (larger experiments result in the same throughput). 
Figure~\ref{fig:LANThroughputnAll} shows the throughput of both D-DEMOS/IC and D-DEMOS/Async declines slowly, even with a five-fold increase in the number of eligible voters. The cause of the decline is the increase of the database size. 

\begin{figure*}
\centering
{
  \subfloat[]
  {
    \includegraphics[width=0.50\textwidth]{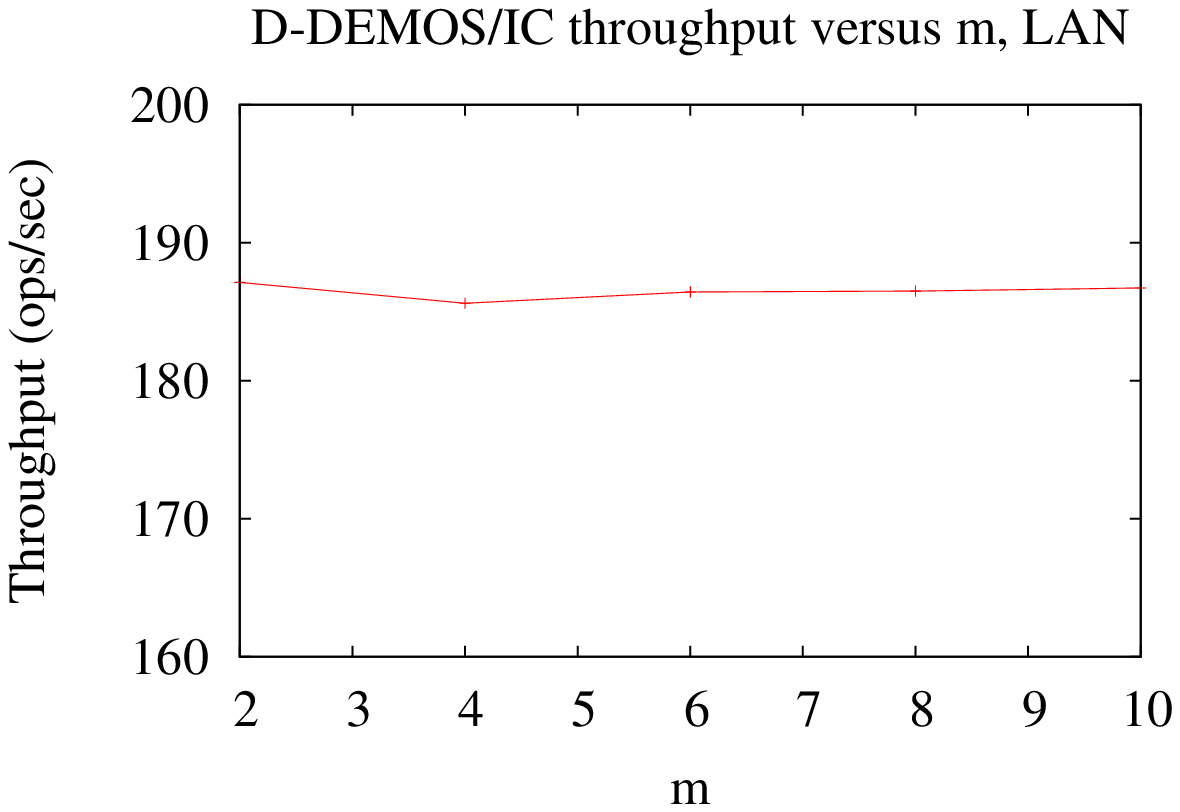}
    \label{fig:LANThroughputmIC}
  }
  \subfloat[]
  {
    \includegraphics[width=0.50\textwidth]{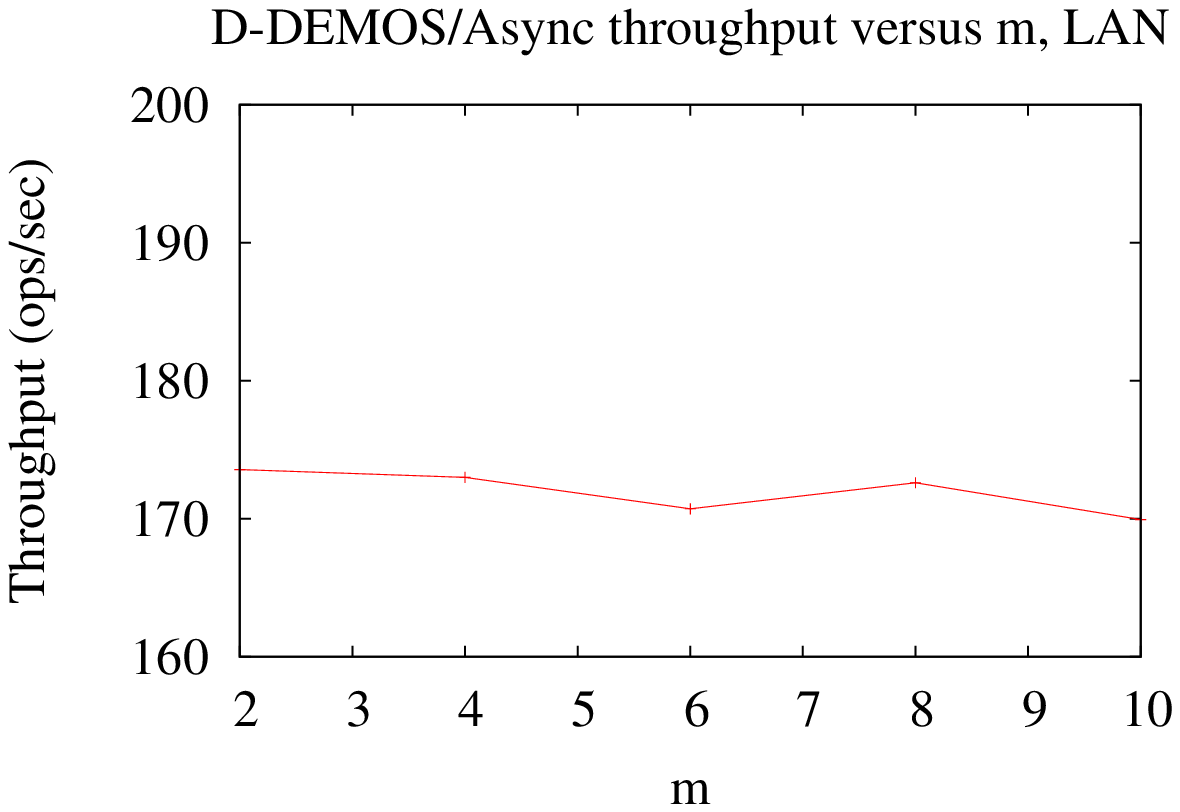}
    \label{fig:LANThroughputmAsync}
  }
  \caption{Vote collection throughput graphs for D-DEMOS/IC (\ref{fig:LANThroughputmIC}) and D-DEMOS/Async(\ref{fig:LANThroughputmAsync}), versus the number of election options $m$.} 
  \label{fig:LANThroughputmAll}
}
\end{figure*}

In our second experiment, we explore the effect of $m$, i.e., the number of election options, on system performance.
We vary the number of options from $m=2$ to $m=10$. 
Each election has a total of $n=200,000$ ballots which we spread evenly across 400 concurrent clients. 
As illustrated in Figure \ref{fig:LANThroughputmAll}, our vote collection protocol
manages to deliver approximately the same throughput regardless of the value of $m$, for both D-DEMOS/IC and D-DEMOS/Async. 
Notice that the major extra overhead $m$ induces during vote collection, is the increase in the number of hash verifications during vote code validation, as there are more vote codes per ballot. The increase in number of options has a minor impact on the database size as well (as each ballots has $2m$ options).

Next, we evaluate the scalability of our vote collection protocol by varying the number of vote collectors and concurrent clients.
We eliminate the database, by caching the election data in memory and servicing voters from the cache, to measure the net communication and processing costs of our voting protocol.
We vary the number of VC nodes from 4 to 16, and distribute them across the 4 physical machines. 
Note that, co-located nodes are unable to produce vote receipts via local messages only, since the $N_{v}-f_{v}$ threshold cannot be satisfied, i.e., cross-machine communication is still the dominant factor in receipt generation.
For election data, we use the dataset with  $n=200,000$ ballots and $m=4$ options, which is enough for our system to reach its steady state. 

\begin{figure*}
\centering
{
  \subfloat[]
  {
    \includegraphics[width=0.50\textwidth]{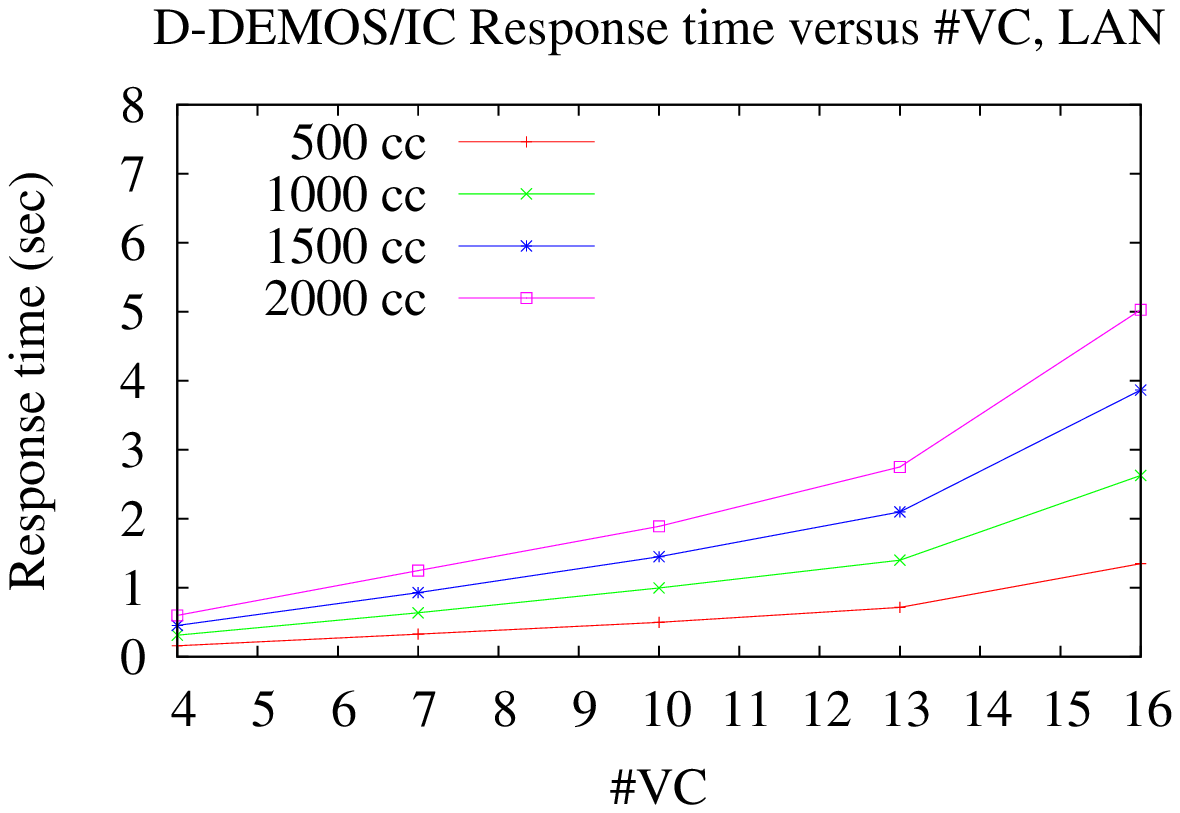}
    \label{fig:LANResponseTimeVCIC}
  }
  \subfloat[]
  {
    \includegraphics[width=0.50\textwidth]{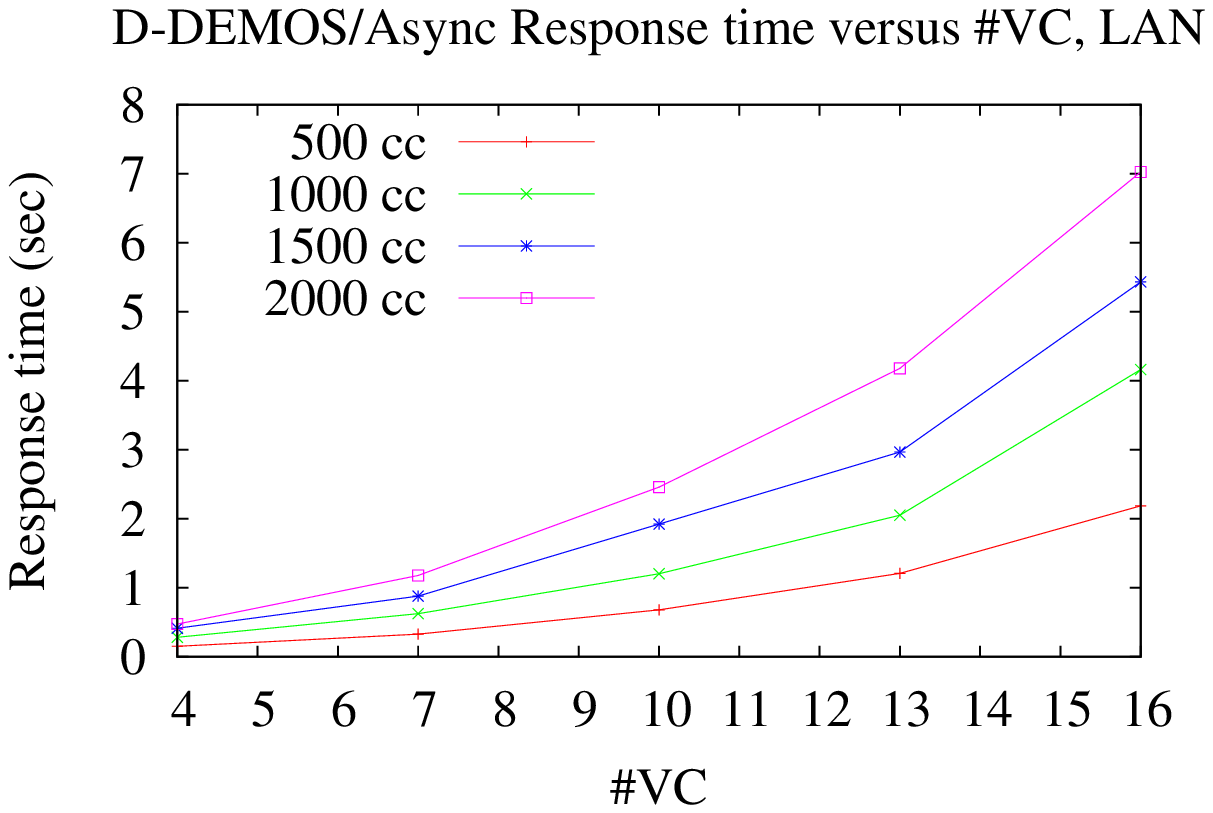}
    \label{fig:LANResponseTimeVCAsync}
  }
  \caption{Vote Collection response time of D-DEMOS/IC (\ref{fig:LANResponseTimeVCIC}) and D-DEMOS/Async (\ref{fig:LANResponseTimeVCAsync}), versus the number of \VC{} nodes, under a LAN setting. Election parameters are $n$ = 200,000 and $m$ = 4.} 
  \label{fig:LANResponseTimeVCAll}
}
\end{figure*}
In Figure~\ref{fig:LANResponseTimeVCAll}, we plot the average response time of both our vote collection protocols, versus the number of vote collectors, under different concurrency levels, ranging from 500 to 2000 concurrent clients.
Results for both systems illustrate an almost linear increase in the client-perceived latency, for all concurrency scenarios, up to 13 \VC{} nodes. 
From this point on, when four logical \VC{} nodes are placed on a single physical machine, we notice a non-linear increase in latency.
We attribute this to the overloading of the memory bus, a resource shared among all processors of the system, which services all (in-memory) database operations.
D-DEMOS/IC has a slower response time with its single round intra-\VC{} node communication, while D-DEMOS/Async is slightly slower due to the extra Uniqueness Certificate producing round. 

\begin{figure*}
\centering
{
  \subfloat[]
  {
    \includegraphics[width=0.50\textwidth]{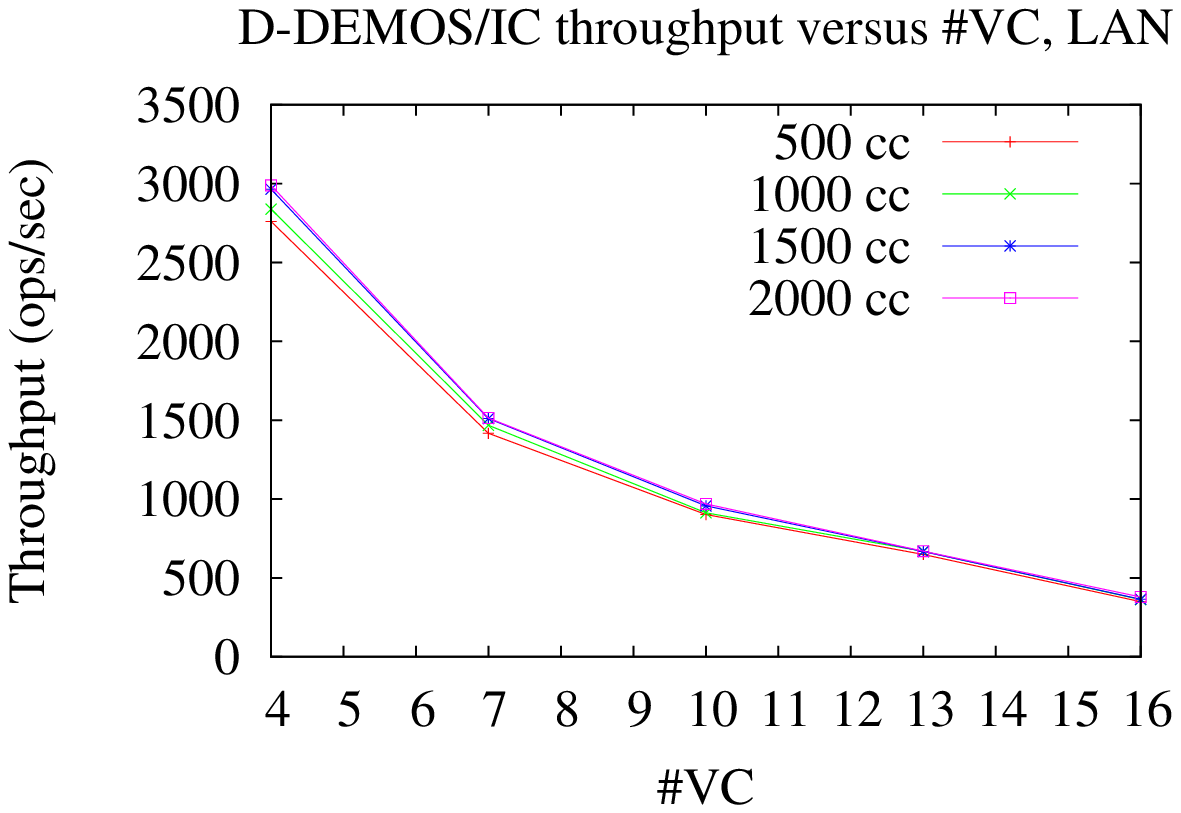}
    \label{fig:LANThroughputVCIC}
  }
  \subfloat[]
  {
    \includegraphics[width=0.50\textwidth]{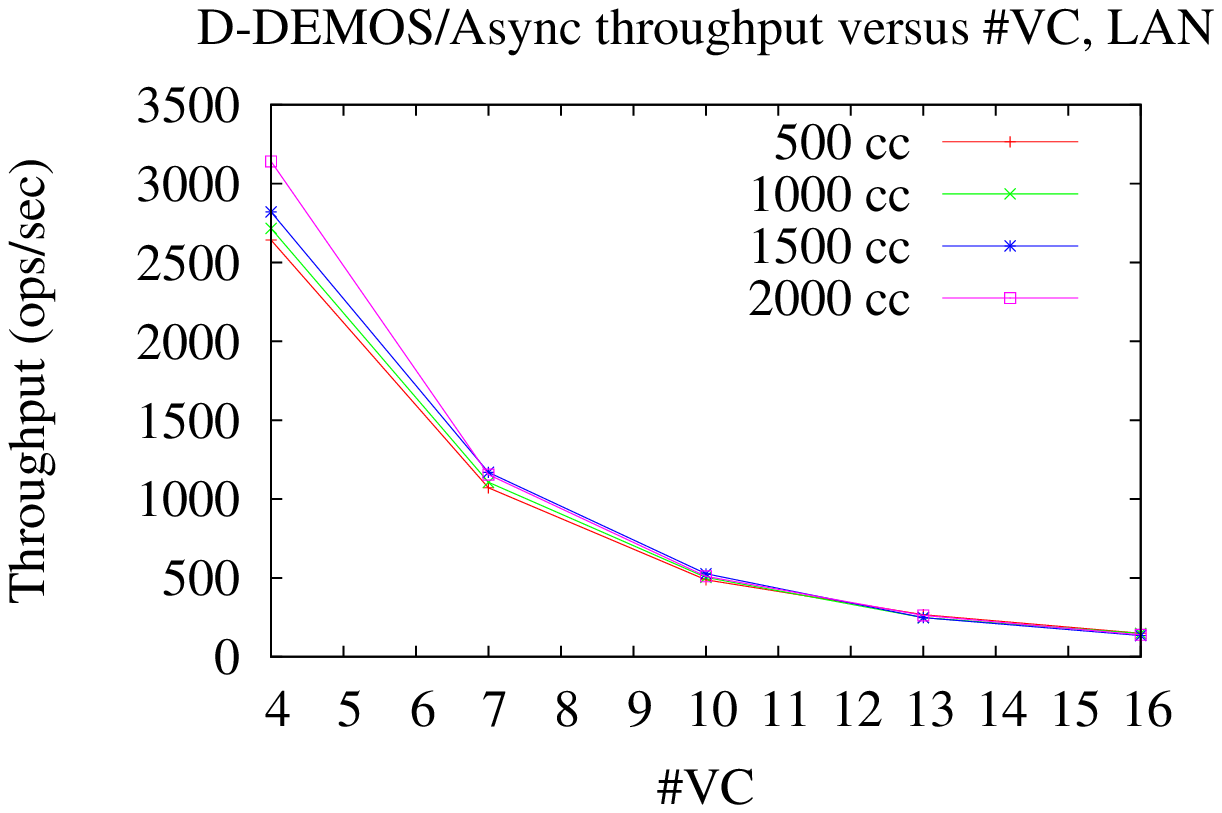}
    \label{fig:LANThroughputVCAsync}
  }
  \caption{Vote Collection throughput of D-DEMOS/IC (\ref{fig:LANThroughputVCIC}) and D-DEMOS/Async (\ref{fig:LANThroughputVCAsync}), versus the number of \VC{} nodes, under a LAN setting. Election parameters are $n$ = 200,000 and $m$ = 4.} 
  \label{fig:LANThroughputVCAll}
}
\end{figure*}
Figure~\ref{fig:LANThroughputVCAll} shows the throughput of both our vote collection protocols, versus the number of vote collectors, under different concurrency levels.
We observe that, in terms of overall system throughput, the penalty of tolerating extra failures (increasing the number of vote collectors) manifests early on. 
We notice an almost 50\% decline in system throughput from 4 to 7 \VC{} nodes for D-DEMOS/IC, and a bigger one for D-DEMOS/Async. 
However, further increases in the number of vote collectors lead to a much smoother, linear decrease. 
Overall, D-DEMOS/IC achieves better throughput than D-DEMOS/Async, due to exchanging fewer messages and lacking signature operations. 

\begin{figure*}
\centering
{
  \subfloat[]
  {
    \includegraphics[width=0.50\textwidth]{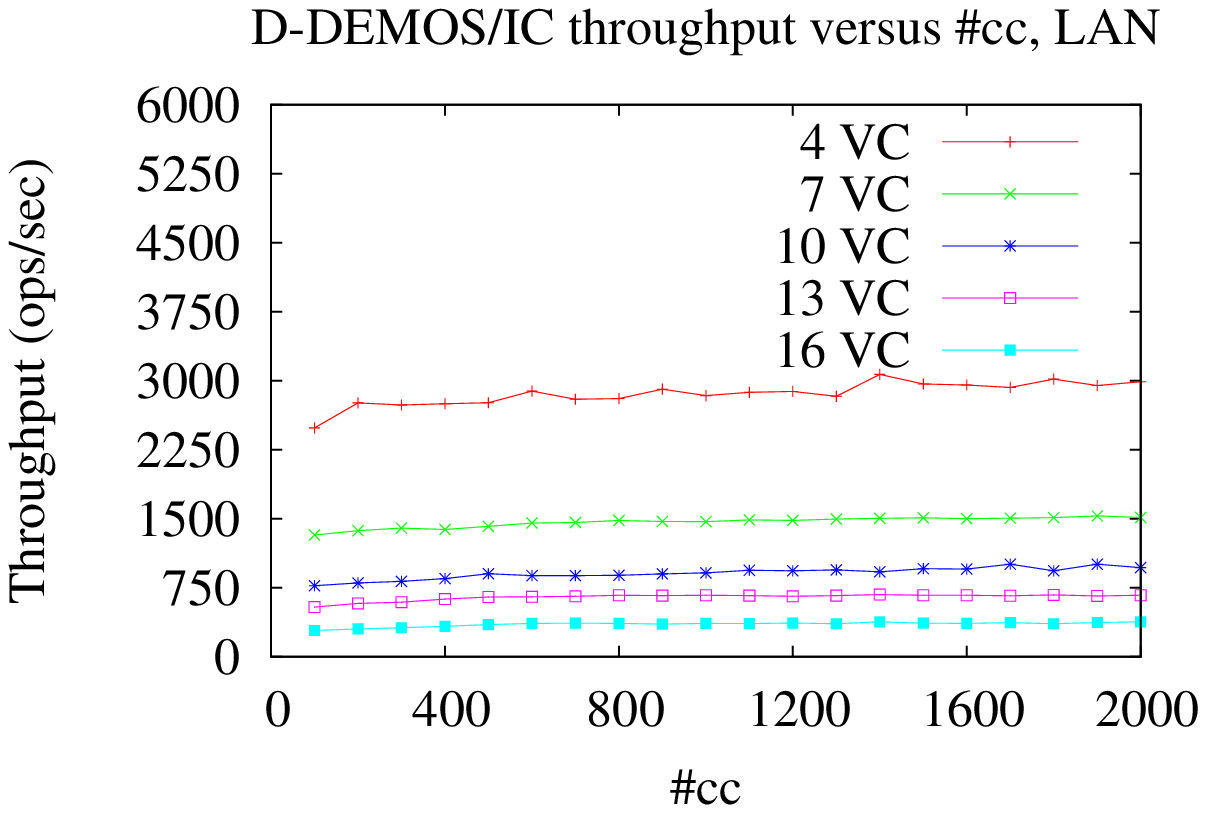}
    \label{fig:LANThroughputCCIC}
  }
  \subfloat[]
  {
    \includegraphics[width=0.50\textwidth]{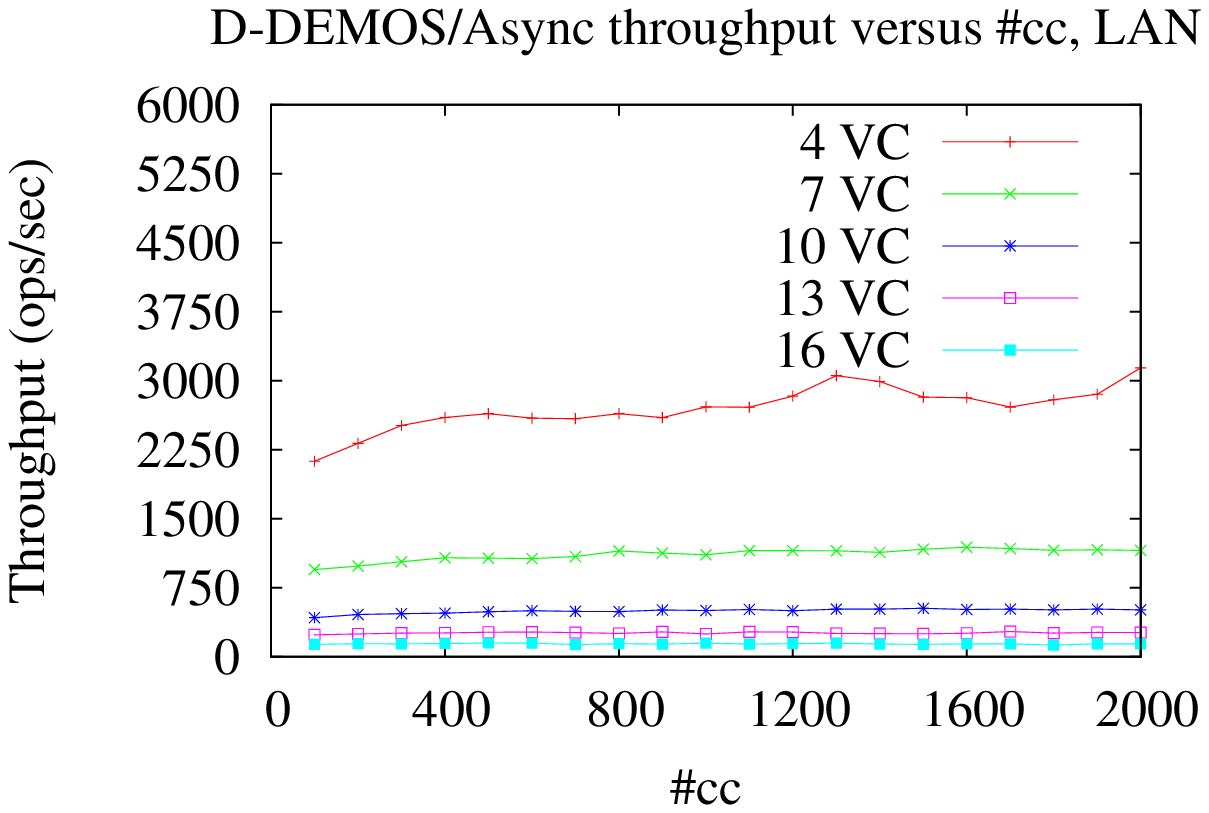}
    \label{fig:LANThroughputCCAsync}
  }
  \caption{Vote Collection throughput of D-DEMOS/IC (\ref{fig:LANThroughputCCIC}) and D-DEMOS/Async (\ref{fig:LANThroughputCCAsync}), versus the number of concurrent clients, under a LAN setting. Plots illustrate performance for different cardinalities of \VC{} nodes, thus different fault tolerance settings. Election parameters are $n$ = 200,000 and $m$ = 4.} 
  \label{fig:LANThroughputCCAll}
}
\end{figure*}
In Figure~\ref{fig:LANThroughputCCAll}, we plot a different view of both our systems' throughput, this time versus the concurrency level (ranging from 100 to 2000). 
Plots represent number of \VC{} node settings (4 to 16), thus different fault tolerance levels.
Results show both our systems have the nice property of delivering nearly constant throughput, regardless of the incoming request load, for a given number of \VC{} nodes.

\begin{figure*}
\centering
{
  \subfloat[]
  {
    \includegraphics[width=0.50\textwidth]{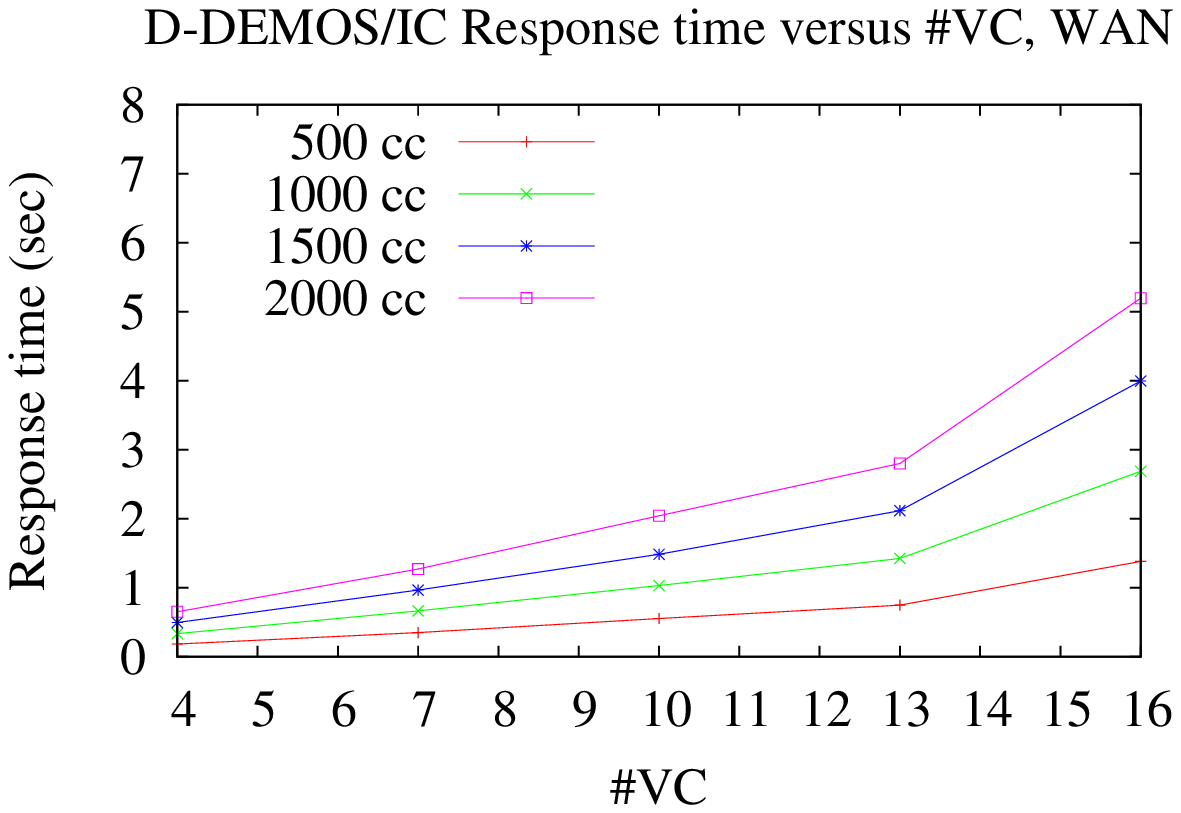}
    \label{fig:WANResponseTimeVCIC}
  }
  \subfloat[]
  {
    \includegraphics[width=0.50\textwidth]{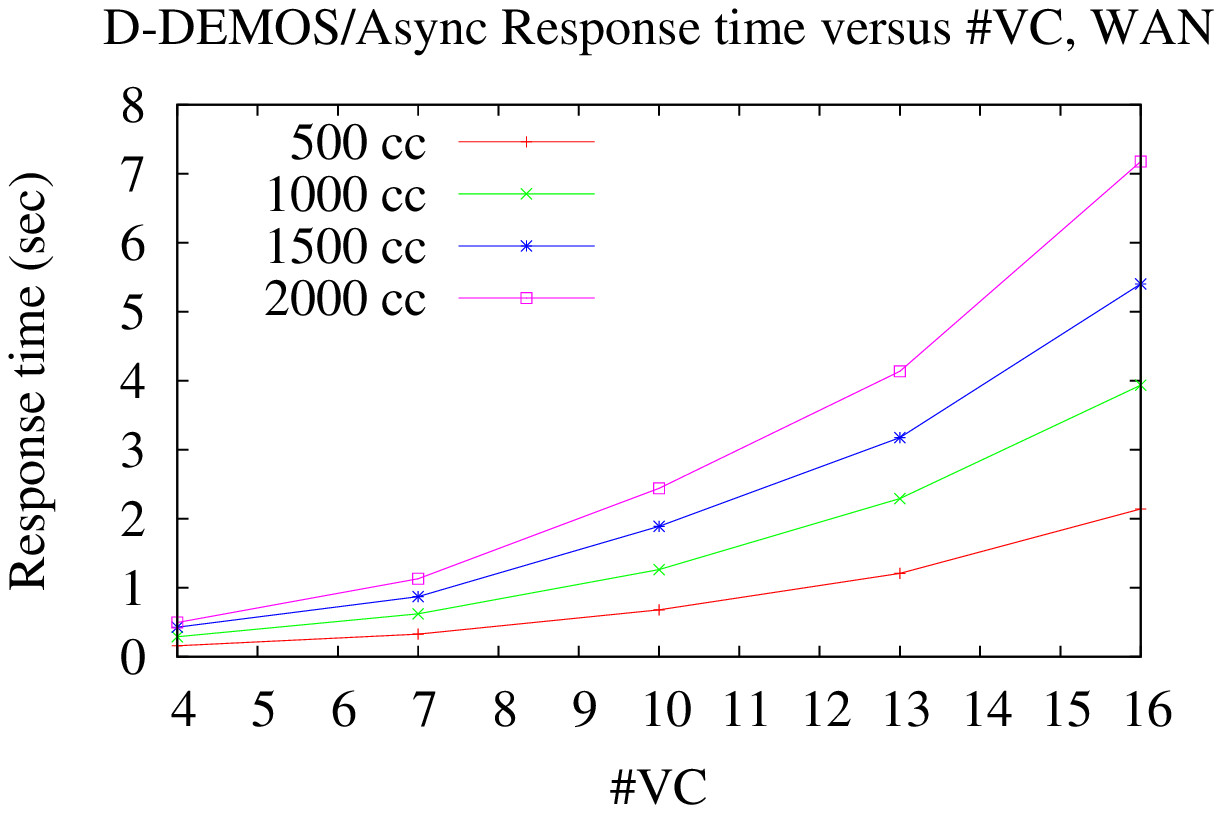}
    \label{fig:WANResponseTimeVCAsync}
  }
  \caption{Vote Collection response time of D-DEMOS/IC (\ref{fig:WANResponseTimeVCIC}) and D-DEMOS/Async (\ref{fig:WANResponseTimeVCAsync}), versus the number of \VC{} nodes, under a WAN setting. Election parameters are $n$ = 200,000 and $m$ = 4.} 
  \label{fig:WANResponseTimeVCAll}
}
\end{figure*}
\begin{figure*}
\centering
{
  \subfloat[]
  {
    \includegraphics[width=0.50\textwidth]{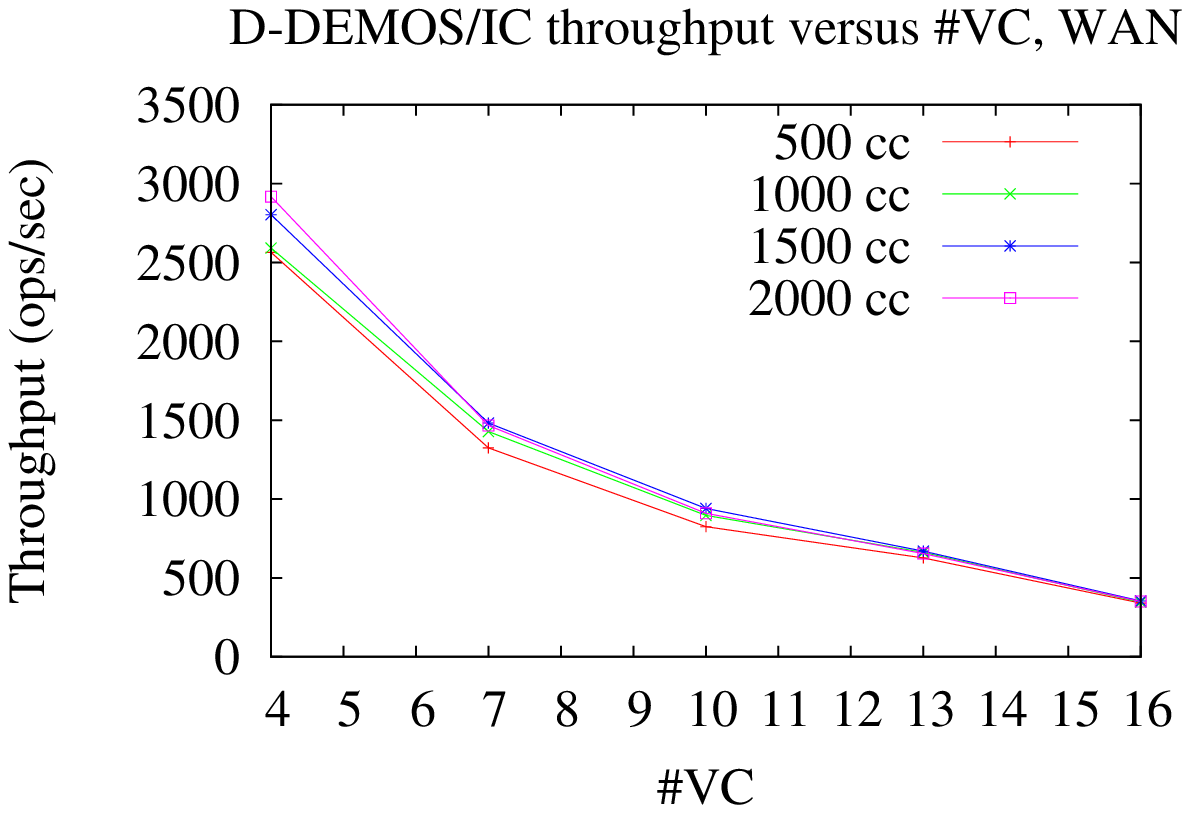}
    \label{fig:WANThroughputVCIC}
  }
  \subfloat[]
  {
    \includegraphics[width=0.50\textwidth]{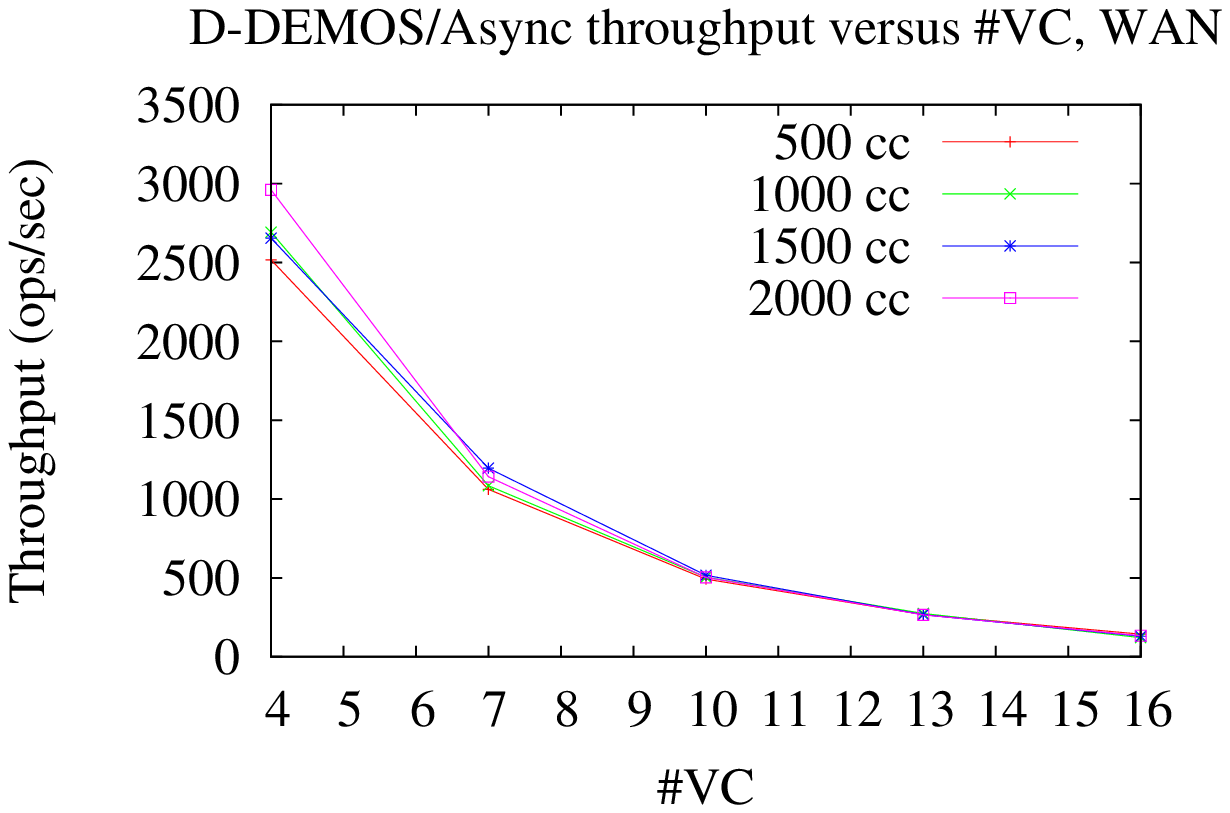}
    \label{fig:WANThroughputVCAsync}
  }
  \caption{Vote Collection throughput of D-DEMOS/IC (\ref{fig:WANThroughputVCIC}) and D-DEMOS/Async (\ref{fig:WANThroughputVCAsync}), versus the number of \VC{} nodes, under a WAN setting. Election parameters are $n$ = 200,000 and $m$ = 4.} 
  \label{fig:WANThroughputVCAll}
}
\end{figure*}
\begin{figure*}
\centering
{
  \subfloat[]
  {
    \includegraphics[width=0.50\textwidth]{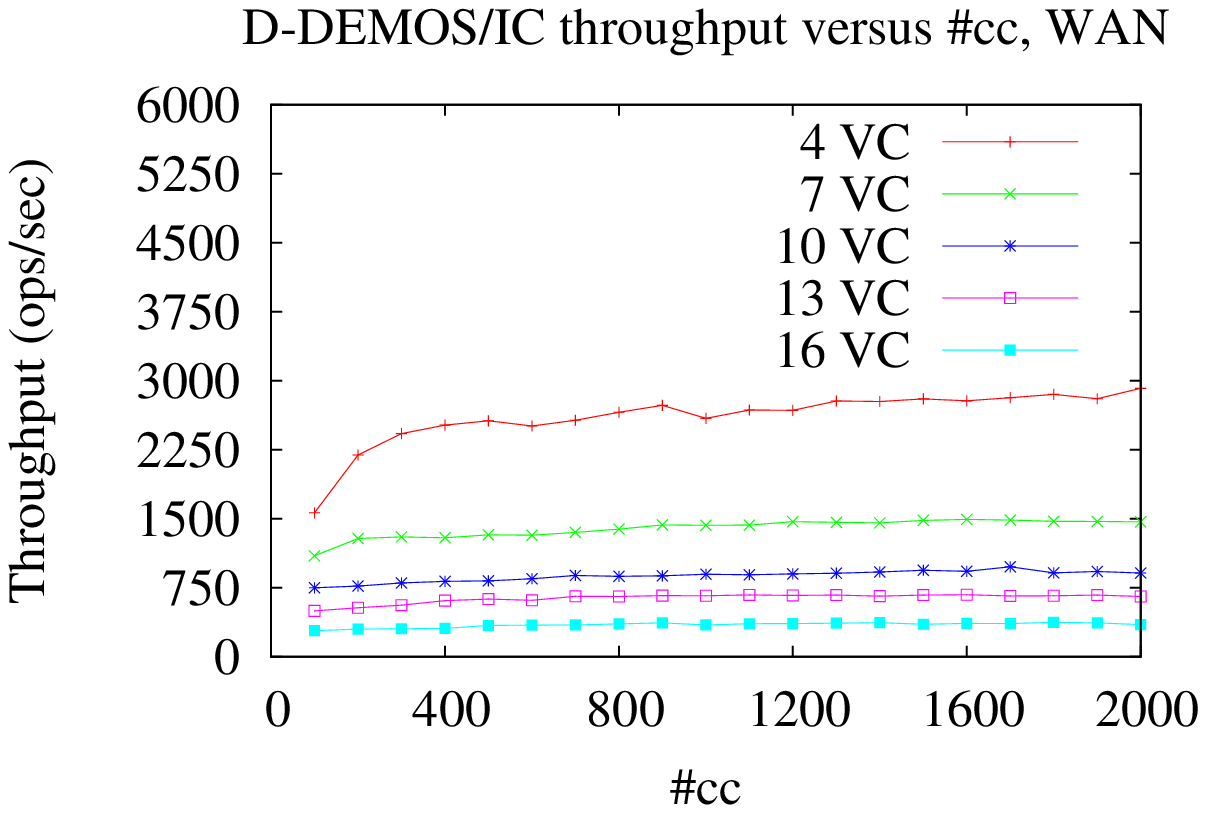}
    \label{fig:WANThroughputCCIC}
  }
  \subfloat[]
  {
    \includegraphics[width=0.50\textwidth]{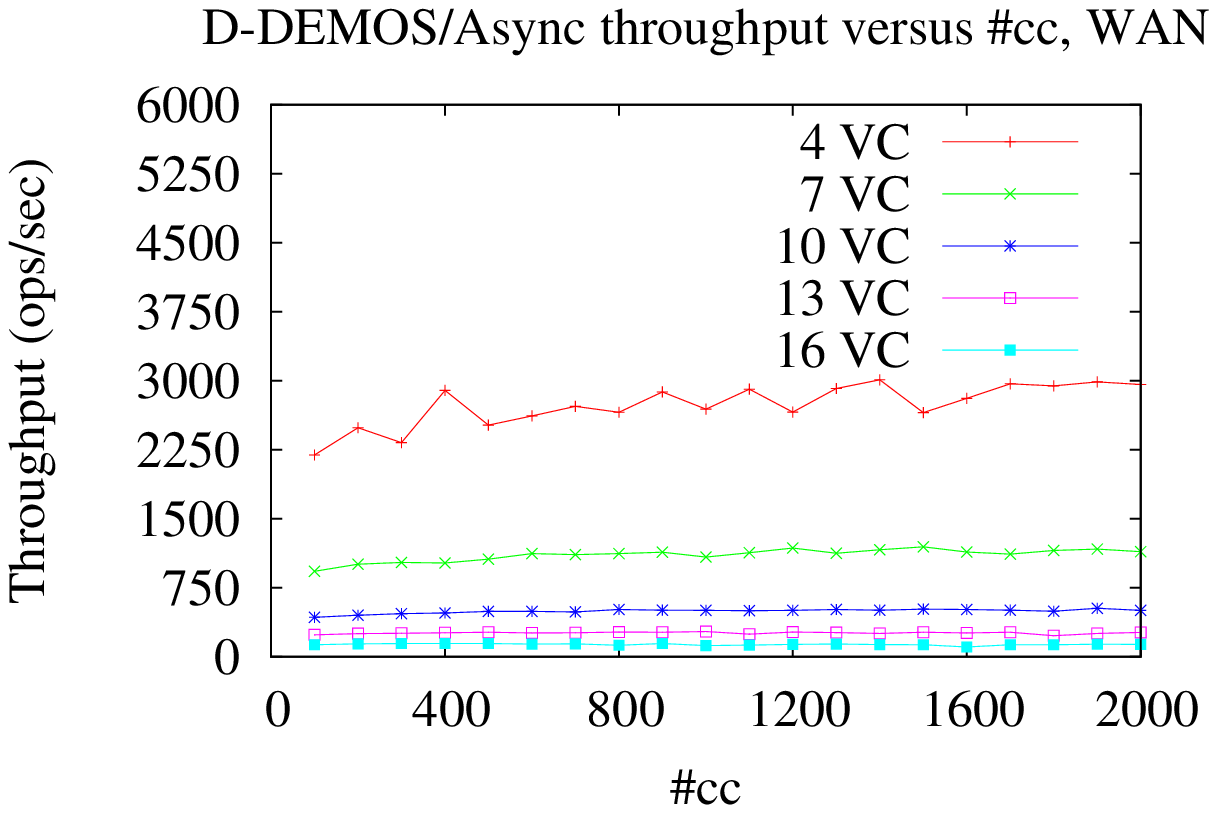}
    \label{fig:WANThroughputCCAsync}
  }
  \caption{Vote Collection throughput of D-DEMOS/IC (\ref{fig:WANThroughputCCIC}) and D-DEMOS/Async (\ref{fig:WANThroughputCCAsync}), versus the number of concurrent clients, under a WAN setting. Plots illustrate performance for different cardinalities of \VC{} nodes, thus different fault tolerance settings. Election parameters are $n$ = 200,000 and $m$ = 4.} 
  \label{fig:WANThroughputCCAll}
}
\end{figure*}

We repeat the same experiment by emulating a WAN environment using \emph{netem}~\cite{Netem}, a network emulator for Linux.
We inject a uniform latency of 25ms (typical for US coast-to-coast communication~\cite{coast2coast}) for each network packet exchanged between vote collector nodes, and present our results in Figures \ref{fig:WANResponseTimeVCAll}, \ref{fig:WANThroughputVCAll}, and \ref{fig:WANThroughputCCAll}.
A simple comparison between LAN and WAN plots illustrates our system manages to deliver the same level of throughput and average response time, regardless of the increased intra-\VC{} communication latency.

The benefits of the in memory approach, expressed both in terms of sub-second client (voter) response time and increased system throughput, make it an attractive alternative to the more standard database setup. 
For instance, in cases where high-end server machines are available, it would be possible to service mid to large scale elections completely from memory. 
We estimate the size of the in-memory representation of a $n=200K$ ballot election, with $m=4$ options, at approximately 322MB (see~\cite{ManeasThesis} for derivation details).
In this size, we include 64-bit Java pointers overhead, as we are using simple hash-maps of plain old Java classes. 
This size can be decreased considerably in a more elaborate implementation, where data is serialized by Google Protocol Buffers, for example.

\begin{figure*}
\centering
{
  \subfloat[]
  {
    \includegraphics[width=0.50\textwidth]{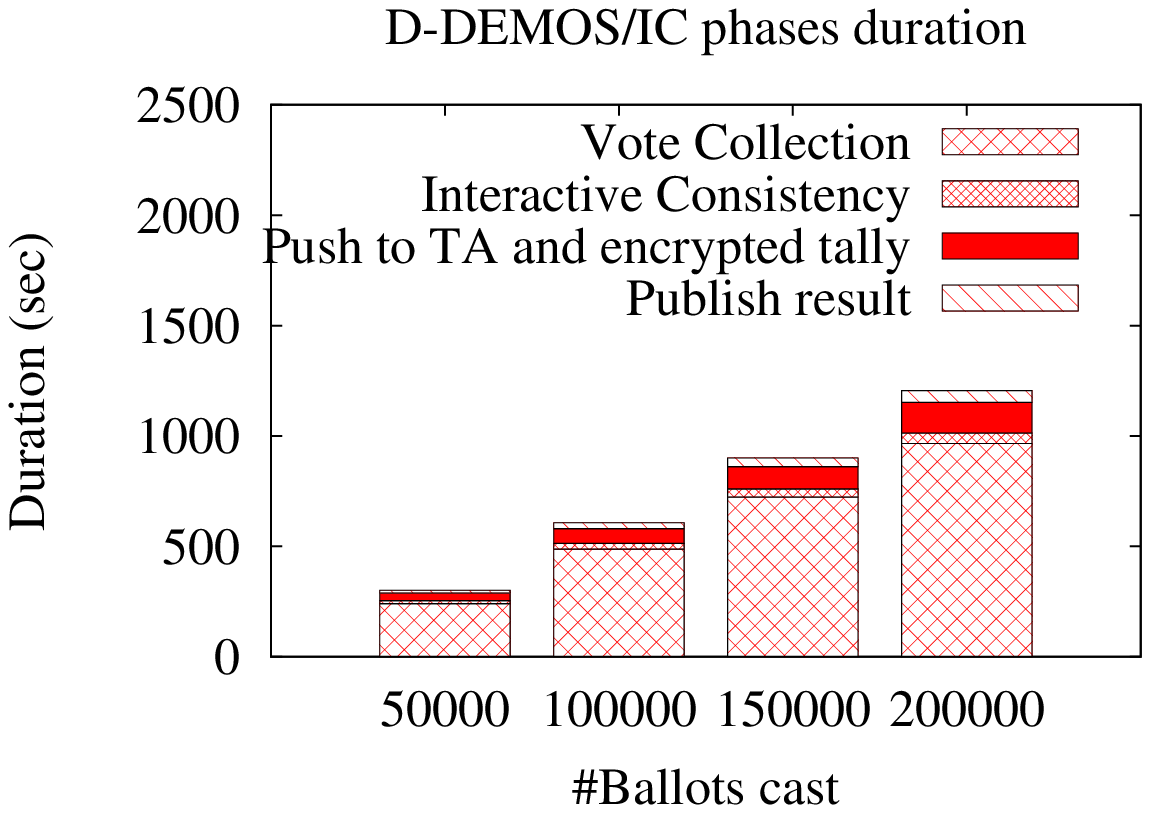}
    \label{fig:OverallLatencyBarIC}
  }
  \subfloat[]
  {
    \includegraphics[width=0.50\textwidth]{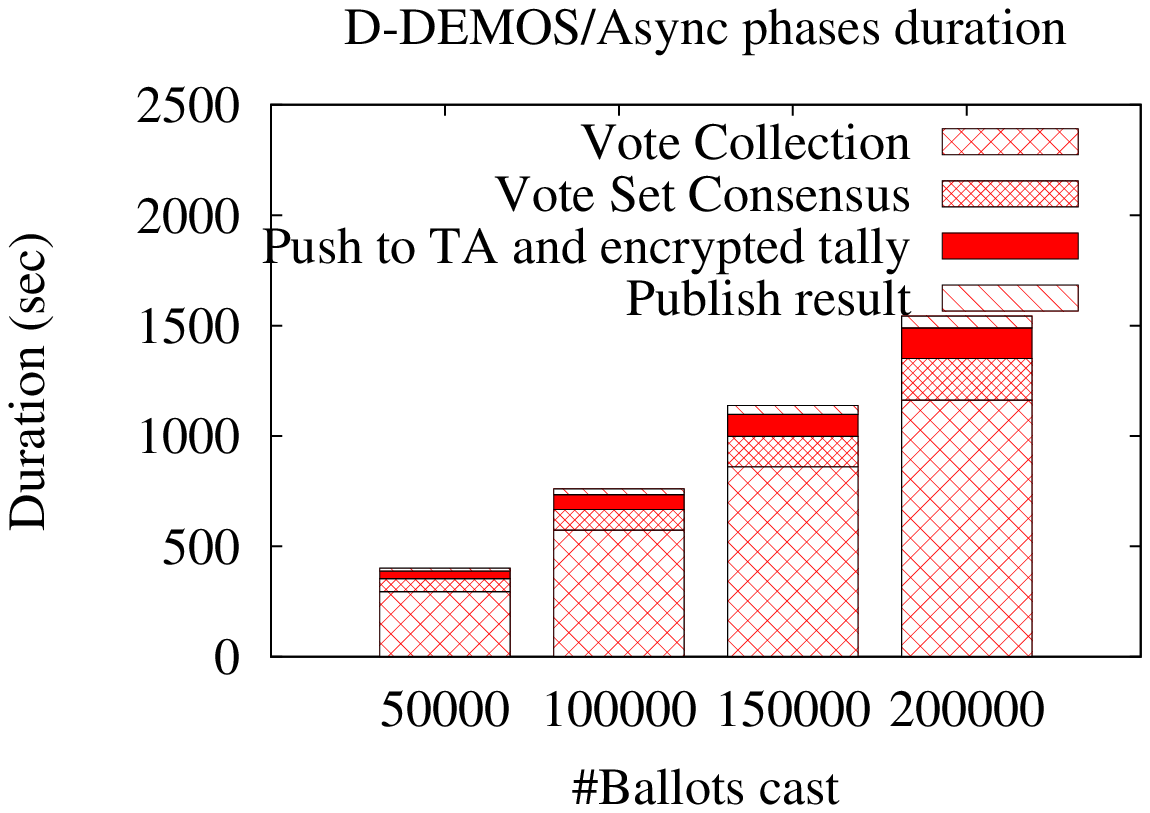}
    \label{fig:OverallLatencyBarAsync}
  }
  \caption{This figure illustrates the duration of all system phases. Results depicted are for 4 VCs, $n$ = 200,000 and $m$ = 4. All phases are disk based.} 
  \label{fig:OverallLatencyBarAll}
}
\end{figure*}

Finally, in Figure \ref{fig:OverallLatencyBarAll}, we illustrate a breakdown of the duration of each phase of the complete voting system (D-DEMOS/IC and D-DEMOS/Async), versus the total number of ballots cast. 
We assume immediate phase succession, i.e., the vote collection phase ends when all votes have been cast, at which point the vote set consensus phase starts, and so on. 
The ``Push to BB and encrypted tally'' phase is the time it takes for the vote collectors to push the final vote code set to the BB nodes, including all actions necessary by the BB to calculate and publish the encrypted result.  
The ``Publish result'' phase is the time it takes for \trustees{} to calculate and push their share of the opening of the  final tally to the BB, and for the BB to publish the final tally.   
Note that, in most voting procedures, the vote collection phase would in reality last several hours and even days as stipulated by national law (see Estonia voting system). 
Thus, looking only at the post-election phases of the system, we see that the time it takes to publish the final tally on the BB is quite fast. 
Comparing the two versions of D-DEMOS, we observe D-DEMOS/IC is faster during both Vote Collection and Vote Set Consensus phases.
This is expected, because of the extra communication round of D-DEMOS/Async during voting, as well as the more complex consensus-per-ballot approach to achieving Vote Set Consensus. 
However, D-DEMOS/Async is more robust than D-DEMOS/IC, as it does not require any kind of synchronization between nodes.

Overall, although we introduced Byzantine Fault Tolerance across all phases of a voting system (besides setup), we demonstrate it achieves high performance, enough to run real-life elections of large electorate bodies.

\section{Conclusion and future work}\label{section:conclusion}
We have presented the world's first suite of state-of-the-art, end-to-end verifiable, distributed internet voting systems with no single point of failure besides setup. 
Both systems allow voters to verify their vote was tallied-as-intended without the assistance of special software or trusted devices, and external auditors to verify the correctness of the election process. 
Additionally, the systems allows voters to delegate auditing to a third party auditor, without sacrificing their privacy.
We have provided a model and security analysis of both voting systems.
Finally, we have implemented prototypes of the integrated systems, measured their performance, and demonstrated their ability to handle large-scale elections.  

We have used our system to conduct exit polls at three large voting sites for two national-level elections.  
We look forward to gaining more experience and feedback about our systems by exploring their use in election and decision-making procedures at all levels throughout 
the Greek university system, and studying their adoption for use by the General Confederation of Greek Workers, the largest civil union of workers in Greece.  
Finally, our systems currently support only \emph{1-out-of-m} elections, in which voters choose one out of m options from their ballots. 
As future work, we will expand our systems to support \emph{k-out-of-m} elections.

\ifextended
\\
\fi
\\\noindent
\textbf{Acknowledgements:}
This work was partially supported by ERC Starting Grant \# 279237 and by the FINER project funded by the Greek Secretariat of Research and Technology under action "ARISTEIA 1".
\ifextended
\\
\fi
We thank Vasileios Poulimenos for his effort in developing the Android application for the trustees interface.
\ifextended
\\
\fi
A preliminary version of this work appears in IEEE ICDCS 2016~\cite{DDEMOS@ICDCS2016}.

\appendix
\section{Security of D-Demos}\label{sec:security_full}
In this section, we present at length the security properties that D-DEMOS achieves.
Specifically, we show that D-DEMOS/IC
\ifextended
and D-DEMOS/Async achieve 
\else
achieves 
\fi
liveness and safety, according to which every voter that submits her vote prior to a well-defined time threshold, will obtain a valid receipt (liveness) and her vote will be included in the election tally and published in the BB (safety contract). 
 In addition, 
\ifextended
 both versions achieve 
\else
 the system achieves
\fi 
 end-to-end verifiability and voter privacy at the same level as~\cite{DEMOS}\footnote{In~\cite{DEMOS}, the authors use the term \emph{voter privacy/receipt-freeness}, but they actually refer to the same property.}, thus allowing a top-tier integrity guarantee without compromising secrecy.
\ifextended
\else
Due to lack of space, we omit the proofs specific to D-DEMOS/Async, and provide only a sketch of the end-to-end verifiability proof, which applies to both systems. 
We refer the interested reader to the extended version of this paper~\cite{extended} for the full proofs for both systems.
\fi
\smallskip\par
We use $m$, $n$ to denote the number of options and voters respectively. We denote by $\lambda$ the cryptographic security parameter and we write $\mathsf{negl}(\lambda)$ to denote that a function is negligible in $\lambda$, i.e., it is asymptotically smaller than the inverse of any polynomial in  $\lambda$. 
\smallskip\par
The remaining sections reference heavily the Cryptographic Tools section (\ref{subsec:sec_tools}), which includes the notions and claims about the security of the cryptographic tools we use in the two versions of D-DEMOS. 

%
%
 %
\subsection{Liveness}\label{subsec:sec_liveness}
%

To prove the liveness that D-DEMOS guarantees, we assume (i) an upper bound $\delta$ on the delay of the delivery of messages and (ii) an upper bound $\Delta$ on the drift of all clocks (see Assumptions~\textbf{B} and \textbf{C} in Section~\ref{subsec:threat}). Furthermore, to express liveness rigorously, we formalize the behavior of honest voters regarding maximum waiting before vote resubmission as follows: 
\begin{definition}[$\mathrm{[}d\mathrm{]}$-\textsc{\textbf{Patience}}]\label{dfnt:patient}
Let $V$ be an honest voter that submits her vote at some VC node when $\Cl[V]=T$. We say that $V$ is $[d]$-\emph{patient}, when the following condition holds:
\noindent If $V$ does not obtain a valid receipt by the time that $\Cl[V]=T+d$, then she will blacklist this VC node and submit the same vote to another randomly selected VC node.

\end{definition}
%
%
\subsubsection{Liveness of D-DEMOS/IC}\label{subsubsec:liveveness_IC}
\par
\noindent Using Definition~\ref{dfnt:patient}, we prove the liveness of D-DEMOS/IC in the following theorem. A crucial step in the proof, is to compute an upper bound on the time required for an honest responder $VC$ node to issue a receipt to $V$. This bound will be derived by the time upper bounds that correspond to each step of the voting protocol, as described in Sections~\ref{subsec:VcnodesIC} and~\ref{subsec:voter}, taking also into account the $\Delta,\delta$ upper bounds.
\ifextended
In Fig.~\ref{tab:liveness-IC}, we provide upper bounds on the advance of the global clock and the internal clocks of $V$ and the VC nodes, so that we illustrate the description of the computation described below.
 \begin{figure}[ht]
 \footnotesize
\begin{center}
 \begin{tabular}{|M{2.7cm}||M{2.2cm}|M{2.2cm}|M{2.2cm}|M{2.2cm}|}
 \hline
\multirow{2}{*}{\small{\textbf{Step}}}&\multicolumn{4}{c|}{\small{\textbf{Time upper bounds at each clock}}}\\\cline{2-5}
&\small{$\Cl$}&\small{$\Cl[V]$}&\small{$\Cl[VC]$}&\small{honest VC nodes' clocks}\\
\hline\hline
$V$ is initialized & \cellcolor{cellgray}$T$ & $T$ & $T+\Delta$& $T+\Delta$\\
\hline
$V$ submits her vote to $VC$ & $ T+T_\mathsf{comp}+\Delta$ & \cellcolor{cellgray}$T+T_\mathsf{comp}$ & $T+T_\mathsf{comp}+2\Delta$& $T+T_\mathsf{comp}+2\Delta$\\
\hline
$VC$ receives $V$'s ballot & $ \cellcolor{cellgray}T+T_\mathsf{comp}+\Delta+\delta$ & $T+T_\mathsf{comp}+2\Delta+\delta$ & $T+T_\mathsf{comp}+2\Delta+\delta$&$T+T_\mathsf{comp}+2\Delta+\delta$\\
\hline
$VC$ verifies the validity of $V$'s ballot and broadcasts its share & $ T+2T_\mathsf{comp}+3\Delta+\delta$ & $T+2T_\mathsf{comp}+4\Delta+\delta$ & \cellcolor{cellgray}$T+2T_\mathsf{comp}+2\Delta+\delta$&$T+2T_\mathsf{comp}+4\Delta+\delta$\\
\hline
All the other honest VC nodes receive $VC$'s share& \cellcolor{cellgray}$T+2T_\mathsf{comp}+3\Delta+2\delta$ & $T+2T_\mathsf{comp}+4\Delta+2\delta$ & $T+2T_\mathsf{comp}+4\Delta+2\delta$&$T+2T_\mathsf{comp}+4\Delta+2\delta$\\
\hline
All the other honest VC nodes verify the validity of $V$'s share and broadcast their shares &$T+3T_\mathsf{comp}+5\Delta+2\delta$ & $T+3T_\mathsf{comp}+6\Delta+2\delta$ & $T+3T_\mathsf{comp}+6\Delta+2\delta$& \cellcolor{cellgray}$T+3T_\mathsf{comp}+4\Delta+2\delta$\\
\hline
$VC$ receives all the $N_v-1$ other honest VC nodes' shares & \cellcolor{cellgray}$T+3T_\mathsf{comp}+5\Delta+3\delta$ & $T+3T_\mathsf{comp}+6\Delta+3\delta$ & $T+3T_\mathsf{comp}+6\Delta+3\delta$&$T+3T_\mathsf{comp}+6\Delta+3\delta$\\
\hline
$VC$ verifies the validity of all the $N_v-1$ other honest VC nodes' shares & $T+(N_v+2)T_\mathsf{comp}+7\Delta+3\delta$ & $T+(N_v+2)T_\mathsf{comp}+8\Delta+3\delta$ & 
\cellcolor{cellgray}$T+(N_v+2)T_\mathsf{comp}+6\Delta+3\delta$&$T+(N_v+2)T_\mathsf{comp}+8\Delta+3\delta$\\
\hline
$VC$ reconstructs and sends $V$'s receipt & $T+(N_v+3)T_\mathsf{comp}+7\Delta+3\delta$ & $T+(N_v+3)T_\mathsf{comp}+8\Delta+3\delta$ & 
\cellcolor{cellgray}$T+(N_v+3)T_\mathsf{comp}+6\Delta+3\delta$&$T+(N_v+3)T_\mathsf{comp}+8\Delta+3\delta$\\
\hline
$V$ obtains her receipt & \cellcolor{cellgray}$T+(N_v+3)T_\mathsf{comp}+7\Delta+4\delta$ & $T+(N_v+3)T_\mathsf{comp}+8\Delta+4\delta$ & 
$T+(N_v+3)T_\mathsf{comp}+8\Delta+4\delta$&$T+(N_v+3)T_\mathsf{comp}+8\Delta+4\delta$\\
\hline
$V$ verifies the validity of her receipt &$T+(N_v+4)T_\mathsf{comp}+7\Delta+4\delta$ &\cellcolor{cellgray} $T+(N_v+4)T_\mathsf{comp}+8\Delta+4\delta$ & 
$T+(N_v+4)T_\mathsf{comp}+8\Delta+4\delta$&$T+(N_v+4)T_\mathsf{comp}+8\Delta+4\delta$\\
\hline
 \end{tabular}\end{center}
\caption{Time upper bounds at $\Cl,\Cl[V]$, $\Cl[VC]$ and other honest VC nodes' clocks at each step of the interaction of the voter $V$ with responder $VC$ during D-DEMOS/IC voting phase. The grayed cells indicate the reference point of the clock drifts at each step.}
\label{tab:liveness-IC}
 \end{figure}
\fi
\begin{theorem}[\textbf{Liveness of D-Demos/IC}]\label{thm:liveness_IC}
Consider a D-DEMOS/IC run with $n$ voters, $m$ options and $N_v$ VC nodes. Let $\A$ be an adversary against D-DEMOS/IC under the model described in Section~\ref{subsec:threat} that corrupts up to $f_v<N_v/3$ VC nodes.  Assume there is an upper bound $\Delta$ on clock synchronization loss and an upper bound $\delta$ on the delay of message delivery among honest VC nodes. Let $T_\mathsf{comp}$ be the worst-case running time of any procedure run by the VC nodes and the voters described in Sections~\ref{subsec:VcnodesIC} and~\ref{subsec:voter} respectively, during the voting protocol.
 \par Let $T_\mathsf{end}$ denote the election end time. Define
 \[T_\mathsf{wait}:=(N_v+4)T_\mathsf{comp}+8\Delta+4\delta\;.\] 
 Then, the following conditions hold:
 \begin{enumerate}
   \item\label{liveness-1-IC} Every $[T_\mathsf{wait}]$-patient voter $V$ that is engaged in the voting protocol by the time that $\Cl[V]=T_\mathsf{end}-(f_v+1)\cdot T_\mathsf{wait}$, will obtain a valid receipt.
   \item\label{liveness-2-IC} Every $[T_\mathsf{wait}]$-patient voter $V$ that is engaged in the voting protocol by the time that  
  $\Cl[V]=T_\mathsf{end}-y\cdot T_\mathsf{wait}$, where $y\in[f_v]$,
  will obtain a valid receipt with more than $1-3^{-y}$ probability.
 \end{enumerate}

\end{theorem}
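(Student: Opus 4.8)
The argument factors into a deterministic timing analysis of a single interaction with an \emph{honest} responder, followed by a short probabilistic layer that exploits $[T_\mathsf{wait}]$-patience together with $f_v<N_v/3$. For the first stage I would establish the following: if a $[T_\mathsf{wait}]$-patient voter $V$ submits her vote to an honest responder $VC$ at local time $\Cl[V]=\tau$, then $V$ holds a valid receipt by local time $\tau+T_\mathsf{wait}$, and throughout the exchange no honest VC node's internal clock exceeds $\tau+T_\mathsf{wait}$. This is obtained by tracing the voting protocol of Sections~\ref{subsec:VcnodesIC} and~\ref{subsec:voter} hop by hop, exactly as tabulated in Fig.~\ref{tab:liveness-IC}: $V$'s message reaches $VC$ within $\delta$; $VC$ validates the vote code and multicasts its receipt share ($+T_\mathsf{comp}$); the remaining honest VC nodes receive, validate and multicast their shares ($+\delta+T_\mathsf{comp}$); $VC$ collects the shares of all $N_v-f_v$ honest VC nodes (there are that many, since at most $f_v$ are corrupt), verifies them and reconstructs the receipt (another $\Theta(N_v)\cdot T_\mathsf{comp}$), which reaches $V$ within a final $\delta$ and is verified by $V$ ($+T_\mathsf{comp}$). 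Adding the at-most-$\Delta$ drift of each honest clock at every step accumulates to the claimed $T_\mathsf{wait}=(N_v+4)T_\mathsf{comp}+8\Delta+4\delta$. The key corollary is that, because an honest responder always delivers within $T_\mathsf{wait}$, a $[T_\mathsf{wait}]$-patient voter never blacklists an honest VC node.

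For the second stage, recall (Section~\ref{subsec:voter}) that $V$ repeatedly picks a not-yet-blacklisted VC node uniformly at random; by the corollary above she blacklists only corrupt nodes, so the sequence of nodes she contacts is a prefix of a uniformly random ordering of the $N_v$ VC nodes, and each contact costs at most $T_\mathsf{wait}$ of her local clock. Condition~\ref{liveness-1-IC}: as there are at most $f_v$ corrupt nodes, her $(f_v+1)$-th contact is honest; since she is engaged by $\Cl[V]=T_\mathsf{end}-(f_v+1)T_\mathsf{wait}$, that contact begins at local time $\le T_\mathsf{end}-T_\mathsf{wait}$, hence completes with a valid receipt by local time $T_\mathsf{end}$ while all honest VC clocks involved remain below $T_\mathsf{end}$, so the protocol is not aborted. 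Condition~\ref{liveness-2-IC}: a voter engaged by $\Cl[V]=T_\mathsf{end}-yT_\mathsf{wait}$ can afford $y$ full contacts, and by Stage~1 she fails to obtain a receipt only if all $y$ of these distinct random contacts land on corrupt nodes, an event of probability
\[
\prod_{i=0}^{y-1}\frac{f_v-i}{N_v-i}\;\le\;\left(\frac{f_v}{N_v}\right)^{y}\;<\;3^{-y},
\]
where the first inequality uses $\tfrac{f_v-i}{N_v-i}\le\tfrac{f_v}{N_v}$ (valid since $f_v\le N_v$) and the second uses $f_v<N_v/3$. Hence she obtains a valid receipt with probability $>1-3^{-y}$.

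The main obstacle is entirely in Stage~1: carefully justifying the hop-by-hop clock bookkeeping of Fig.~\ref{tab:liveness-IC} — that the $\Delta$ and $\delta$ contributions accumulate as claimed, and that, on the successful attempt, no honest VC node passes its local election-end time before the receipt is produced. Once $T_\mathsf{wait}$ is validated as a worst-case honest-responder latency, the rest is an immediate counting argument: a deterministic bound of $f_v+1$ attempts for Condition~\ref{liveness-1-IC}, and the geometric estimate above for Condition~\ref{liveness-2-IC}.
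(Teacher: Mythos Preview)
Your proposal is correct and follows essentially the same approach as the paper: a hop-by-hop timing analysis of a single honest-responder interaction to validate the bound $T_\mathsf{wait}$ (as in Fig.~\ref{tab:liveness-IC}), followed by the pigeonhole argument for Condition~\ref{liveness-1-IC} and the sampling-without-replacement probability bound for Condition~\ref{liveness-2-IC}. Your intermediate estimate $\prod_{i=0}^{y-1}\frac{f_v-i}{N_v-i}\le(f_v/N_v)^y$ is a slightly cleaner route to $3^{-y}$ than the paper's direct substitution $N_v\ge3f_v+1$, and your explicit remark that no honest VC clock exceeds $\tau+T_\mathsf{wait}$ during the successful attempt (so the exchange is not cut off at $T_\mathsf{end}$) makes a point the paper leaves implicit, but neither changes the substance of the argument.
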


\begin{proof}
 Let $V$ be a $[T_\mathsf{wait}]$-patient voter initialized by the adversary $\A$ when $\Cl=\Cl[V]=T$.  Upon initialization, $V$'s internal clock is synchronized with the global clock at time $\Cl=\Cl[V]=T$. After at most $T_\mathsf{comp}$ steps, $V$  submits her vote $(\mathsf{serial\textrm{-}no},\mathsf{vote\textrm{-}code})$ at internal clock time: $\Cl[V]=T+T_\mathsf{comp}$, hence at global clock time: $\Cl\leq T+\Delta$. Thus, $VC$ will receive the vote of $V$ at internal time $\Cl[VC]\leq (T+T_\mathsf{comp})+2\Delta+\delta$.  Then, $VC$ performs at most $T_\mathsf{comp}$ steps to verify the validity of the vote before it broadcasts its receipt share.
\par All the other honest VC nodes will receive $VC$'s receipt share by global clock time: 
\[\Cl\leq (T+T_\mathsf{comp}+2\Delta+\delta)+(T_\mathsf{comp}+\Delta+\delta)=T+2T_\mathsf{comp}+3\Delta+2\delta,\]
which implies that the time at their internal clocks is at most
  $T+2T_\mathsf{comp}+4\Delta+2\delta$.
  Then, they will verify $VC$'s share and broadcast their shares for $V$'s vote after at most $T_\mathsf{comp}$ steps. The global clock at that point is no more than
 \[\Cl\leq(T+2T_\mathsf{comp}+4\Delta+2\delta)+T_\mathsf{comp}+\Delta=T+3T_\mathsf{comp}+5\Delta+2\delta.\]
Therefore, $VC$ will obtain the other honest VC nodes' shares at most when
 \[\Cl[VC]\leq(T+3T_\mathsf{comp}+5\Delta+2\delta)+\Delta+\delta=T+3T_\mathsf{comp}+6\Delta+3\delta\]
 and will process them in order to reconstruct the receipt for $V$. In order to collect $N_v-f_v-1$ receipt shares that are sufficient for reconstruction, $VC$ may have to perform up to $N_v-1$ receipt-share verifications, as the $f_v$ malicious VC nodes may also send invalid messages.
This verification requires at most $(N_v-1)\cdot T_\mathsf{comp}$ steps. Taking into account the $T_\mathsf{comp}$ steps for the reconstruction process, we conclude that $VC$ will finish computation by global time
 \begin{equation*}
\begin{split}
 =(T+3T_\mathsf{comp}+6\Delta+3\delta)+(N_v-1)T_\mathsf{comp}+T_\mathsf{comp}+\Delta
 =T+(N_v+3)T_\mathsf{comp}+7\Delta+3\delta.
\end{split}
\end{equation*}
Finally, $V$ will obtain the receipt after at most $\delta$ delay from the moment that $VC$ finishes computation, and she needs $T_\mathsf{comp}$ steps to verify the validity of this receipt. Taking into consideration the drift on $V$'s internal clock, we have that if $V$ is honest and  has not yet obtained a receipt by the time that 
   \begin{equation*}
\begin{split}\Cl[V]&=\big(T+(N_v+3)T_\mathsf{comp}+7\Delta+3\delta\big)+T_\mathsf{comp}+\Delta+\delta=T+T_\mathsf{wait},\\
\end{split}
\end{equation*}
 \noindent then, being $[T_\mathsf{wait}]$-patient, she can blacklist $VC$ and resubmit her vote to another VC node. We will show that the latter fact implies conditions (\ref{liveness-1-IC}) and (\ref{liveness-2-IC}) in the statement of the theorem:\\
\par\textbf{Condition (\ref{liveness-1-IC}): }since there are at most $f_v$ malicious VC nodes, $V$ will certainly run into an honest VC node at her $(f_v+1)$-th attempt (if reached). Therefore, if $V$ is engaged in the voting protocol by the time that $\Cl[V]=T_\mathsf{end}-(f_v+1)\cdot T_\mathsf{wait}$, then she will obtain a receipt.\\
\par\textbf{Condition (\ref{liveness-2-IC}): } if $V$ has waited for more than $y\cdot T_\mathsf{wait}$ time and has not yet received a receipt, then it has run at least $y$ failed attempts in a row. 
  At the $j$-th attempt, $V$ has $\dfrac{f_v-(j-1)}{N_v-(j-1)}$ probability to randomly select one of the remaining $f_v-(j-1)$ malicious VC nodes out of the $N_v-(j-1)$ non-blacklisted VC nodes. Thus,
 the probability that $V$ runs at least $y$ failed attempts in a row is
  \begin{equation*}
\begin{split}
 \prod_{j=1}^{y}\dfrac{f_v-(j-1)}{N_v-(j-1)}&= \prod_{j=1}^{y}\dfrac{f_v-(j-1)}{3f_v+1-(j-1)}<3^{-y}.
\end{split}
\end{equation*}  
Therefore, if $V$ is engaged in the voting protocol by the time that  $\Cl[V]=T_\mathsf{end}-y\cdot T_\mathsf{wait}$, then the probability that she will obtain a receipt is more than $1-3^{-y}$.
\end{proof}
%
%
%
\ifextended
\subsubsection{Liveness of D-DEMOS/Async}\label{subsubsec:liveveness_async}
The proof of liveness in the asynchronous version of D-DEMOS differs from the one of D-DEMOS/IC in the computation of the $T_\mathsf{wait}$ upper bound, which now depends on the steps of the VC nodes presented in Section~\ref{subsec:VcnodesAsync}. 
The upper bounds on the advance of the the global clock and the internal clocks of $V$ and the VC nodes is analogously differentiated, as depicted in Fig.~\ref{tab:liveness-async}.
%
%
 \begin{figure}[ht]
 \footnotesize
\begin{center}
 \begin{tabular}{|M{3.1cm}||M{2.2cm}|M{2.2cm}|M{2.2cm}|M{2.2cm}|}
 \hline
\multirow{3}{*}{\small{\textbf{Step}}}&\multicolumn{4}{c|}{\small{\textbf{Time upper bounds at each clock}}}\\\cline{2-5}
&\small{$\Cl$}&\small{$\Cl[V]$}&\small{$\Cl[VC]$}&\small{honest VC nodes' clocks}\\
\hline\hline
$V$ is initialized & \cellcolor{cellgray}$T$ & $T$ & $T+\Delta$& $T+\Delta$\\
\hline
$V$ submits her vote to $VC$ & $ T+T_\mathsf{comp}+\Delta$ & \cellcolor{cellgray}$T+T_\mathsf{comp}$ & $T+T_\mathsf{comp}+2\Delta$&$T+T_\mathsf{comp}+2\Delta$\\
\hline
$VC$ receives $V$'s ballot & $ \cellcolor{cellgray}T+T_\mathsf{comp}+\Delta+\delta$ & $T+T_\mathsf{comp}+2\Delta+\delta$ & $T+T_\mathsf{comp}+2\Delta+\delta$&$T+T_\mathsf{comp}+2\Delta+\delta$\\
\hline
$VC$ verifies the validity of $V$'s ballot and broadcasts an ENDORSE message & $ T+2T_\mathsf{comp}+3\Delta+\delta$ & $T+2T_\mathsf{comp}+4\Delta+\delta$ & \cellcolor{cellgray}$T+2T_\mathsf{comp}+2\Delta+\delta$&$T+2T_\mathsf{comp}+4\Delta+\delta$\\
\hline
All the other honest VC nodes receive $VC$'s ENDORSE message& \cellcolor{cellgray}$T+2T_\mathsf{comp}+3\Delta+2\delta$ & $T+2T_\mathsf{comp}+4\Delta+2\delta$ & $T+2T_\mathsf{comp}+4\Delta+2\delta$&$T+2T_\mathsf{comp}+4\Delta+2\delta$\\
\hline
All the other honest VC nodes verify the validity of the ENDORSE message and respond with an ENDORSEMENT message & $T+3T_\mathsf{comp}+5\Delta+2\delta$ & $T+3T_\mathsf{comp}+6\Delta+2\delta$ & $T+3T_\mathsf{comp}+6\Delta+2\delta$&\cellcolor{cellgray}$T+3T_\mathsf{comp}+4\Delta+\delta$\\
\hline
$VC$ receives the ENDORSEMENT messages of all the other honest VC nodes &\cellcolor{cellgray}$T+3T_\mathsf{comp}+5\Delta+3\delta$ & $T+3T_\mathsf{comp}+6\Delta+3\delta$ & $T+3T_\mathsf{comp}+6\Delta+3\delta$&$T+3T_\mathsf{comp}+6\Delta+3\delta$\\
\hline
$VC$ verifies the validity of all the $N_v-1$ received messages until it obtains $N_v-f_v$ valid ENDORSEMENT messages & $ T+(N_v+2)T_\mathsf{comp}+7\Delta+3\delta$ & $T+(N_v+2)T_\mathsf{comp}+8\Delta+3\delta$ & \cellcolor{cellgray}$T+(N_v+2)T_\mathsf{comp}+6\Delta+3\delta$& $T+(N_v+2)T_\mathsf{comp}+8\Delta+3\delta$\\
\hline
$VC$ forms UCERT certificate and broadcsts its share and UCERT &$ T+(N_v+3)T_\mathsf{comp}+7\Delta+3\delta$ & $T+(N_v+3)T_\mathsf{comp}+8\Delta+3\delta$ & \cellcolor{cellgray}$T+(N_v+3)T_\mathsf{comp}+6\Delta+3\delta$& $T+(N_v+3)T_\mathsf{comp}+8\Delta+3\delta$\\
\hline
All the other honest VC nodes receive $VC$'s broadcast share and UCERT& \cellcolor{cellgray}$ T+(N_v+3)T_\mathsf{comp}+7\Delta+4\delta$ & $T+(N_v+3)T_\mathsf{comp}+8\Delta+4\delta$ &$T+(N_v+3)T_\mathsf{comp}+8\Delta+4\delta$& $T+(N_v+3)T_\mathsf{comp}+8\Delta+4\delta$\\
\hline
All the other honest VC nodes verify the validity of UCERT and $V$'s share and broadcast their shares &$ T+(N_v+4)T_\mathsf{comp}+9\Delta+4\delta$ & $T+(N_v+4)T_\mathsf{comp}+10\Delta+4\delta$ &$T+(N_v+4)T_\mathsf{comp}+10\Delta+4\delta$&\cellcolor{cellgray} $T+(N_v+4)T_\mathsf{comp}+8\Delta+4\delta$\\
\hline
$VC$ receives all the other honest VC nodes' shares &  \cellcolor{cellgray}$ T+(N_v+4)T_\mathsf{comp}+9\Delta+5\delta$ & $T+(N_v+4)T_\mathsf{comp}+10\Delta+5\delta$ &$T+(N_v+4)T_\mathsf{comp}+10\Delta+5\delta$&$T+(N_v+4)T_\mathsf{comp}+10\Delta+5\delta$\\
\hline
$VC$ verifies the validity of all the $N_v-1$ received messages until it obtains $N_v-f_v$ valid shares &$ T+(2N_v+3)T_\mathsf{comp}+11\Delta+5\delta$ & $T+(2N_v+3)T_\mathsf{comp}+12\Delta+5\delta$ &  \cellcolor{cellgray}$T+(2N_v+3)T_\mathsf{comp}+10\Delta+5\delta$&$T+(2N_v+3)T_\mathsf{comp}+12\Delta+5\delta$\\
\hline
$VC$ reconstructs and $V$'s receipt and sends it to $V$& $ T+(2N_v+4)T_\mathsf{comp}+11\Delta+5\delta$ & $T+(2N_v+4)T_\mathsf{comp}+12\Delta+5\delta$ &  \cellcolor{cellgray}$T+(2N_v+4)T_\mathsf{comp}+10\Delta+5\delta$&$T+(2N_v+4)T_\mathsf{comp}+12\Delta+5\delta$\\
\hline
$V$ obtains her receipt &\cellcolor{cellgray} $ T+(2N_v+4)T_\mathsf{comp}+11\Delta+6\delta$ & $T+(2N_v+4)T_\mathsf{comp}+12\Delta+6\delta$ &  $T+(2N_v+4)T_\mathsf{comp}+12\Delta+6\delta$&$T+(2N_v+4)T_\mathsf{comp}+12\Delta+6\delta$\\
\hline
$V$ verifies the validity of her receipt &$ T+(2N_v+5)T_\mathsf{comp}+11\Delta+6\delta$ &\cellcolor{cellgray}  $T+(2N_v+5)T_\mathsf{comp}+12\Delta+6\delta$ &  $T+(2N_v+5)T_\mathsf{comp}+12\Delta+6\delta$&$T+(2N_v+5)T_\mathsf{comp}+12\Delta+6\delta$\\
\hline
 \end{tabular}\end{center}
  \caption{Time upper bounds at $\Cl,\Cl[V]$, $\Cl[VC]$ and other honest VC nodes' clocks at each step of the interaction of the voter $V$ with responder $VC$ during D-DEMOS/Async voting phase. The grayed cells indicate the reference point of the clock drifts at each step.}
\label{tab:liveness-async}
 \end{figure}
\begin{theorem}[\textbf{Liveness of D-Demos/Async}]\label{thm:liveness_async}
Consider a D-DEMOS/Async run with $n$ voters, $m$ options and $N_v$ VC nodes. Let $\A$ be an adversary against D-DEMOS/Async with $m$ options and $n$ voters under the model described in Section~\ref{subsec:threat} that corrupts up to $f_v<N_v/3$ VC nodes.  Assume there is an upper bound $\Delta$ on clock synchronization loss and an upper bound $\delta$ on the delay of message delivery among honest VC nodes. Let $T_\mathsf{comp}$ be the worst-case running time of any procedure run by the VC nodes and the voters described in Sections~\ref{subsec:VcnodesAsync} and~\ref{subsec:voter} respectively, during the voting protocol.
 \par Let $T_\mathsf{end}$ denote the election end time. Define
 \[T_\mathsf{wait}:=(2N_v+5)T_\mathsf{comp}+12\Delta+6\delta\;.\] 
 Then, the following conditions hold:
 \begin{enumerate}
   \item\label{liveness-1-async} Every $[T_\mathsf{wait}]$-patient voter that is engaged in the voting protocol by the time that $\Cl[V]=T_\mathsf{end}-(f_v+1)\cdot T_\mathsf{wait}$, will obtain a valid receipt.
   \item\label{liveness-2-async} Every $[T_\mathsf{wait}]$-patient voter that is engaged in the voting protocol by the time that  
  $\Cl[V]=T_\mathsf{end}-y\cdot T_\mathsf{wait}$, where $y\in[f_v]$,
  will obtain a valid receipt with more than $1-3^{-y}$ probability.
 \end{enumerate}

\end{theorem}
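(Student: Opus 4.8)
The plan is to follow the proof of Theorem~\ref{thm:liveness_IC} essentially line by line; the sole substantive difference is that the D-DEMOS/Async voting protocol (Section~\ref{subsec:VcnodesAsync}) inserts one additional round trip among the VC nodes — the \ENDORSE{}/\ENDORSEMENT{} exchange that produces the \UCERT{} — before the receipt-share dissemination, so the worst-case delay $T_\mathsf{wait}$ after which a $[T_\mathsf{wait}]$-patient voter may safely blacklist a responder must be recomputed to absorb this extra round.

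First I would trace the interaction of a $[T_\mathsf{wait}]$-patient voter $V$ with an honest responder $VC$, step by step as tabulated in Figure~\ref{tab:liveness-async}, accumulating one $T_\mathsf{comp}$ term for each local computation, one $\delta$ term for each message hop between honest nodes, and one $\Delta$ term each time the argument passes between two clocks that may differ by up to $\Delta$. The chain is: $V$ initializes and submits ($T_\mathsf{comp}$, one hop); $VC$ validates the ballot and multicasts \ENDORSE{} ($T_\mathsf{comp}$, one hop); the remaining honest nodes validate and reply \ENDORSEMENT{} ($T_\mathsf{comp}$, one hop back); $VC$ verifies up to $N_v-1$ signatures to collect $N_v-f_v$ valid ones ($(N_v-1)T_\mathsf{comp}$, since the $f_v$ corrupt nodes may send garbage), forms the \UCERT{} and multicasts \VOTEP{} with its share ($T_\mathsf{comp}$, one hop); the remaining honest nodes verify the \UCERT{} and the share and multicast their own shares ($T_\mathsf{comp}$, one hop back); $VC$ again verifies up to $N_v-1$ shares to obtain $N_v-f_v$ valid ones ($(N_v-1)T_\mathsf{comp}$), reconstructs and sends the receipt ($T_\mathsf{comp}$, one hop); finally $V$ verifies it ($T_\mathsf{comp}$). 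Summing the computation terms yields $2(N_v-1)+7 = 2N_v+5$ copies of $T_\mathsf{comp}$, and summing the hops and drift corrections yields $6\delta$ and $12\Delta$, giving $T_\mathsf{wait} = (2N_v+5)T_\mathsf{comp} + 12\Delta + 6\delta$ exactly as in the statement.

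With this bound in hand, the remainder is verbatim the argument of Theorem~\ref{thm:liveness_IC}. A $[T_\mathsf{wait}]$-patient $V$ who has not received a valid receipt within $T_\mathsf{wait}$ units on her own clock blacklists the current node and resubmits to a uniformly random non-blacklisted VC node. Since at most $f_v$ nodes are corrupt, the $(f_v+1)$-th attempt (if reached) necessarily queries an honest responder, which together with the $T_\mathsf{wait}$ bound above gives condition~(\ref{liveness-1-async}) whenever $V$ engages by $\Cl[V]=T_\mathsf{end}-(f_v+1)T_\mathsf{wait}$. For condition~(\ref{liveness-2-async}), the probability of $y$ consecutive failed attempts is $\prod_{j=1}^{y}\frac{f_v-(j-1)}{N_v-(j-1)}\leq\prod_{j=1}^{y}\frac{f_v-(j-1)}{3f_v+1-(j-1)}<3^{-y}$ using $N_v\geq 3f_v+1$, so engaging by $\Cl[V]=T_\mathsf{end}-yT_\mathsf{wait}$ guarantees a receipt with probability more than $1-3^{-y}$.

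The only genuinely delicate part is the timing bookkeeping of the second paragraph: the responder's clock, the other honest nodes' clocks, the global clock, and $V$'s clock may all drift by up to $\Delta$ pairwise, so every transition between ``event occurs at global time $T'$'' and ``node $X$ observes it at its internal time'' costs a $\pm\Delta$, and these must be charged consistently rather than double-counted — this is precisely what the grayed reference cells of Figure~\ref{tab:liveness-async} keep track of. Everything after that (the blacklist argument and the geometric bound) is identical to the D-DEMOS/IC case and introduces no new difficulty.
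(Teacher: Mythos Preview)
Your proposal is correct and takes essentially the same approach as the paper: recompute $T_\mathsf{wait}$ by tracing the Async voting protocol step by step (accounting for the extra \ENDORSE{}/\ENDORSEMENT{} round and the second batch of up to $N_v-1$ verifications), then invoke the blacklist and geometric arguments of Theorem~\ref{thm:liveness_IC} verbatim. In fact your write-up is more explicit than the paper's own proof, which simply points to Figure~\ref{tab:liveness-async} and says the rest follows as in the IC case.
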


\begin{proof} 
\ifextended
The $T_\mathsf{wait}$ upper bound is computed according to the diagram in Figure~\ref{figure:VoteCollectionAsyncPhase}. 
\else
The $T_\mathsf{wait}$ upper bound is computed according to the algorithm description in Section~\ref{subsec:VcnodesAsync}. 
\fi
Following the reasoning in the proof of Theorem~\ref{thm:liveness_IC}, we get that 
\[T_\mathsf{wait}:=(2N_v+5)T_\mathsf{comp}+12\Delta+6\delta\;.\] 
Subsequently, we show that conditions~(\ref{liveness-1-async}) and~(\ref{liveness-2-async}) hold for any $[T_\mathsf{wait}]$-patient voter,  exactly as in the proof of Theorem~\ref{thm:liveness_IC}.

\end{proof}
\else
\fi

\subsection{Safety}\label{subsec:sec_safety}
%

%
%
D-DEMOS's safety guarantee is expressed as a contract adhered by the VC subsystem, stated in Section~\ref{sec:sysoverview}. This contract is fulfilled by both D-DEMOS versions, though D-DEMOS/IC requires some additional assumptions to hold, as compared with D-DEMOS/Async that assumes only fault tolerance of the underlying subsystems (see Section~\ref{subsec:threat}). 
\ifextended
Moreover, the proofs of safety of the two versions diverge. Specifically, the safety of D-DEMOS/IC relies on the security of the fixed SHA-256 hash function and the AES-128-CBC\$ symmetric encryption scheme. Therefore, the safety statement is with respect to specific security parameters. On the contrary, the safety of D-DEMOS/Async depends on the RSA signature scheme, therefore our analysis follows an asymptotic approach.
\fi
%
%
\subsubsection{Safety of D-DEMOS/IC}\label{subsubsec:safety_ic}
As in liveness, we assume the upper bounds $\delta,\Delta$ on the delay of message delivery and the drifts of all nodes' clocks to implement $T_\mathsf{end}$ and  $T_\mathsf{barrier}$ as the starting point and the barrier of the IC protocol. We consider 128-bit security of the commitment scheme assuming that every adversary running in less than $2^{64}$ steps has no more than $2^{-128}$ probability of obtaining any information about a single committed value (i.e., we set $c=6/7$, where $c$ is mentioned in Section~\ref{subsubsec:commit}).
%
\begin{theorem}[\textbf{Safety of D-Demos/IC}]\label{thm:safety_ic}
 Consider a D-DEMOS/IC run with $n$ voters, $m$ options, $N_v$ VC nodes, $N_b$ BB nodes and $N_t$ trustees under the restriction than $m\cdot n\leq2^{41}$. Let $\A$ be an adversary against D-DEMOS under the model described in Section~\ref{subsec:threat} that corrupts up to $f_v<N_v/3$ VC nodes, up to $f_b<N_b/2$ BB nodes and up to $N_t-h_t$ out-of $N_t$ trustees. Assume there is an upper bound $\Delta$ on clock synchronization loss and an upper bound $\delta$ on the delay of message delivery.  Let $T_\mathsf{end}$ be the end of the voting phase and $T_\mathsf{barrier}$ be the end of the value dissemination phase of the interactive consistency protocol, as described in Section~\ref{subsec:threat}. Then, all honest voters who received a valid receipt from a VC node, are assured that their vote will be published on the honest BB nodes and included in the election tally, with probability at least 
   \[1-\dfrac{nf_v}{2^{64}-f_v}-\big(3(mn)^3+2^{25}(mn)^2+2^{64}mn\big)\cdot2^{-125}\;.\]
\end{theorem}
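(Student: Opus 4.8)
The plan is to prove safety by establishing the contract: if an honest voter $V$ received a valid receipt for $(\mathsf{serial\textrm{-}no}_\ell, \mathsf{vote\textrm{-}code})$ from some (possibly malicious) responder VC node, then $(\mathsf{serial\textrm{-}no}_\ell, \mathsf{vote\textrm{-}code})$ is published on every honest BB node and is included in the tally. I would proceed in three conceptual layers, and the probability bound in the statement will emerge as the sum of the failure probabilities of each layer. First, a \emph{receipt-unforgeability} layer: because a valid receipt is reconstructed by an $(N_v-f_v,N_v)$-VSS only after $N_v-f_v$ distinct VC nodes disclose their receipt shares, and at most $f_v$ of those are malicious, at least $N_v-2f_v \geq f_v+1$ \emph{honest} VC nodes must have disclosed a share for this exact $(\mathsf{serial\textrm{-}no}_\ell,\mathsf{vote\textrm{-}code})$ pair. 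The only way a malicious party could produce a receipt without this happening is to guess the $64$-bit receipt or to collide a vote code; I would bound the guessing part by a union bound over the $n$ voters against the $f_v$ malicious nodes, giving the $\tfrac{nf_v}{2^{64}-f_v}$ term, and fold hash collisions on the $SHA256(\mathsf{vote\textrm{-}code}\,\|\,\mathsf{salt})$ and $SHA256(\mathsf{msk}\,\|\,\mathsf{salt_{msk}})$ checks into the collision-resistance term (recall SHA-256 is treated as $(t,t^2 2^{-256})$-collision resistant, and with $t$ polynomially related to $mn$ this contributes the cubic/quadratic-in-$mn$ terms multiplied by $2^{-256}$, which after the arithmetic in the paper becomes part of the $2^{-125}$-scaled expression).

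Second, an \emph{agreement} layer: I would invoke the correctness of the \emph{IC,BC-RBB} interactive consistency protocol from~\cite{IC@ICPADS2015}, using the timing assumptions (bounded drift $\Delta$, bounded delay $\delta$, the barrier $T_\mathsf{barrier}$ and start $T_\mathsf{end}$) exactly as spelled out in Section~\ref{subsec:threat} — these guarantee every honest VC node ends IC holding the same vector $\langle VS_1,\dots,VS_{N_v}\rangle$, with each honest node's slot correctly filled. Now apply the post-IC algorithm of Figure~\ref{fig:afterIC}: since at least $N_v-2f_v$ honest VC nodes placed $(\mathsf{serial\textrm{-}no}_\ell,\mathsf{vote\textrm{-}code})$ into their $VS_i$ (from layer one) and no honest node placed a \emph{different} vote code for this ballot (honest nodes accept only the first vote code, and distinct valid vote codes for one ballot require a malicious voter, whom the contract excludes), the cross-tabulation for this ballot contains $\mathsf{vote\textrm{-}code}$ at least $N_v-2f_v$ times and no competing vote code, so step~\ref{afteric:keep} fires and every honest VC node marks the ballot $\mathsf{Voted}$ for $\mathsf{vote\textrm{-}code}$. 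Thus all $\geq N_v-f_v$ honest VC nodes submit the same tuple to the BB.

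Third, a \emph{publication-and-tally} layer: each honest BB node receives identical final vote-code sets from $\geq N_v-f_v \geq f_v+1$ VC nodes, so it accepts and publishes the set containing our tuple; it reconstructs $\mathsf{msk}$ from $N_v-f_v$ key-shares (verified against $H_\mathsf{msk}$, using collision-resistance of SHA-256 again) and decrypts the vote code, so the vote code and its committed option encoding become available to the trustees. Then, since $h_t$ honest trustees suffice for the $(h_t,N_t)$-VSS and the homomorphic opening share $T_\ell = \sum_{D \in \mathbf{D}^{(\ell)}_\mathsf{tally}}$ is computed over exactly the tally set $\mathbf{E}_\mathsf{tally}$ which includes our commitment, the published homomorphic total reflects our vote; here I would invoke the perfectly-binding property of the lifted-ElGamal commitment (so the opening is unambiguous) and bound the residual failure — a malicious coalition distinguishing or altering committed values — by the hiding/commitment security term, which with parameters $mn \leq 2^{41}$ and the $128$-bit setting ($c=6/7$, adversary running $<2^{64}$ steps with advantage $<2^{-128}$) yields the $\big(3(mn)^3 + 2^{25}(mn)^2 + 2^{64}mn\big)\cdot 2^{-125}$ aggregate. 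Summing the three failure contributions gives the claimed bound.

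The main obstacle I anticipate is the bookkeeping in layer three: carefully accounting for every cryptographic reduction (AES-128-CBC\$ IND-CPA with its $(t,q,(2t+258q+3q^2)2^{-128})$ bound, SHA-256 collision resistance at each hashed-commitment check, and the commitment hiding bound) across the $O(mn)$ hashed vote codes and $O(mn)$ committed option encodings, and then collapsing all the $2^{-256}$- and $2^{-128}$-scaled terms — under the hypothesis $mn \leq 2^{41}$ — into the single clean expression stated. The agreement layer (layer two) is essentially a black-box citation once the timing assumptions are in place, and layer one is a short combinatorial/union-bound argument; it is the concrete-security arithmetic of layer three, not any conceptual difficulty, that will require the most care.
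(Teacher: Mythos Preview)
Your three-layer decomposition mirrors the paper's structure, and your Layer~1 (receipt-guessing bound $\tfrac{nf_v}{2^{64}-f_v}$) and the black-box use of IC agreement in Layer~2 are fine. But you have misplaced the main cryptographic work, and this is a genuine gap.

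The term $\big(3(mn)^3+2^{25}(mn)^2+2^{64}mn\big)\cdot2^{-125}$ does \emph{not} come from commitment hiding in the publication/tally layer. Commitment hiding is a privacy primitive; it plays no role in the safety argument, and your Layer~3 in the paper is essentially a one-line threshold observation (enough honest VC nodes push the agreed set to the honest BB majority, and $h_t$ honest trustees tally). Where the big term actually arises is in your Layer~2, precisely at the step you wave away with ``distinct valid vote codes for one ballot require a malicious voter, whom the contract excludes.'' That sentence is the whole crux: a \emph{malicious VC node} could insert a second valid vote code for $V_\ell$'s ballot into its $VS_i$, triggering step~\ref{afteric:discard} of Figure~\ref{fig:afterIC} and discarding the honest vote. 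To rule this out you must prove that the adversary, given only (a) the AES-128-CBC\$ encryptions of all vote codes on the BB, (b) the SHA-256 hashes $H_{\ell,j}$ at the VC nodes, and (c) the vote codes actually cast or belonging to corrupted voters, cannot output any \emph{other} valid vote code from an honest voter's ballot. The paper does this via an explicit two-step reduction: first to IND-CPA of AES-128-CBC\$ (building an IND-CPA adversary $\mathcal{B}$ that embeds its challenge ciphertext as one encrypted vote code), and inside that, to $(t,t^2\cdot2^{-256})$-collision resistance of SHA-256 (building $\mathcal{C}$ that embeds its target hash as one $H_{\ell,j}$). The concrete arithmetic you anticipate --- plugging $t\approx 2^{25}mn+2^{64}$ and $q=2mn$ into $(2t+258q+3q^2)\cdot2^{-128}$, using $mn\le 2^{41}$ --- happens here, not in Layer~3. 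So the ``obstacle'' you identify is real, but you have located it in the wrong layer and tied it to the wrong primitive; once you move the AES/SHA reduction into Layer~2 and drop commitment hiding entirely, the proof goes through.
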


\begin{proof} A crucial step for proving the safety of D-DEMOS/IC is to ensure it is hard for the adversary to compute non-submitted valid vote codes from the ballots of honest voters. This is done in the following claim.\\
\begin{myclaim}\label{clm:safety_ic}
\emph{The probability that $\A$ outputs a vote code from the ballot of some honest voter $V$ which was not cast by $V$ is less than $\big(3(mn)^3+2^{25}(mn)^2+2^{64}mn\big)\cdot2^{-125}\;.$}\medskip
%
\end{myclaim}
\begin{claimproof}
 Let $\mbf{C}$ be the set of all vote codes generated by the EA. An arbitrary execution of $\mc{A}$ determines the following subsets of $\mbf{C}$: (i) the set of vote codes $\mbf{C}_1$  that all honest voters submitted at the election phase , (ii)  the set of the vote codes  $\mbf{C}_2$ located in unused ballots of honest voters that did not engage in the voting protocol and (iii) the set of vote codes $\mbf{C}_3$ in the ballots of corrupted voters. \smallskip 
 \par Since every vote code is a random 128-bit string, the event that $\A$ guesses some of the $2mn$ vote codes can happen with no more than $2mn(2^{-128})=2^{-127}mn$ probability. Furthermore, $\A$ is restricted by the fault tolerance thresholds of the VC, BB and trustees subsystems. Hence, by (i) the random vote code generation, (ii) the fault tolerance thresholds, (iii) the hiding property of the commitment scheme and (iv) the perfect simulatability of the zero-knowledge proofs, we assume that except for some probability bounded by $2^{-127}mn+0+2^{-127}mn+0=2^{-126}mn$, the information associated with the vote codes that $\A$ obtains is, 
 \begin{enumerate}[(i).]
  \item The VC initialization data (for every VC node that $\A$ corrupts). 
 %
 %
  \item All the BB initialization data. The part of these data that is associated with the vote codes is the list of all AES-128-CBC\$ vote code encryptions under $\mathsf{msk}$.
  \item The set $\mbf{C}_1\cup\mbf{C}_2\cup\mbf{C}_3$.
 \end{enumerate}
\noindent\underline{\emph{Reduction to IND-CPA security of AES-128-CBC\$.}} Given the code of $\A$, we construct an algorithm $\mc{B}$ against the $(t,q,(2t+258\cdot q+3q^2)\cdot2^{-128})$-IND-CPA security of the underlying AES-128-CBC\$ (see Section~\ref{subsubsec:aes}). Namely, $\mc{B}$ invokes $\mc{A}$ and attempts to simulate a setup and run of D-DEMOS/IC as follows:
%
 \begin{enumerate}[1.]
   \item $\mc{B}$ chooses a random triple $(j^*,\ell^*,X^*)\in[m]\times[n]\times\{A,B\}$.
   \item For every $(j,\ell,X)\in[m]\times[n]\times\{A,B\}\setminus\{(j^*,\ell^*,X^*)\}$, $\mc{B}$ executes the following steps:
   \begin{enumerate}
    \item $\mc{B}$ chooses a random 64-bit $\mathsf{vote\textrm{-}code}_{\ell,j}^X$ and associates it with $\mathsf{option}^X_{\ell,j}$.
    \item $\mc{B}$ makes an encryption query $\big(m_{0,\ell,j}^X,m_{1,\ell,j}^X\big)=\big(\mathsf{vote\textrm{-}code}_{\ell,j}^X,\mathsf{vote\textrm{-}code}_{\ell,j}\big)^X$ and receives an AES-128-CBC\$ encryption of $\mathsf{vote\textrm{-}code}_{\ell,j}^X$.
    \item $\mc{B}$ chooses a random $\mathsf{salt}_{\ell,j}^X$ and computes $H_{\ell,j}^X\leftarrow SHA256(\mathsf{vote\textrm{-}code}_{\ell,j}^X, \mathsf{salt}_{\ell,j}^X)$.
    \item $\mc{B}$ generates the cryptographic payload $\mathsf{payload}_{\ell,\pi_\ell^X(j)} $ associated with $\mathsf{option}^X_{\ell,j}$.
   \end{enumerate}
   \item $\mc{B}$ chooses random values  $\mathsf{vote\textrm{-}code}^*_0,\mathsf{vote\textrm{-}code}^*_1\in\{0,1\}^{64},\mathsf{salt}^*\in\{0,1\}^{64}$.
   \item $\mc{B}$ makes the encryption query challenge $\mathsf{vote\textrm{-}code}^*_0,\mathsf{vote\textrm{-}code}^*_1$ and receives the AES-128-CBC\$ encryption $y^*$ of $\mathsf{vote\textrm{-}code}^*_b$, where $b$ is the outcome of a coin-flip.
   \item $\mc{B}$ tabulates BB initialization data as EA does, by using $\mathsf{vote\textrm{-}code}^*_0$ as the vote code associated with
   $\mathsf{option}_{\ell^*,j^*}$,  the hash $SHA256(\mathsf{vote\textrm{-}code}^*_0,\mathsf{salt}^*)$ as $H_{\ell^*,j^*}^{X^*}$  and $y^*$ as the AES-128-CBC\$ ciphertext that corresponds to $\mathsf{vote\textrm{-}code}^*_0$ . 
   \item $\mc{B}$ interacts with $\mc{A}$ according to the model described in Section~\ref{subsec:threat}.
   \item If $\mc{A}$ outputs $\mathsf{vote\textrm{-}code}^*_0$, then $\mc{B}$ outputs 0. Otherwise, $\mc{B}$ outputs 1.
 \end{enumerate}
 %
Let $G$ be the event that $\mc{A}$ outputs some $\mathsf{vote\textrm{-}code}\in\mbf{C}\setminus(\mbf{C}_1\cup\mbf{C}_2\cup\mbf{C}_3)$. By the construction of $\mc{B}$, if the IND-CPA challenge bit $b$ is $0$, then $\mc{B}$ simulates a D-DEMOS/IC election perfectly. Furthermore, if $b=0$ and $\mathsf{vote\textrm{-}code}$ corresponds to the randomly chosen position $(j^*,\ell^*,X^*)\in[m]\times[n]\times\{A,B\}$, then it outputs 0 ($\mathsf{vote\textrm{-}code}=\mathsf{vote\textrm{-}code}^*_0$). Since $\mc{B}$ randomly guesses the triple $(\ell^*,j^*,X^*)$, we have that
 \begin{equation}\label{eq:lemma_2}
 \begin{split}
&\Pr[\mc{B}\mbox{ outputs }1\mid b=0]=1-\Pr[\mc{B}\mbox{ outputs }0\mid b=0]=1-\dfrac{\Pr[G\mid b=0]}{2mn}\;.
 \end{split}
 \end{equation}
 On the other hand, if $b=1$, then $\mathsf{vote\textrm{-}code}^*_0$ is the preimage of $SHA256(\mathsf{vote\textrm{-}code}^*_0 , \mathsf{salt}^*)$, while $y^*$ is the encryption of an independently generated vote code. Based on this observation, we construct an algorithm $\mc{C}$ that acts as an attacker against the $(t,t^2\cdot2^{-256})$-collision resistance of SHA-256 (see Section~\ref{subsubsec:crhf}). Namely, on input some hash value $H$, $\mc{C}$ executes the following steps:
%
\begin{enumerate}[1.]
 \item $\mc{C}$ chooses a random triple $(j^*,\ell^*,X^*)\in[m]\times[n]\times\{A,B\}$.
 \item For every $(j,\ell,X)\in[m]\times[n]\times\{A,B\}$, $\mc{C}$ chooses random values $\mathsf{vote\textrm{-}code}_{\ell,j}^{X}\in\{0,1\}^{160},\mathsf{salt}_{\ell,j}^{X}\in\{0,1\}^{64}$.
 \item $\mc{C}$ tabulates all election information normally except that for $(\ell^*,j^*,X^*)$ it provides $H$ instead of the hash value $SHA256(\mathsf{vote\textrm{-}code}^{X^*}_{\ell,j}, \mathsf{salt}_{\ell,j}^{X^*})$.
 \item  $\mc{C}$ interacts with $\mc{A}$ according to the model described in Section~\ref{subsec:threat}.
 \item $\mc{C}$ receives the output of $\mc{A}$, labeled by $z$.
 \item $\mc{C}$ searches for a $w\in\{0,1\}^{64}$ s.t. $h(z,w)=H$. If $\mc{C}$ finds such a $w$, then it outputs $z||w$. Otherwise, it aborts.  
\end{enumerate}
%
For simplicity and w.l.o.g., we can assume that for each $(j,\ell,X)\in[m]\times[n]\times\{A,B\}$, the time complexity for information preparation is on the order of $256^3$ (cube of the string length, set to 256 bits).
The running time of $\mc{A}$ is $2^{64}$. Assuming linear complexity for hashing and checking a random value, the brute force search for the correct $w$ in step 6. takes $2^{64}\cdot256=2^{72}$ steps.  Therefore, given that $mn\leq2^{41}$, we conclude the $\mc{C}$ runs in steps bounded by
$2mn\cdot256^3+2^{64}+2^{64}\cdot256\leq mn2^{25}+2^{64}+2^{72}<2^{73}\;.$\smallskip
\par By the $(t,t^2\cdot2^{-256})$-collision resistance of $h(\cdot)$ (see Section~\ref{subsubsec:crhf}), the probability that $\mc{C}$ finds a preimage of $H$ is less than $2^{146}\cdot2^{-256}<2^{-110}$. By the construction of $\mc{C}$, if $\mc{A}$ outputs the vote code that corresponds to position $(\ell^*,j^*,X^*)\in[n]\times[m]\times\{A,B\}$, then $\mc{C}$ certainly wins. Therefore, we have that
 \begin{equation}\label{eq:lemma_3}
  \begin{split}
 \Pr[&\mc{B}\mbox{ outputs }1\mid b=1]=1-\Pr[\mc{B}\mbox{ outputs }1\mid b=1]=1-\dfrac{\Pr[G\mid b=1]}{2mn}-2^{-126}mn\geq\\
  &\geq1-\Pr[\mc{C}\mbox{ returns the preimage of SHA-}256]%
  >1-2^{-110}-2^{-126}mn\;.
  \end{split}
 \end{equation}
Hence, by Eq.~\eqref{eq:lemma_2},\eqref{eq:lemma_3}, we conclude that
   \begin{equation}\label{eq:lemma_4}
   \begin{split}
   \mathbf{Adv}_\mathsf{128-AES-CBC\$}^{\mathsf{IND-CPA}}(\mathcal{B})
>\dfrac{\Pr[G\mid b=0]}{2mn}-2^{-110}-2^{-126}mn\;.
%
%
\end{split}
  \end{equation}

Along the lines of the time complexity analysis of $\mc{C}$, the time complexity of $\mc{B}$ is bounded by $2mn\cdot256^3+2^{64}=2^{25}mn+2^{64}<2^{66}$, where we used that $mn\leq2^{41}$,
In addition, $\mc{B}$ makes at most $2\cdot m\cdot n$ queries. Hence, by the $(t,q,(2t+258\cdot q+3q^2)\cdot2^{-128})$- IND-CPA security of AES-CBC (see Section~\ref{subsubsec:aes}) and~\eqref{eq:lemma_4}, we conclude that
 \begin{equation*}
  \begin{split}
\dfrac{\Pr[G\mid b=0]}{2mn}&-2^{-110}-2^{-126}mn<(2^{26}mn+2^{65}+516 mn+12\cdot (mn)^2)\cdot2^{-128}\Rightarrow\\
 \Rightarrow \Pr[G\mid b=0]&
<\big(3(mn)^3+2^{25}(mn)^2+2^{64}mn\big)\cdot2^{-125}\;,
  \end{split}
 \end{equation*}
 which completes the proof of the claim, as the election simulation for $b=0$ is perfect.\\
 $\quad$\hfill
\end{claimproof}
\medskip\par Given Claim~\ref{clm:safety_ic}, the proof is completed in two stages. 
\vspace{5pt}
\par\emph{1.Vote set consensus. }
By the upper bound restriction on all clock drifts, all honest VC nodes will enter the Value Dissemination phase at $T_\mathsf{end}$ and the Result Consensus phase 
of the Interactive Consistency protocol at $T_\mathsf{barrier}$ within some distance $\Delta$ from the global clock. The agreement property of interactive consistency ensures that all honest VC nodes will contain the same vector 
$\langle VS_1,\ldots,VS_n \rangle$ of all nodes' sets of voted and pending ballots. Subsequently, all honest VC nodes, execute the same deterministic algorithm of Figure~\ref{fig:afterIC}, and will agree on the same set of votes denoted by $\mathsf{Votes}$. This will be the set of votes that are marked to be tallied by the honest VC nodes.\\
\par\emph{2. Protocol contract. }
Let $V_\ell$ be an honest voter that has obtained a receipt for his vote $\langle \mathsf{serial\textrm{-}no}, \mathsf{vote\textrm{-}code} \rangle$, but his vote is not included in $\mathsf{Votes}$. By the vote consensus property proved previously, we have that some honest VC node $VC$, decided to discard $V_\ell$'s vote. According to the algorithm described in Figure~\ref{fig:afterIC} that determines $\mathsf{Votes}$, the latter can happen only because either \textbf{Case (i)}: $\A$ succeeds in guessing the valid receipt of $V_\ell$, or \textbf{Case (ii)}: a $\mathsf{vote\textrm{-}code\textrm{-}2}$ different than $\mathsf{vote\textrm{-}code}$ appears in the list for the ballot indexed by $\mathsf{serial\textrm{-}no}$ or \textbf{Case (iii)}: $\mathsf{vote\textrm{-}code}$ appears less than 
$N_v-2f_v$ times in the list for the ballot indexed by $\mathsf{serial\textrm{-}no}$. We study all Cases (i),(ii),(iii):\medskip
\par\textbf{Case (i).} If $\A$ succeeds in guessing a valid receipt, then it can force the VC subsystem to consider $V$'s ballot not voted by not participating in the receipt reconstruction. By the information theoretic security of the VSS scheme, given that $\A$ is restricted by the fault tolerance thresholds, its guess of the receipt must be at random. Since there are at most $f_v$ malicious VC nodes, the adversary has at most $f_v$ attempts to guess the receipt. Moreover, the receipt is a randomly generated 64-bit string, so after $i$ attempts, $\A$ has to guess among $(2^{64}-i)$ possible choices. Taking a union bound for $n$ voters, the probability that $\A$ succeeds for any of the obtained receipts is no more than
\[\displaystyle\sum_{\ell=1}^{n}\bigg(\displaystyle\sum_{i=0}^{f_v-1}\dfrac{1}{2^{64}-i}\bigg)\leq \dfrac{nf_v}{2^{64}-f_v}\;.\]
\par\textbf{Case (ii).} $V_\ell$ is honest, hence it has submitted the same vote in every possible attempt to vote prior to the one she obtained her receipt. Therefore, Case (ii) may occur only if the adversary $\A$ manages to produce $\mathsf{vote\textrm{-}code\textrm{-}2}$ by the vote code related election information it has access to. Namely, 
(a) the set of vote codes that all honest voters submitted at the election phase, (b)  the set of the vote codes that were located in unused ballots and (c) the set of vote codes in the ballots of corrupted voters.
By assumption, $\mathsf{vote\textrm{-}code\textrm{-}2}$ is in neither of these three sets. Hence, by Claim~\ref{clm:safety_ic}, the probability that $\A$ computes $\mathsf{vote\textrm{-}code\textrm{-}2}$ is less than $\big(3(mn)^3+2^{25}(mn)^2+2^{64}mn\big)\cdot2^{-125}.$\medskip
\par\textbf{Case (iii).} 
In order for $V_\ell$ to obtain a receipt, 
at least $N_v-f_v$ VC nodes must collaborate by providing their shares. The faulty VC nodes are at most $f_v$, so at least $N_v-2f_v$ honest VC nodes will include $\langle \mathsf{serial\textrm{-}no}, \mathsf{vote\textrm{-}code} \rangle$ in their set of voted and pending ballots. Thus, Case (iii) cannot occur.\medskip
 \par Consequently, all the honest VC nodes will forward the agreed set of votes (hence, also $V_\ell$'s vote) to the BB nodes.
 By the fault tolerance threshold for the BB subsystem, the $f_b$ honest BB nodes will publish $V_\ell$'s vote. Finally, the $h_t$ out-of $N_t$ honest trustees will read $V$'s vote from the majority of BB nodes and include it in the election tally. Therefore,  the probability that $\A$ achieves in excluding the vote of at least one honest voter that obtained a valid receipt from the BB or the election tally is less than $\dfrac{nf_v}{2^{64}-f_v}-\big(3(mn)^3+2^{25}(mn)^2+2^{64}mn\big)\cdot2^{-125},$ which completes the proof.
  %
%
%
\end{proof}
%
%
\ifextended
\subsubsection{Safety of D-DEMOS/Async}\label{subsubsec:safety_async}
The safety of D-DEMOS/Async is founded on the certificate generation mechanism among the VC nodes, which in turn exploits the security of the underlying signature scheme. 
\begin{theorem}[\textbf{Safety of D-Demos/Async}]\label{thm:sec_safety_async}
Let $\A$ be an adversary against D-DEMOS under the model described in Section~\ref{subsec:threat} that corrupts up to $f_v<N_v/3$ VC nodes, up to $f_b<N_b/2$ BB nodes and up to $N_t-h_t$ out-of $N_t$ trustees.  
Then, all honest voters who received a valid receipt from a VC node, are assured that their vote will be published on the honest BB nodes and included in the election tally, with probability at least 
\[1-\dfrac{nf_v}{2^{64}-f_v}-\mathsf{negl}(\lambda)\;.\]
\end{theorem}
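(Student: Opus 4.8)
The plan is to follow the blueprint of the proof of Theorem~\ref{thm:safety_ic}, replacing the interactive-consistency step by one built on the asynchronous binary-consensus primitive and the uniqueness-certificate mechanism of Section~\ref{subsec:VcnodesAsync}, and replacing the concrete security bounds by asymptotic ones. Concretely, I would prove the theorem in three stages: (1) a hardness claim analogous to Claim~\ref{clm:safety_ic}; (2) a $\mathsf{UCERT}$-uniqueness lemma resting on the EUF-CMA security of RSA signatures; and (3) the vote-set-consensus-plus-contract argument that combines (1) and (2) with the properties of binary consensus and the fault-tolerance thresholds of Section~\ref{subsec:threat}.

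\emph{Stage (1).} First I would show that the probability $\A$ outputs a vote code from the ballot of some honest voter that was never cast by that voter is $\negl(\lambda)$. The argument is essentially that of Claim~\ref{clm:safety_ic}: conditioning on the random generation of the $2mn$ vote codes, the hiding property of the commitment scheme, and the perfect simulatability of the Chaum--Pedersen proofs, the only vote-code-related information available to $\A$ reduces to the VC initialization data of the corrupted VC nodes, the list of AES-128-CBC\$ encryptions in the BB data, and the union $\mbf{C}_1\cup\mbf{C}_2\cup\mbf{C}_3$ of submitted / unused-part / corrupted-voter vote codes. A hybrid reduction to the IND-CPA security of AES-128-CBC\$ together with the collision resistance of SHA-256 then bounds the probability of producing any further valid vote code; since no concrete parameters are fixed here, this directly yields a $\negl(\lambda)$ bound.

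\emph{Stage (2).} Next I would establish that, for every ballot, at most one vote code can be accompanied by a valid $\mathsf{UCERT}$. A $\mathsf{UCERT}$ is a collection of $N_v-f_v$ ENDORSEMENT signatures on a pair $(\mathsf{serial\textrm{-}no},\mathsf{vote\textrm{-}code})$; any two sets of $N_v-f_v$ VC nodes intersect in at least $N_v-2f_v\geq f_v+1$ nodes, hence in at least one honest node, and an honest VC node issues at most one ENDORSEMENT per ballot. Consequently, two valid $\mathsf{UCERT}$s for distinct vote codes of the same ballot would contain a signature of an honest VC node on a message it never signed; a standard reduction turns such an $\A$ into an EUF-CMA forger against RSA, so this event has probability $\negl(\lambda)$.

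\emph{Stage (3).} Finally I would prove vote-set consensus and the contract. Using the agreement, validity (non-triviality) and termination of the binary-consensus primitive, all honest VC nodes decide, per ballot, the same bit. The central observation is the quorum intersection around the \ANNOUNCE{} phase: if honest voter $V_\ell$ obtained a valid receipt for $\langle\mathsf{serial\textrm{-}no},\mathsf{vote\textrm{-}code}\rangle$, then $N_v-f_v$ VC nodes disclosed receipt shares for it after each verifying its $\mathsf{UCERT}$, so at least $N_v-2f_v\geq f_v+1$ of them are honest; since every honest node waits for $N_v-f_v$ \ANNOUNCE{} messages and $(N_v-2f_v)+(N_v-f_v)>N_v$, each honest node receives an \ANNOUNCE{} carrying $\mathsf{vote\textrm{-}code}$ together with its $\mathsf{UCERT}$ (steps~\ref{bc-step-announce}--\ref{bc-step-wait}), hence enters consensus with opinion $1$. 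By validity the decision is $1$, by agreement all honest nodes decide $1$, and each of them marks the ballot voted for the locally-known $\mathsf{vote\textrm{-}code}$ (step~\ref{bc-step-result-1}); the uniqueness lemma of Stage~(2) guarantees no honest node knows a conflicting certified vote code, and in the general mixed-opinion case the \RECREQ{}/\RECRES{} sub-protocol lets a node that entered with $0$ learn $\mathsf{vote\textrm{-}code}$ from some honest node that necessarily entered with $1$. Thus $V_\ell$'s vote is in the agreed set unless either the adversary guessed $V_\ell$'s $64$-bit receipt --- which, by the information-theoretic security of the $(N_v-f_v,N_v)$-VSS scheme and the at most $f_v$ malicious nodes, happens across all $n$ voters with probability at most $\sum_{\ell=1}^{n}\sum_{i=0}^{f_v-1}(2^{64}-i)^{-1}\leq \frac{nf_v}{2^{64}-f_v}$ --- or a forged $\mathsf{UCERT}$ or a non-submitted vote code appeared, which is $\negl(\lambda)$ by Stages~(1)--(2). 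Once the honest VC nodes forward the agreed set, the $N_b-f_b>f_b$ honest BB nodes publish it and the $h_t$ honest trustees include $V_\ell$'s vote in the tally, giving the stated bound $1-\frac{nf_v}{2^{64}-f_v}-\negl(\lambda)$. I expect the main obstacle to be making the Stage~(3) argument fully rigorous --- in particular the quorum-intersection reasoning around the \ANNOUNCE{} phase and the recovery sub-protocol across all mixed-opinion outcomes of binary consensus --- together with the EUF-CMA reduction of Stage~(2); the cryptographic reductions of Stage~(1) are essentially identical to the D-DEMOS/IC case and present no new difficulty beyond their asymptotic restatement.
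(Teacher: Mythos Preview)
Your proposal is correct, but it takes a noticeably different route from the paper's own proof, and in one respect it does more work than is needed.

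The paper does not follow the three-stage decomposition you outline. Instead it performs a short two-case analysis on the adversary's strategy for an honest voter $V_\ell$ who obtained a valid receipt: \textbf{Case~(i)} the receipt was produced \emph{without} a complete interaction with the VC subsystem (i.e., fewer than $f_v+1$ honest VC nodes disclosed shares), and \textbf{Case~(ii)} the receipt was properly reconstructed via such an interaction. In Case~(i) the adversary must either guess the $64$-bit receipt (yielding the $\frac{nf_v}{2^{64}-f_v}$ term exactly as you derive it) or forge honest nodes' signatures to obtain a fake $\mathsf{UCERT}$ (which is $\negl(\lambda)$ by EUF-CMA security of the signature scheme). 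In Case~(ii) the paper simply appeals to the protocol description of Section~\ref{subsec:VcnodesAsync} (the \ANNOUNCE{} exchange, binary consensus, and recovery sub-protocol) and to $\mathsf{UCERT}$ uniqueness to conclude that every honest VC node ends up with $V_\ell$'s certified vote code, except with $\negl(\lambda)$ probability; the forwarding to BB and trustees then goes through by the fault-tolerance thresholds, exactly as in your Stage~(3).

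The main substantive difference is your Stage~(1): the paper's proof of the Async theorem does \emph{not} reprove or invoke the vote-code-hardness claim (the analogue of Claim~\ref{clm:safety_ic}). In D-DEMOS/Async this claim is unnecessary for safety, because the $\mathsf{UCERT}$ mechanism already ensures that once a receipt for $\mathsf{vote\textrm{-}code}$ is issued, no other vote code for that ballot can be certified (absent a forgery); hence an adversary who somehow learned a second valid vote code still could not use it to displace $V_\ell$'s vote. Your Stage~(1) is therefore redundant---not wrong, but superfluous---whereas your Stages~(2) and~(3) together match the paper's Cases~(i) and~(ii), with your version spelling out the quorum-intersection and EUF-CMA reductions that the paper leaves implicit.
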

\begin{proof}
 Let $V_\ell$ be an honest voter. Then, $\A$'s strategy on attacking safety (i.e., provide a valid receipt to $V_\ell$ but force the VC subsystem to discard $V$'s ballot), is captured by either one of the two following cases: \textbf{Case (i):} $\A$ produces the receipt without being involved in a complete interaction with the VC subsystem (i.e., with at least $f_v+1$ honest VC nodes). \textbf{Case (ii):} $\A$ provides a properly reconstructed receipt via a complete interaction with the VC subsystem (in both cases we assume $\A$ controls the \emph{responder} VC node).
\par Let $E_1$ (resp. $E_2$) be the event that Case 1 (resp. Case 2)  happens. We study both cases:\\
 \par\textbf{Case (i).} In this case, $\A$ must produce a receipt that matches $V$'s ballot with less than $N_v-f_v$ shares. $\A$ may achieve this by either one of the following ways:
 \begin{enumerate}[1.]
  \item $\A$ attempts to guess the valid receipt; If $\A$ succeeds, then it can force the VC subsystem to consider $V$'s ballot not voted as no valid UCERT certificate will be generated for $V$'s ballot (malicious \emph{responder} does not send an ENDORSE message). As shown in the proof of Theorem~\ref{thm:safety_ic}, the probability of a successful guess for $\A$ is less than $\frac{nf_v}{2^{64}-f_v}\;.$
  %
 \item $\A$ attempts to produce fake UCERT certificates by forging digital signatures of other nodes. By the security of the digital signature scheme, this attack has $\negl(\lambda)$ success probability.
 \end{enumerate}
By the above, we have that $\Pr[\A\mbox{ wins }|E_1]\leq\dfrac{nf_v}{2^{64}-f_v}+\negl(\lambda)\;.$\\
%
 \par\textbf{Case (ii).} In this case, by the security arguments stated in Section~\ref{subsec:Vcnodes} (steps~\ref{bc-step-announce}-~\ref{bc-step-result-1}), every honest VC node will include the vote of $V_\ell$ in the set of voted tuples. This is because a) it locally knows the valid (certified) vote code for $V_\ell$ which is  accompanied by UCERT or b) it has obtained the valid vote code via a RECOVER-REQUEST message. Recall that unless there are fake certificates (which happens with negligible probability) there can be only one valid vote code for $V_\ell$.
 \par Consequently, all the honest VC nodes will forward the agreed set of votes (hence, also $V_\ell$'s vote) to the BB nodes.
 By the fault tolerance threshold for the BB subsystem, the $f_b$ honest BB nodes will publish $V$'s vote. Finally, the $h_t$ out-of $N_t$ honest trustees will read $V_\ell$'s vote from the majority of BB nodes and include it in the election tally. Thus, we have that  $\Pr[\A\mbox{ wins }|E_2]=\negl(\lambda)\;.$\\
%
\par Therefore, all the votes of honest voters that obtained a valid receipt, will be published on the honest BB nodes and included in the election tally, with probability at least \[1-\Pr[\A\mbox{ wins }]\geq1-\Pr[\A\mbox{ wins }|E_1]-\Pr[\A\mbox{ wins }|E_2]\geq1-\dfrac{nf_v}{2^{64}-f_v}-\mathsf{negl}(\lambda)\;.\]
\end{proof}
%
%
\else
%
\fi
\subsection{End-to-end Verifiability}\label{subsec:sec_e2e}
	\begin{boxfig}{\label{fig:int.game}The E2E Verifiability Game between the challenger $\Ch$ and the adversary $\mathcal{A}$ using the vote extractor $\mathcal{E}$.}{}	
			\underline{E2E Verifiability Game $G_{\mathsf{e2e\textrm{-}ver}}^{\mathcal{A},\mathcal{E},d,\theta}(1^\la,m,n,N_v,N_b,N_t)$}
	\begin{enumerate}[(i).]
	\item[]
		\item $\mathcal{A}$ on input $1^\lambda, n, m,N_v,N_b,N_t$, chooses a list of options $\{\mathsf{option}_1,\ldots,\mathsf{option}_m\}$, a set of voters $\mathcal{V}=\{V_1,\ldots,V_n\}$, a set of VC nodes $\mathcal{VC}=\{\mathsf{VC}_1,\ldots,\mathsf{VC}_{N_v}\}$, a set of 
  BB nodes $\mathcal{BB}=\{\mathsf{BB}_1,\ldots,\mathsf{BB}_{N_b}\}$, and a set of trustees $\mathcal{T}=\{T_1,\ldots,T_{N_t}\}$. It provides the challenger $\mathsf{Ch}$ with all the above sets. Throughout the game, $\mathcal{A}$ controls the EA, all the VC nodes and  all the trustees. In addition, $\mathcal{A}$ may corrupt a fixed set of less than $\lfloor N_b/2\rfloor$ BB nodes,  denoted by $\mathcal{BB}_\mathsf{succ}$  (i.e., the majority of the BB nodes remain honest). On the other hand, $\mathsf{Ch}$ plays the role of all the honest BB nodes.
		\item
		$\mathcal{A}$ and $\Ch$ engage in an interaction
		where $\mathcal{A}$ schedules the vote casting executions of all voters. 
		For each voter $V_\ell$, $\mathcal{A}$ can either completely control the voter 
		or allow $\Ch$ to operate on $V_\ell$'s behalf, in which case  $\mathcal{A}$ provides $\Ch$ with an option selection $\mathsf{option}_{i_\ell}$. 
		Then, $\Ch$ casts a vote for $\mathsf{option}_{i_\ell}$,  and, provided the voting execution terminates successfully,  $\Ch$ obtains
		the audit information $\mathsf{audit}_\ell$ on behalf of $V_\ell$.
		\item Finally, $\mathcal{A}$ posts a version of the election transcript $\mathsf{info}_j$ in every honest BB node $\mathsf{BB}_j\notin\mathcal{BB}_\mathsf{corr}$.
	\end{enumerate}   
	Let $\mathcal{V}_\mathsf{succ}$ be the set of honest voters (i.e., those controlled by  $\Ch$) that terminated successfully. The game returns a bit which is $1$ if and
	only if the  following conditions hold true:
	\begin{enumerate}[(1)]
	        \item\label{e2e-cond2} $\forall\mathsf{BB}_{j}, \mathsf{BB}_{j'}\notin\mathcal{BB}_\mathsf{corr}:$ $\mathsf{info}_j=\mathsf{info}_{j'}:=\mathsf{info}$ 
		\item\label{e2e-cond3} $ | \mathcal{V}_\mathsf{succ}|  \geq \theta$ (i.e., at least $\theta$ honest voters terminated). 
		\item\label{e2e-cond4} $\forall\ell\in[n]:$ if $ V_\ell\in\mathcal{V}_\mathsf{succ}$ then $V_\ell$ verifies successfully, when given $(\mathsf{info},\mathsf{audit}_\ell)$ as input. 
		\medskip
\item[]	and either one of the following two conditions: \medskip
 \item 
	\begin{enumerate}
		\item\label{e2e-cond5a}  if $\bot\neq \langle\mathsf{option}_{i_\ell}\rangle_{ V_\ell\notin\mathcal{V}_\mathsf{succ}} \leftarrow$ $\mathcal{E}(\mathsf{info}, \{\mathsf{audit}_\ell\}_{ V_\ell\in\mathcal{V}_\mathsf{succ}} )$ then
		 \[\mathrm{d}_1\big(\mbf{Result}(\mathsf{info}),F(\mathsf{option}_{i_1}\ldots,\mathsf{option}_{i_n})\big) \geq d\;.\] 
		\item\label{e2e-cond5b} $\bot \leftarrow \mathcal{E}(\mathsf{info}, \{\mathsf{audit}_\ell\}_{ V_\ell\in\mathcal{V}_\mathsf{succ}} )$.  
	\end{enumerate}
\end{enumerate}
	\end{boxfig}
We adopt the end-to-end (E2E) verifiability definition in~\cite{DEMOS}, modified accordingly to our setting. Namely, we encode the options set $\{\mathsf{option}_1,\ldots,\mathsf{option}_m\}$, where the encoding of $\mathsf{option}_i$ is an $m$-bit string which is $1$ only in the $i$-th position. Let $F$ be the \emph{election evaluation function} such that $F(\mathsf{option}_{i_1}\ldots,\mathsf{option}_{i_n})$ is equal to an $m$-vector whose $i$-th location is equal to the number of times $\mathsf{option}_i$ was voted. 
Then, we use the metric $\mr{d}_1$ derived by the L1-norm scaled to half, i.e., $\mr{d}_1(R,R')=\frac{1}{2}\cdot\sum_{i=1}^n|R_i-R'_i|$, 
where $R_i,R'_i$ is the $i$-th coordinate of $R,R'$ respectively, to measure the success probability of the adversary with respect to the amount of tally deviation $d$ and the number of voters that perform audit $\theta$.
In addition, we make use of a \emph{vote extractor} algorithm $\mathcal{E}$ (not necessarily running in polynomial-time) that extracts the non-honestly cast votes.
\par We define E2E verifiability  via an attack game between a challenger and an adversary  specified in detail in Figure~\ref{fig:int.game}. 
\begin{definition}[\textsc{\textbf{E2E Verifiability}}]\label{def:int.dfnt}
	Let $0<\epsilon<1$ and $n,m,N_v,N_b,N_t\in\mathbb{N}$ polynomial in the security parameter $\lambda$ with $\theta\leq n$. 
	Let $\Pi$ be an e-voting system with $n$ voters, $N_v$ VC nodes, $N_b$ BB nodes and $N_t$ trustees.
	We say that $\Pi$  achieves \emph{end-to-end verifiability} 
		with error $\epsilon$,
	w.r.t. the election  function $F$,
	a number of $\theta$ honest successful voters and tally deviation $d$
	if there exists a 
	(not necessarily polynomial-time) vote extractor $\mathcal{E}$  such that 
	for any PPT adversary $\A$ it holds that
	\[\Pr[G_{\mathsf{e2e\textrm{-}ver}}^{\mathcal{A},\mathcal{E},d,\theta}(1^\la,m,n,N_v,N_b,N_t)=1] \leq \epsilon.\]
\end{definition}
\vspace{5pt}
\ifextended
To prove E2E verifiability of D-DEMOS, we need a min-entropy variant of the Schwartz-Zippel lemma, to check the equality of two univariate polynomials $p_1,p_2$, i.e., test $p_1(x)-p_2(x)=0$ for random $x\pick{D} \mathbb{Z}_q$, where $q$ is prime. The probability that the test passes is at most $\frac{\max(d_1,d_2)}{2^\kappa}$ if $p_1\neq p_2$, where $d_i$ is the degree of $p_i$ for $i\in\set{1,2}$. 
We leverage Lemma~\ref{lem:SZ} from \cite{DEMOS}. 

\begin{lemma}[\textsc{\textbf{Min-entropy Schwartz-Zippel~\cite{DEMOS}}}]\label{lem:SZ}
Let $q$ be a prime and $p(x)$ be a non-zero univariate polynomial of degree $d$ over $\mathbb{Z}_q$. Let $D$ be a probability distribution on $\mathbb{Z}_q$ such that $H_{\infty}(D)\geq \kappa$. The probability of $p(x)=0$ for a randomly chosen $x\pick{D} \mathbb{Z}_q$ is at most $\frac{d}{2^\kappa}$. 
\end{lemma}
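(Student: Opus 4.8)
The plan is to reduce the statement to two elementary facts and combine them with a union bound; honestly there is no hard step here, so the ``proof'' is really a short verification, and the one thing worth flagging is how the parameter $\kappa$ gets instantiated when the lemma is applied. First I would use that $q$ is prime, hence $\mathbb{Z}_q$ is a field. Over a field, a non-zero univariate polynomial of degree $d$ has at most $d$ roots. Let $R = \{x \in \mathbb{Z}_q : p(x) = 0\}$ be the root set of $p$; then $|R| \le d$.

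Second, I would unfold the min-entropy hypothesis. By definition, $H_\infty(D) \ge \kappa$ is equivalent to $\Pr_{x \pick{D} \mathbb{Z}_q}[x = a] \le 2^{-\kappa}$ for every $a \in \mathbb{Z}_q$; that is, no single element of $\mathbb{Z}_q$ is sampled with probability exceeding $2^{-\kappa}$ under $D$. Combining the two observations, the event $p(x) = 0$ coincides with the event $x \in R$, and a union bound over the at most $d$ roots gives
\[
\Pr_{x \pick{D} \mathbb{Z}_q}[p(x) = 0] \;=\; \sum_{r \in R} \Pr_{x \pick{D} \mathbb{Z}_q}[x = r] \;\le\; |R|\cdot 2^{-\kappa} \;\le\; \frac{d}{2^\kappa},
\]
which is exactly the claimed bound.

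The only point that deserves a sentence of care is that this is precisely the variant needed later in the E2E verifiability analysis: there the ``challenge'' $x$ is not uniform but is derived from the honest voters' A/B part choices, so its distribution $D$ is only guaranteed to have min-entropy $\kappa$ growing with the number $\theta$ of honest auditing voters, rather than $\kappa = \log_2 q$. The lemma above is what lets one conclude that a cheating prover cannot pass a polynomial-identity test (and hence cannot fake a zero-knowledge proof of valid option encoding) except with probability $d/2^\kappa$. Since the statement is quoted verbatim from~\cite{DEMOS}, one could alternatively just cite it; but the self-contained argument above is immediate and I would include it for completeness.
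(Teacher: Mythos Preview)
Your argument is correct and is the standard elementary proof: at most $d$ roots over a field, combined with the pointwise bound $2^{-\kappa}$ from the min-entropy hypothesis, plus a union bound. There is nothing to fix.

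Regarding comparison with the paper: the paper does not prove this lemma at all. It simply states it and attributes it to~\cite{DEMOS} with the sentence ``We leverage Lemma~\ref{lem:SZ} from \cite{DEMOS}.'' You already anticipated this in your last paragraph. So your self-contained proof strictly goes beyond what the paper provides, which is fine and arguably preferable for a reader who does not have~\cite{DEMOS} at hand. Your closing remark about how $\kappa$ is instantiated (via the $\theta$ honest voters' A/B choices) is also accurate and matches how the lemma is used downstream in Lemma~\ref{lem:zk}.
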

\medskip
\par We now analyse the soundness of the zero knowledge proof for each option encoding commitment. Note that a correct option encoding is an $m$-vector, where one of the $m$ elements is $1$ and the rest elements are $0$ (a.k.a. unit vector).  Our zero knowledge proof utilizes the Chaum-Pedersen  DDH-tuple proofs~\cite{CP} in conjunction with the Sigma OR-composition technique~\cite{CDS94} to show each (lifted) ElGamal ciphertext encrypts either $0$ or $1$ and the product of all the $m$ ElGamal ciphertexts encrypts $1$. 
We adopt the soundness amplification technique from~\cite{DEMOS}; namely, if the voters' coins $\textbf{c}$ are longer than $\lfloor\log q\rfloor$ then we divide it into $\kappa$ blocks, $(\textbf{c}_1,\textbf{c}_2,\ldots,\textbf{c}_\kappa)$ such that each block has less than $\lfloor\log q\rfloor$ coins,  where $q$ is the order of the underlying group used in the ElGamal encryption. Given a statement $x$, for each $\textbf{c}_i$, $i\in[\kappa]$, the prover needs to produce the zero knowledge transcript $(x, \phi_{1,i},\textbf{c}_i,\phi_{2,i})$ in order. The verifier accepts the proof if and only if for all $i\in[\kappa]$, $\mathsf{Verify}(x, \phi_{1,i},\textbf{c}_i,\phi_{2,i}) = \mathsf{accept}$. Hence, we have the following Lemma~\ref{lem:zk}.\\[3pt]
\begin{lemma}\label{lem:zk}
Denote $\textbf{c} =(\textbf{c}_1,\textbf{c}_2,\ldots,\textbf{c}_\kappa)$. If $H_{\infty}(\textbf{c})=\theta$, we have for all adversaries $\mathcal{A}$:
$$
\varepsilon(m,n,\theta,\kappa) = \Pr\left[ \begin{array}{l}
(x,\{\phi_{1,i}\}_{i\in[\kappa]})\leftarrow\mc{A}(1^{\lambda}); \\
\{\phi_{2,i}\}_{i\in[\kappa]}\leftarrow\mc{A}(\textbf{c}_1,\textbf{c}_2,\ldots,\textbf{c}_\kappa):\\
x \textrm{ is not a valid option encoding commitment }\\ 
\wedge \forall i \in[\kappa],  \mathsf{Verify}(x, \phi_{1,i},\textbf{c}_i,\phi_{2,i}) = \mathsf{accept}
\end{array}\right] \leq 2^{-\theta}\;.
$$
\end{lemma}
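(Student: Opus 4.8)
The plan is to reduce the statement to the special soundness of the underlying $\Sigma$-protocol together with the min-entropy of the challenge string $\textbf{c}$. Recall that, for a fixed statement $x$ consisting of $m$ (lifted) ElGamal ciphertexts, the proof is the conjunction --- under a common challenge --- of $m$ OR-proofs (the $j$-th showing that the $j$-th ciphertext encrypts $0$ or $1$) and one Chaum--Pedersen DDH-tuple proof~\cite{CP} (showing that the product of the $m$ ciphertexts encrypts $1$). Each component is a $3$-move public-coin proof, so the composed protocol is itself a $\Sigma$-protocol of the form $(x,\phi_1,c,\phi_2)$, the block size of each $\textbf{c}_i$ being chosen so that $\textbf{c}_i$ is a valid challenge in $\mathbb{Z}_q$.

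The first step is a ``unique bad challenge'' lemma: if $x$ is \emph{not} a valid option encoding commitment, then for every first message $\phi_1$ there is at most one challenge value $c^{\ast}=c^{\ast}(x,\phi_1)$ admitting an accepting response $\phi_2$. Indeed, invalidity of $x$ forces at least one component statement to be false --- either some ciphertext encrypts a value outside $\{0,1\}$, or the product ciphertext does not encrypt $1$. For a false Chaum--Pedersen statement the verification relations form a non-zero degree-$1$ polynomial in the challenge (equivalently, two accepting transcripts with distinct challenges would yield a witness by special soundness, a contradiction), so by Lemma~\ref{lem:SZ} at most one challenge is accepting; the same holds for a false OR-statement via the soundness of the OR-composition of~\cite{CDS94} (at most one $c_0$ and at most one $c_1$ are accepting for the two false disjuncts, hence at most one $c = c_0 + c_1$). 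Since all components share the single challenge $c$, an invalid $x$ pins $c$ down to at most one value.

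Next I would lift this to the amplified proof. Fix the adversary's first-stage output $(x,\{\phi_{1,i}\}_{i\in[\kappa]})$, produced \emph{before} $\textbf{c}=(\textbf{c}_1,\dots,\textbf{c}_\kappa)$ is sampled. On the winning event, $x$ is invalid and $\mathsf{Verify}(x,\phi_{1,i},\textbf{c}_i,\phi_{2,i})=\mathsf{accept}$ for every $i\in[\kappa]$; by the previous step this forces $\textbf{c}_i = c^{\ast}(x,\phi_{1,i})$ for all $i$. Hence the winning event is contained in the event that $\textbf{c}$ equals the (at most one) fixed vector $\big(c^{\ast}(x,\phi_{1,1}),\dots,c^{\ast}(x,\phi_{1,\kappa})\big)$, a value determined entirely by data fixed before $\textbf{c}$ is drawn. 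Since $H_\infty(\textbf{c})=\theta$, any fixed value of $\textbf{c}$ occurs with probability at most $2^{-\theta}$, so $\varepsilon(m,n,\theta,\kappa)\le 2^{-\theta}$.

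The main obstacle is really the bookkeeping in the first step: arguing carefully that invalidity of the composed option encoding (not a unit vector, or not summing to $1$) always surfaces as a false component sub-statement, and that sharing a common challenge across the $m+1$ components introduces no extra accepting challenges. A secondary subtlety worth flagging is that the min-entropy hypothesis is on the joint vector $\textbf{c}$ rather than on the individual blocks $\textbf{c}_i$; this is handled by phrasing the bad event as a single point in the support of $\textbf{c}$'s distribution, so no independence among blocks is ever needed --- which is precisely why the final bound is $2^{-\theta}$ rather than something $\kappa$-dependent.
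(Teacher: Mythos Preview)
Your argument is correct and rests on the same core observation as the paper's proof: for a false statement, special soundness of the Chaum--Pedersen / OR $\Sigma$-protocol pins each block's challenge to at most one value, and then min-entropy of $\mathbf{c}$ finishes the bound. The difference is in how the min-entropy is used. The paper argues blockwise: it sets $H_\infty(\mathbf{c}_i)=\theta_i$, bounds the $i$-th acceptance probability by $2^{-\theta_i}$ via Lemma~\ref{lem:SZ}, and then multiplies, obtaining $\prod_i 2^{-\theta_i}=2^{-\sum_i\theta_i}=2^{-\theta}$. That last chain tacitly assumes both that $\sum_i H_\infty(\mathbf{c}_i)=H_\infty(\mathbf{c})$ and that the per-block events are independent, which is fine when the voters' coins are independent but is not a consequence of the lemma's hypothesis $H_\infty(\mathbf{c})=\theta$ alone. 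Your route sidesteps this entirely: you observe that winning forces $\mathbf{c}$ to equal a single vector fixed before sampling, so one application of the definition of joint min-entropy gives $2^{-\theta}$ with no independence needed. This is cleaner and slightly more general; the paper's version buys nothing in exchange, so your organization is preferable.
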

\begin{proof}
For $i\in{\kappa}$, denote $H_{\infty}(\textbf{c}_i)=\theta_i$, and $\sum^{\kappa}_{i=1} \theta_i = \theta$.
Chaum-Pedersen  DDH-tuple proof~\cite{CP} internally constructs and checks a degree-$1$ polynomial; therefore according to Lemma~\ref{lem:SZ}, the probability that the adversary $\mc{A}$ to cheat a single DDH-tuple zero knowledge proof is at most $2^{-\theta'}$, where $\theta'$ is the min-entropy of the challenge. Moreover, Sigma OR-composition technique~\cite{CDS94} perfectly maintains the soundness, so the probability that the adversary $\mc{A}$ to cheat the zero knowledge proofs for each (lifted) ElGamal ciphertext encrypts $0/1$ is at most $2^{-\theta'}$. Note that the zero knowledge proofs of the option encoding commitment is AND-composition of all the elementary zero knowledge proofs, the probability that $x$ is invalid and $\mathsf{Verify}(x, \phi_{1,i},\textbf{c}_i,\phi_{2,i}) = \mathsf{accept}$ is at most $2^{-\theta_i}$. Hence, the probability that $\forall i\in[\kappa]$, $\mathsf{Verify}(x, \phi_{1,i},\textbf{c}_i,\phi_{2,i}) = \mathsf{accept}$ is
$\varepsilon(m,n,\theta,\kappa) = \prod_{i=1}^{\kappa} 2^{-\theta_i} = 2^{- \sum_{i=1}^{\kappa}\theta_i} = 2^{-\theta}.$
\end{proof}

Applying Lemma~\ref{lem:zk}, we prove that D-DEMOS (both the IC and the Async version) achieves E2E verifiability according to Definition~\ref{def:int.dfnt}.
\fi
\ifextended
\begin{proof}
  Without loss of generality, we can assume that every party can read consistently the data published in the majority of the BB nodes, as otherwise the adversary fails to satisfy either condition~\ref{e2e-cond1} or~\ref{e2e-cond2} of the E2E verifiability game.
 \par We first construct a vote extractor $\mc{E}$ for D-DEMOS as follows:
 \begin{small}
 \begin{framed}
\begin{itemize}
\renewcommand{\labelitemi}{$\bullet$}
 \item  $\mc{E}$ takes input as the election transcript, $\mathsf{info}$ and a set of audit information $\set{\mathsf{audit}_\ell}_{V_\ell\in\mathcal{V}_\mathsf{succ}}$. If $\mathsf{info}$ is not meaningful, then $\mc{E}$ outputs $\bot$. 
 \item  Let $B\leq|\tc{V}|$ be the number of different serial numbers that appear in $\set{\mathsf{audit}_\ell}_{V_\ell\in\tc{V}}$. $\mc{E}$ (arbitrarily) arranges the voters in $V_\ell\in\mathcal{V}_\mathsf{succ}$ and the serial numbers not included in
  $\set{\mathsf{audit}_\ell}_{V_\ell\in\mathcal{V}_\mathsf{succ}}$ as $\langle V^{\mc{E}}_{\ell}\rangle_{\ell\in[n-|\mathcal{V}_\mathsf{succ}|]}$ and $\langle\mr{tag}^{\mc{E}}_{\ell}\rangle_{\ell\in[n-B]}$ respectively.
  \item For every  $\ell\in[n-|\mathcal{V}_\mathsf{succ}|]$, $\mc{E}$ extracts $\mathsf{option}_{i_\ell}$ by brute force opening and decrypting (in superpolynomial time) all the committed and encrypted BB data, or sets $\mathsf{option}_{i_\ell}$ as the zero vector, in case $V_\ell$'s vote is not published in the BB. 
  \item If there is an invalid option-commitment (i.e., it is not a commitment to some candidate encoding) ,then $\mc{E}$ outputs $\bot$. Otherwise, it outputs $\langle\mathsf{option}_{i_\ell}\rangle_{ V_\ell\notin\mathcal{V}_\mathsf{succ}}$.
\end{itemize}
\end{framed}
\end{small}
\par We will prove the E2E verifiability of D-DEMOS based on $\mc{E}$. Assume an adversary $\A$ that wins the game $G_{\mathsf{e2e\textrm{-}ver}}^{\mathcal{A},\mathcal{E},d,\theta}(1^\la,m,n,N_v,N_b,N_t)$. Namely, $\A$ breaks E2E verifiability by allowing at least $\theta$ honest successful voters and achieving tally deviation $d$. \vspace{2pt}
\par Let $Z$ be the event that $\A$ attacks by making at least one of the option-encoding commitments associated with some cast vote code invalid (i.e., it is in tally set $\mathbf{E}_{\mathsf{tally}}$ but it is not a commitment to some candidate encoding). By condition~\ref{e2e-cond3}, there are at least 
$\theta$ honest and successful voters, hence the min-entropy of the collected voters' coins is at least $\theta$. By Lemma~\ref{lem:zk}, the zero-knowledge proofs used in D-DEMOS for committed ballot correctness in the BB is sound except for some probability error $2^{-\theta}$.
Since $\theta\geq1$ and condition~\ref{e2e-cond4} holds, there is at least one honest voter that verifies, thus we have that
$\Pr[G_{\mathsf{e2e\textrm{-}ver}}^{\mathcal{A},\mathcal{E},d,\theta}(1^\la,m,n,N_v,N_b,N_t)=1\wedge Z]\leq 2^{-\theta}\;.$\\
%
\par Now assume that $Z$ does not occur. In this case, the vote extractor $\mc{E}$ will output the intended adversarial votes up to permutation. Thus, the deviation from the intended result that $\A$ achieves, derives only by miscounting the honest votes. This may be achieved by $\A$ in two different possible ways:
\begin{itemize}
 \item \textbf{Modification attacks. } When committing to the information of some honest voter's ballot part $\A$ changes the vote code and option
correspondence that is printed in the ballot. This attack will be detected if the voter does chooses to audit with the modified ballot part (it uses the other part to vote). The maximum deviation achieved by this attack is $1$ (the vote will count for another candidate).
\item \textbf{Clash attacks. } $\A$ provides $y$ honest voters with ballots that have the same serial number, so that the adversary can inject $y-1$ votes of his preference in the $y-1$ ``empty'' audit locations in the BB. This attack is successful only if all the $y$ voters verify the same ballot on the BB and hence miss the injected votes that produce the tally deviation. The maximum deviation achieved by this attack is $y-1$.
\end{itemize}
We stress that if $Z$ does not occur, then the above two attacks are the only meaningful\footnote{By meaningful we mean that the attack is not trivially detected. For example, the adversary may post malformed information in the BB nodes but if so, it will certainly fail at verification.} for $\A$ to follow. Indeed, if (i) all zero knowledge proofs are valid, (ii) all the honest voters are pointed to a unique audit BB location indexed by the serial number on their ballots, and (iii) the information committed in this BB location matches the vote code and option association in the voters' unused ballot parts, then by the binding property of the commitments, all the tally computed by the commitments included in $\mathbf{E}_{\mathsf{tally}}$ will decrypt to the actual intended result.
\par Since the honest voters choose the used ballot parts at random, the success probability of $x$ deviation via the modification attack is $(1/2)^x$. In addition,  the success probability to clash $y$ honest voters is $(1/2)^{y-1}$ (all $y$ honest voters choose the same version to vote). As a result, by combinations of modification and clash attacks, $\A$'s success probability reduces by a factor $1/2$ for every unit increase of tally deviation. Therefore, the upper bound of the success probability of $\A$ when $Z$ does not occur is 
$ \Pr[G_{\mathsf{e2e\textrm{-}ver}}^{\mathcal{A},\mathcal{E},d,\theta}(1^\la,m,n,N_v,N_b,N_t)=1\mid\neg Z]\leq 2^{-d}\;.$\\
%
\par Hence, we conclude that  $\Pr[G_{\mathsf{e2e\textrm{-}ver}}^{\mathcal{A},\mathcal{E},d,\theta}(1^\la,m,n,N_v,N_b,N_t)=1]\leq 2^{-\theta}+2^{-d}\;.$
\end{proof}	
%
%
\else
To prove the E2E verifiability of D-DEMOS, we provide the following lemma.
The lemma proof, that we omit due to space limitations, is based on the min-entropy variant of the Schwartz-Zippel lemma utilised in~\cite{DEMOS}.
\begin{lemma}\label{lem:zk}
If the zero-knowledge challenge $\mathbf{c}$ has min-entropy $H_{\infty}(\mathbf{c})=\theta$, then the probability that an adversary $\mathcal{A}$ generates a fake zero knowledge proof for the validity of an option-encoding commitment that verifies succsefully is no more than $2^{-\theta}$.
\end{lemma}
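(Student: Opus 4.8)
The plan is to reduce the soundness of the composite proof to that of its atomic $\Sigma$-protocol components, bound each component by the min-entropy variant of the Schwartz-Zippel lemma used in~\cite{DEMOS}, and recombine the bounds through the AND-composition structure and the block-wise soundness amplification that D-DEMOS inherits from the same work. First I would fix what is actually being proved: an option encoding is a unit vector $\vec{e}_i$, and $\Com(\vec{e}_i)$ is a vector of $m$ lifted ElGamal ciphertexts over a group of prime order $q$; the attached proof is the AND-composition of (a) for each of the $m$ coordinates, a $\Sigma$ OR-proof (obtained via CDS OR-composition~\cite{CDS94} of two Chaum-Pedersen DDH-tuple proofs~\cite{CP}) asserting that the coordinate ciphertext encrypts $0$ or $1$, and (b) one Chaum-Pedersen proof that the product of the $m$ ciphertexts encrypts $1$. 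The key structural observation is that if $\Com(\vec{e}_i)$ is \emph{not} a valid option encoding, then at least one of these $m+1$ atomic statements is false.

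Next I would bound the per-challenge soundness error. A Chaum-Pedersen DDH-tuple proof amounts to the verifier checking a fixed degree-$1$ univariate polynomial identity over $\mathbb{Z}_q$ at the challenge point; when the tuple is not a DDH tuple this identity is non-trivial, so by the min-entropy Schwartz-Zippel lemma the check accepts with probability at most $2^{-\theta'}$ for a challenge drawn from a distribution of min-entropy $\theta'$, since the degree is $d=1$. Since CDS OR-composition is perfectly sound, each coordinate's $0/1$-proof inherits this $2^{-\theta'}$ error, and in an AND-composition under a shared challenge of min-entropy $\theta'$ the presence of a single false atomic statement still bounds the overall acceptance probability by $2^{-\theta'}$.

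Then I would handle the amplification. Because the aggregated voters' coins $\mathbf{c}$ are longer than $\lfloor\log q\rfloor$, they are split into $\kappa$ blocks $(\mathbf{c}_1,\dots,\mathbf{c}_\kappa)$ with $\sum_{i=1}^{\kappa} H_{\infty}(\mathbf{c}_i) = H_{\infty}(\mathbf{c}) = \theta$, and verification succeeds only if the prover outputs a transcript $(x,\phi_{1,i},\mathbf{c}_i,\phi_{2,i})$ that verifies for \emph{every} block $i$. By the previous step a cheating prover succeeds in block $i$ with probability at most $2^{-H_{\infty}(\mathbf{c}_i)}$, and since the blocks are disjoint parts of $\mathbf{c}$ the joint success probability is at most $\prod_{i=1}^{\kappa} 2^{-H_{\infty}(\mathbf{c}_i)} = 2^{-\sum_i H_{\infty}(\mathbf{c}_i)} = 2^{-\theta}$. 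This covers any PPT $\mathcal{A}$ that first commits to $(x,\{\phi_{1,i}\}_i)$, is then handed the blocks, and finally outputs $\{\phi_{2,i}\}_i$, which is exactly the order of moves in the statement.

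The main obstacle I anticipate is the bookkeeping in the amplification step rather than any single hard estimate: I must argue carefully that after the prover's first move $\phi_{1,i}$ the block $\mathbf{c}_i$ is genuinely a fresh one-shot challenge for the full statement $x$, so that the per-block soundness errors truly multiply and cannot be exploited adaptively across blocks, and I must verify that cutting $\mathbf{c}$ into pieces each shorter than $\lfloor\log q\rfloor$ keeps every block a legitimate challenge value while additivity of min-entropy is preserved. Once those points are settled, the bound follows by combining the min-entropy Schwartz-Zippel lemma, the perfect soundness of CDS OR-composition, and the AND-composition argument above.
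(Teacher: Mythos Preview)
Your proposal is correct and follows essentially the same approach as the paper: reduce to the min-entropy Schwartz-Zippel bound for the degree-$1$ Chaum--Pedersen check, note that CDS OR-composition preserves soundness, use the AND-composition to inherit the per-block bound $2^{-\theta_i}$ from a single false atomic statement, and multiply across the $\kappa$ blocks to get $2^{-\sum_i\theta_i}=2^{-\theta}$. You are in fact more explicit than the paper about the two delicate points (additivity of min-entropy across blocks and independence of the per-block events), which the paper simply asserts.
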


\fi
Applying Lemma~\ref{lem:zk}, the following theorem states that D-DEMOS (both the IC and the Async version) achieves E2E verifiability according to Definition~\ref{def:int.dfnt}.
\begin{theorem}[\textsc{\textbf{E2E Verifiability of D-Demos}}] Let $n,m,N_v,N_b,N_t,\theta, d\in\mathbb{N}$ where $1\leq\theta\leq n$.
Then, D-DEMOS run with $n$ voters, $m$ options, $N_v$ VC nodes, $N_b$ BB nodes and $N_t$ trustees achieves end-to-end  
		with error $2^{-\theta}+2^{-d}$,
	w.r.t. the election  function $F$,
	a number of $\theta$ honest successful voters and tally deviation $d$.
\end{theorem}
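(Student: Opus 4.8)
The plan is to follow the template of the end-to-end verifiability proof of~\cite{DEMOS}, instantiated for the distributed architecture of D-DEMOS, reducing the adversary's winning probability in the game of Figure~\ref{fig:int.game} to two independent failure events bounded by $2^{-\theta}$ and $2^{-d}$. First I would note that since the game outputs $1$ only when condition~\ref{e2e-cond2} holds, we may assume throughout that every auditor (and every honest voter) reads a single consistent transcript $\mathsf{info}$ from the honest majority of \BB{} nodes; this lets us treat the distributed \BB{} exactly like the single consistent bulletin board of the centralized setting, so that no part of the argument needs to distinguish the IC from the Async variant. Next I would exhibit the (not necessarily polynomial-time) vote extractor $\mathcal{E}$: on input $\mathsf{info}$ and the audit strings $\{\mathsf{audit}_\ell\}_{V_\ell\in\mathcal{V}_\mathsf{succ}}$ it outputs $\bot$ if $\mathsf{info}$ is malformed or if some option-encoding commitment placed in the tally set $\mathbf{E}_{\mathrm{tally}}$ does not open to a unit vector; otherwise it brute-force opens the committed and encrypted payloads attached to the cast vote codes of the non-honest voters (using a zero vector for any such voter whose vote was not published on the \BB{}) and returns the resulting option list.

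The core is a case split on the event $Z$ that some commitment in $\mathbf{E}_{\mathrm{tally}}$ is invalid. When $Z$ occurs, $\mathcal{E}$ outputs $\bot$, so for the game to return $1$ every honest successful voter must nonetheless verify (condition~\ref{e2e-cond4}), and verification includes checking all Chaum--Pedersen proofs attached to the used ballot parts. By condition~\ref{e2e-cond3} there are at least $\theta$ honest successful voters, whose independent uniform A/B choices form the challenge $\mathbf{c}$ of those proofs, so $H_{\infty}(\mathbf{c})\geq\theta$; Lemma~\ref{lem:zk} then bounds by $2^{-\theta}$ the probability that a fake proof for an invalid commitment verifies, giving $\Pr[G=1\wedge Z]\leq 2^{-\theta}$.

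When $Z$ does not occur, $\mathcal{E}$ recovers the adversarial votes correctly up to the (irrelevant) ordering of serial numbers, so any deviation $\mathrm{d}_1(\mathbf{Result}(\mathsf{info}),F(\mathsf{option}_{i_1},\ldots,\mathsf{option}_{i_n}))\geq d$ must come entirely from miscounting the honest voters' ballots. Here I would argue that, once $Z$ is excluded, the only non-trivially-detectable deviations available to the malicious \EA{} are \emph{modification} attacks — changing the committed vote-code/option association of one ballot part, caught with probability $1/2$ because the voter audits precisely the part she did not cast — and \emph{clash} attacks that assign $y$ honest voters ballots sharing one serial number, which succeed only if all $y$ of them happen to audit the same \BB{} slot, i.e.\ with probability $(1/2)^{y-1}$. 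The supporting claim, which I expect to be the main obstacle, is that whenever all zero-knowledge proofs verify, each honest voter is directed to a unique \BB{} audit location indexed by her serial number, and the data committed there matches her unused ballot part, the perfect binding of the lifted-ElGamal commitment scheme forces the homomorphic tally to open to the true result; establishing this cleanly requires a careful enumeration of the ballot-preparation and posting steps to rule out every other cheating avenue. Granting it, combinations of the two attacks reduce the adversary's success probability by a factor $1/2$ per unit of tally deviation, so $\Pr[G=1\wedge\neg Z]\leq 2^{-d}$. Summing the two cases yields $\Pr[G=1]\leq 2^{-\theta}+2^{-d}$ as claimed, and since nothing above uses any IC- or Async-specific property, the bound holds for both systems simultaneously.
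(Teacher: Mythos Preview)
Your proposal is correct and follows essentially the same approach as the paper's proof: the same brute-force vote extractor (aborting on malformed $\mathsf{info}$ or any invalid option-encoding commitment in $\mathbf{E}_{\mathrm{tally}}$), the same case split on the event $Z$, the application of Lemma~\ref{lem:zk} with min-entropy at least $\theta$ for the $Z$ branch, and the modification/clash-attack analysis yielding $2^{-d}$ for the $\neg Z$ branch. Your observation that condition~\ref{e2e-cond2} lets the distributed \BB{} be treated as a single consistent board, making the argument version-agnostic, is exactly how the paper handles it as well.
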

\begin{proof}(\emph{Sketch}). Without loss of generality, we can assume that every party can read consistently the data published in the majority of the BB nodes, as otherwise the adversary fails to satisfy condition \ref{e2e-cond2} of the E2E verifiability game. Via brute force search, the vote extractor $\mc{E}$ for D-DEMOS either (i) decrypts the adversarial votes (up to permutation) if all respective option-encoding commitments are valid, or (ii) aborts otherwise. We analyze the two cases
\par 
(i) If all option-encoding commitments are valid, then the output of $\mc{E}$ implies that the tally deviation that the adversary $\mc{A}$ can achieve may derive only by attacking the honest voter. Namely, by pointing the honest voter to audit in a BB location where the audit data is inconsistent with the respective information in at least one part of the voter's ballot. As in~\cite[Theorem 4]{DEMOS}, we can show that every such single attack has 1/2 success probability (the voter had chosen to vote with the inconsistent ballot part) and in case of success, adds 1 to the tally deviation. Thus, in this case, the probability that  $\mathcal{A}$ causes tally deviation $d$ is no more than $2^{-d}$.
\par
(ii) If there is an invalid option-encoding commitment ($\mc{E}$ aborts), then the min entropy provided by at least $\theta$ honest succesful voters is at least $\theta$. Thus, by Lemma~\ref{lem:zk}, the Sigma protocol verification will fail except from some soundness error $2^{-\theta}$.
\par The proof is completed by taking the union bound on the two cases.
\end{proof}

%
\subsection{Voter Privacy}\label{subsec:sec_priv}
%
\begin{boxfig}{\label{fig:priv.game} The Voter privacy Game between the adversary $\mathcal{A}$ and the challenger $\mathsf{Ch}$ using the simuator $\mathcal{S}$.}{}
  {\it \underline{Voter Privacy Game $G_{\mathsf{priv}}^{\mathcal{A},\mathcal{S},\phi}(1^{\la},n,m,N_v,N_b,N_t)$}}
  \begin{enumerate}[(i).]
  \item $\mathcal{A}$ on input $1^\lambda, n, m,N_v,N_b,N_t$, chooses a list of options $\mathcal{P}=\{P_1,\ldots,P_m\}$, a set of voters $\mathcal{V}=\{V_1,\ldots,V_n\}$, a set of trustees $\mathcal{T}=\{T_1,\ldots,V_{N_t}\}$, a set of VC nodes $\{\mathsf{VC}_1,\ldots,\mathsf{VC}_{N_v}\}$ a set of 
  BB nodes $\{\mathsf{BB}_1,\ldots,\mathsf{BB}_{N_b}\}$.
 It provides $\mathsf{Ch}$ with all the above sets. 
 \indent Throughout the game, $\mathcal{A}$ corrupts all the VC nodes a fixed set of $f_b<N_b/3$ BB nodes and a fixed set of $f_t<N_t/3$ trustees. On the other hand, $\mathsf{Ch}$ plays the role of the EA and all the non-corrupted nodes.
  \item $\mathsf{Ch}$ engages with $\mathcal{A}$ in an election preparation interaction following the \emph{Election Authority} protocol. 
  \item $\mathsf{Ch}$ chooses a bit value $b\in\{0,1\}$.
\item  The adversary $\mathcal{A}$ and the 
challenger $\mathsf{Ch}$ engage in an interaction
where $\mathcal{A}$ schedules the voters
which may run concurrently. For each voter $V_\ell\in \mathcal{V}$, the adversary chooses whether $V_\ell$ is corrupted:
\begin{itemize}
\item 
If $V_\ell$ is corrupted, then $\mathsf{Ch}$ provides the credential $s_\ell$ to  $\mathcal{A}$, who will play the
role of $V_\ell$ to cast the ballot. 
\item If $V_\ell$ is not corrupted, then $\mathcal{A}$ provides two
option selections $\langle \mathsf{option}_{\ell}^0, \mathsf{option}_{\ell}^1 \rangle$ to 
the challenger $\mathsf{Ch}$ which
operates on $V_\ell$'s behalf, voting for option $\mathsf{option}_{\ell}^b$.
 The adversary
$\mathcal{A}$ is allowed to observe the network trace.
After a ballot cast,
the challenger $\mathsf{Ch}$ provides to  $\mathcal{A}$: (a) the audit information $\alpha_\ell$ that $V_\ell$ obtains from the protocol, and 
(b) \underline{if $b=0$}, the current view of the internal state of the voter $V_\ell$, $view_{\ell}$, that the challenger obtains during voting, or  \underline{if $b=1$}, a simulated view of the internal state of $V_\ell$ produced by $\mathcal{S}(view_{\ell})$.
\end{itemize}

\item   The adversary $\mathcal{A}$ and the 
challenger $\mathsf{Ch}$ produce the election tally, running the \emph{Trustee} protocol. $\mathcal{A}$ is allowed to observe the network trace of that protocol. 

\item  Finally, $\mathcal{A}$ using all information collected above (including the contents of the BB) outputs a bit $b^*$.
\end{enumerate}   

Denote the set of corrupted voters as $\mathcal{V}_\mathsf{corr}$ and the set of honest
voters as $\tilde{\mathcal{V}} = \mathcal{V} \setminus \mathcal{V}_\mathsf{corr}$.
The game returns a bit which is $1$ if and
    only if the following hold true:
    \begin{enumerate}[(1)]
     \item\label{priv-cond1} $b=b^*$ (i.e., the adversary guesses $b$ correctly).
     \item\label{priv-cond2} $|\mathcal{V}_\mathsf{corr}| \leq \phi$ (i.e., the number of corrupted voters is bounded by $\phi$).
     \item\label{priv-cond3} $f( \langle \mathsf{option}_{\ell}^0 \rangle_{V_\ell \in \tilde{\mathcal{V}}} ) = f( \langle \mathsf{option}_{\ell}^1 \rangle_{V_\ell \in \tilde{\mathcal{V}}} )$ (i.e., the election result w.r.t. the set of voters in $\tilde{\mathcal{V}}$ does not leak $b$). 
    \end{enumerate}
\end{boxfig}
Our privacy definition extends the one used in~\cite{DEMOS} (there referred as Voter Privacy/Receipt-Freeness) to the distributed setting of D-DEMOS. Similarly, voter privacy is defined via a \emph{Voter Privacy} indistinguishability game as depicted in Figure~\ref{fig:priv.game}.
Note that, our system achieves computational weak unlinkability among the privacy classes modeled by \cite{Bohli:2011:RPN:1952982.1952986}.
%
\begin{definition}[\textsc{\textbf{Voter Privacy}}]\label{def:priv}
Let $0<\epsilon<1$ and $n,m,N_v,N_b,N_t\in \mathbb{N}$. Let $\Pi$ be an e-voting system with $n$ voters, $m$ options awith $n$ voters, $N_v$ VC nodes, $N_b$ BB nodes and $N_t$ trustees w.r.t. the election
 function $f$. We say that $\Pi$ achieves \emph{voter privacy} with error $\epsilon$
 for at most $\phi$ corrupted voters, if there is a PPT voter simulator $\mathcal{S}$ such that
 for any PPT adversary $\mathcal{A}$: 
 \[\big|\Pr[G_{\mathsf{priv}}^{\mathcal{A},\mathcal{S},\phi}(1^{\la},n,m,N_v,N_b,N_t)=1] - 1/2\big| = \negl(\la).\]
\end{definition}
In the following theorem, we prove that D-DEMOS (both the IC and the Async version) achieves voter privacy according to Definition~\ref{def:priv}.
\begin{theorem}[\textsc{\textbf{Voter Privacy of D-DEMOS}}]\label{thm:sec_safety}
Assume there is a constant $c\in(0,1)$ such that for any $2^{\lambda^c}$-time adversary $\mathcal{A}$, the advantage of breaking the hiding property of the underlying commitment scheme is $\mathsf{Adv}_{\mathsf{hide}}(\mathcal{A}) = \mathsf{negl}(\lambda)$. Let $c'<c$ be a constant and set $\phi=\lambda^{c'}$. 
Then, D-DEMOS run with $n$ voters, $m$ options, $N_v$ VC nodes, $N_b$ BB nodes and $N_t$ trustees achieves voter privacy for at most $\phi$ corrupted voters.
\end{theorem}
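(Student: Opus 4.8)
The plan is to reduce voter privacy of D-DEMOS to the computational hiding property of the commitment scheme, adapting the privacy proof of~\cite{DEMOS} to the distributed setting. First I would isolate why corrupting whole subsystems does not help the adversary: the VC initialization data each node receives contains only hashed vote codes $H_{\ell,j}=SHA256(\mathsf{vote\textrm{-}code}_{\ell,j},\mathsf{salt}_{\ell,j})$, the salts, and \emph{shares} of the receipts, but never the vote\,code--to--option correspondence; the corrupt minority $f_b<N_b/3$ of BB nodes only affects consistency of reads, which the honest majority repairs (and the game already conditions on $\mathsf{info}_j=\mathsf{info}_{j'}$ for honest BB nodes); and the $f_t<N_t/3$ corrupt trustees learn nothing about the committed openings by the privacy threshold of Pedersen VSS. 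Hence all information that could link a cast vote code to an option resides solely in the committed option encodings $\Com(\vec{e}_i)$ on the BB, of which only the \emph{unused} ballot parts and the homomorphic \emph{sum} $E_{\mathrm{sum}}$ of the voted encodings are ever opened; the individual voted encodings are never opened.

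Next I would construct the voter simulator $\mathcal{S}$. The only secret in a voter's internal state is which ballot part she used and which line of it she submitted; since the used part is never opened line-by-line, $\mathcal{S}$ takes the real view of a voter who cast $\mathsf{option}^1_\ell$ and relabels it so that it is internally consistent with any target option while remaining consistent with the (always real) audit information $\alpha_\ell$, with the public network trace of the cast vote code, and with the opened BB data. Whatever zero-knowledge material appears in the view I would replace using the honest-verifier simulator guaranteed by the $\Sigma$-protocol HVZK property, which is perfectly simulatable for any fixed challenge (and the challenge here is the public aggregate of the voters' A/B coins). Then I would run the standard hybrid argument over the honest voters $\tilde{\mathcal V}$: hybrid $H_k$ is the game in which the first $k$ honest voters behave as in the $b=1$ world (vote their ``$1$''-option, receive $\mathcal{S}$-simulated views) and the remaining honest voters as in the $b=0$ world, so that $H_0$ is $G_{\mathsf{priv}}$ with $b=0$ and $H_{|\tilde{\mathcal V}|}$ is $G_{\mathsf{priv}}$ with $b=1$.

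To bound $\big|\Pr[\mathcal{A}(H_k)=1]-\Pr[\mathcal{A}(H_{k+1})=1]\big|$ I would build an adversary $\mathcal{B}$ against hiding: $\mathcal{B}$ simulates the entire election for $\mathcal{A}$ honestly except that it plants the hiding-challenge commitment (to $\vec{e}_{\mathsf{option}^0_{k+1}}$ versus $\vec{e}_{\mathsf{option}^1_{k+1}}$) at the BB position that voter $k+1$ will cast --- guessing her A/B coin costs only a factor $2$, which does not compound across a one-voter-at-a-time hybrid --- simulates the associated $\Sigma$-proof transcripts via the HVZK simulator, and, using condition~(\ref{priv-cond3}), which forces the honest-voter tally to be identical in both worlds, together with the honest-majority threshold of the trustees' VSS, produces opening shares for $E_{\mathrm{sum}}$ consistent with the (known) tally without knowing the opening of the planted commitment. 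Outputting $\mathcal{A}$'s guess, $\mathcal{B}$ wins the hiding game with advantage $\tfrac12$ the hybrid gap, up to the negligible HVZK error.

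The main obstacle --- and the reason for the parameter restriction $\phi=\lambda^{c'}$ with $c'<c$ --- is complexity leveraging, forced by the commitment being \emph{perfectly} binding: $\mathcal{B}$ cannot equivocate individual committed encodings, so to keep the simulation perfectly consistent against an adversary that adaptively corrupts up to $\phi$ voters and reads opened BB data, $\mathcal{B}$ must at certain points recover committed openings it does not know by brute force, at cost $2^{\tilde{O}(\phi)}=2^{\tilde{O}(\lambda^{c'})}$. Because $c'<c$, this is $o(2^{\lambda^c})$, so $\mathcal{B}$ still runs within the time budget under which the hiding advantage is assumed to be $\mathsf{negl}(\lambda)$; hence each hybrid gap is negligible, and summing the at most $|\tilde{\mathcal V}|\le n$ gaps (times the factor $2$) keeps $\big|\Pr[G_{\mathsf{priv}}^{\mathcal{A},\mathcal{S},\phi}(1^\la,n,m,N_v,N_b,N_t)=1]-\tfrac12\big|$ negligible, which is exactly Definition~\ref{def:priv}. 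Making this brute-force step precise --- pinning down which openings $\mathcal{B}$ must recover and checking that its running time stays below $2^{\lambda^c}$ for $\phi=\lambda^{c'}$ --- is where the real work lies; the remainder is bookkeeping transported from~\cite{DEMOS}, now routed through the distributed VC/BB/trustee components as described above.
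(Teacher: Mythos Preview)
Your approach diverges from the paper's and has a real gap. The paper does \emph{not} run a voter-by-voter hybrid; it gives a single direct reduction in which the IND-CPA challenge ciphertext $C$ is planted multiplicatively in \emph{every} honest voter's option-encoding commitment, with one distinguished voter's commitment rigged so that the homomorphic sum equals the (pre-guessed) tally $T$ regardless of whether $C$ encrypts $0$ or $1$. When $C=\mathsf{Enc}(1)$ the view is a real election; when $C=\mathsf{Enc}(0)$ every honest voter's commitment is the all-zero vector and the view is information-theoretically independent of $b$. Your hybrid route runs into the standard obstruction for homomorphic-tally systems: in the intermediate hybrid $H_k$ the honest-voter partial tally is $\sum_{i\le k}\mathsf{option}^1_i+\sum_{i>k}\mathsf{option}^0_i$, which is in general \emph{different} from the tally in $H_{k+1}$. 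Condition~(\ref{priv-cond3}) equates only the endpoints $H_0$ and $H_{|\tilde{\mathcal V}|}$. Since the commitment is perfectly binding, the trustees' opening shares that $\mathcal{B}$ must produce are uniquely determined by the actual committed sum, which depends on the unknown challenge bit; your reduction therefore cannot complete the tally-opening step without already knowing which neighbour hybrid it is in. (A permutation-paired hybrid could repair this, but you do not describe one.)

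Your account of the complexity leveraging is also off. The $2^{\phi}$ blow-up in the paper does \emph{not} come from brute-forcing committed openings---that would cost time in the group order, not in $\phi$---but from guessing in advance the A/B coins of the up to $\phi$ \emph{corrupted} voters (plus a further $(n+1)^m$ factor from guessing the tally). These guesses are needed because the first prover moves of the $\Sigma$-proofs are published at setup, before any voter flips a coin, so the HVZK simulator must be run on the \emph{correct} challenge ahead of time; controlling the honest voters' coins is free for the challenger, but the corrupted voters' coins are chosen by $\mathcal{A}$ and must be guessed. Your ``factor $2$ for one voter's coin'' does not suffice: the ZK challenge aggregates \emph{all} voters' coins, so every hybrid step would still incur the full $2^{\phi}$ guessing cost, not a single factor of $2$.
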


\begin{proof}
To prove voter privacy, we explicitly construct a simulator $\mathcal{S}$ such that we can convert any adversary $\mathcal{A}$ who can win the privacy game $G_{\mathsf{priv}}^{\mathcal{A},\mathcal{S},\phi}(1^{\la},n,m,N_v,N_b,N_t)$ with a non-negligible probability into an adversary $\mathcal{B}$ who can break the hiding assumption of the underlying commitment scheme within $poly(\lambda)\cdot 2^{\lambda^{c'}} << 2^{\lambda^{c}}$ time.

Note that the challenger $\mathsf{Ch}$ is maintaining a coin $b\in\{0,1\}$ and always uses the option $\mathsf{option}_{\ell}^b$ to cast the honest voters' ballots. When $n-\phi <2$, the simulator $\mathcal{S}$ simply outputs the real voters' views. When $n-\phi \geq 2$, consider the following simulator $\mathcal{S}$: At the beginning of the experiment, $\mathcal{S}$ flips a coin $b'\leftarrow\{0,1\}$. Then, for each honest voter $V_\ell$, $\mathcal{S}$ switches the vote codes for option $\mathsf{option}_{\ell}^b$ and $\mathsf{option}_{\ell}^{b'}$.
%
%
\par Due to full VC corruption, $\mathcal{A}$ learns all the vote codes. However, it does not help the adversary to distinguish the simulated view from real view as the simulator only permutes vote codes. We now can show that if $\mathcal{A}$ can win $G_{\mathsf{priv}}^{\mathcal{A},\mathcal{S},\phi}(1^{\la},n,m,N_v,N_b,N_t)$, then we can construct an adversary $\mathcal{B}$ that invokes $\mathcal{A}$ to win the IND-CPA game of the underlying ElGamal encryption.  In the IND-CPA  game, $\mc{B}$ receives as input a public key $\mathsf{pk}$ and executes the following steps:
%
\begin{enumerate}[1.]
\item It submits challenge messages $M_0=0,M_1=1$ and receives challenge ciphertext $C=\mathsf{Enc}_{\mathsf{pk}}(M_{b^*})$, where $b^*$ is the IND-CPA challenge bit for $\mc{B}$.
\item It invokes $\mc{A}$ and simulates $G_{\mathsf{priv}}^{\mathcal{A},\mathcal{S},\phi}(1^{\la},n,m,N_v,N_b,N_t)$, itself being the challenger.
\item $\mc{B}$ flips a coin $b\in\{0,1\}$ and uses the received public key $\mathsf{pk}$ as the election commitment key.
\item At the beginning, $\mc{B}$ generates/guesses all the voters coins, $\textbf{c}=(c_1,c_2,\ldots,c_n)$, and uses the coin $c_\ell$ for all the uncorrupted voter $V_\ell$; if some corrupted voters' coins do not match the guessed ones, start over again. This requires $2^\phi$ expected attempts to guess all the coins correctly.
\item $\mc{B}$ guesses the election tally $T=(T_1,T_2,\ldots,T_m)$, and starts over again if the guess is incorrect. This requires less than ${(n+1)^{m}}$ expected attempts.
\item $\mc{B}$ simulates all the zero knowledge proofs using the guessed voters' coins.
\item $\mc{B}$ guesses/chooses an uncorrupted voter $V_{\ell'}$; the option encoding commitment of $V_{\ell'}$'s ballot for the $i$-th option is set as $
\big({\mathsf{Enc}_{\mathsf{pk}}(T_1)}\cdot{C^{-T_1}}, \ldots, {\mathsf{Enc}_{\mathsf{pk}}(T_{i-1})}\cdot{C^{-T_{i-1}}},\underline{\mathsf{Enc}_{\mathsf{pk}}(T_i)\cdot{C^{-(T_i-1)}}},$ ${\mathsf{Enc}_{\mathsf{pk}}(T_{i+1})}\cdot{C^{-T_{i+1}}},  \ldots,{\mathsf{Enc}_{\mathsf{pk}}(T_m)}\cdot{C^{-T_m}} \big).
$
\par For the rest of the voters, it commits the $i$-th option as $\big(\mathsf{Enc}_{\mathsf{pk}}(0), \ldots, \underline{C\cdot \mathsf{Enc}_{\mathsf{pk}}(0)} , \ldots,  \mathsf{Enc}_{\mathsf{pk}}(0)\big)\;.$
%
\item If $V_\ell$ is corrupted, then $\mc{B}$ provides the credential $s_\ell$ to $\mc{A}$.
\item If $V_{\ell}$ is not corrupted, then $\mc{B}$ receives two option selections $\langle \mathsf{option}_{\ell}^0, \mathsf{option}_{\ell}^1 \rangle$ from $\mc{A}$. It then casts the vote by submitting the vote code corresponding to $\mathsf{option}_{\ell}^{b}$.
\item $\mc{B}$ finishes the election according to the protocol and returns $b^*=1$ if $\mc{A}$ guesses $b$ correctly. 
\end{enumerate}
%
Note that if $C$ encrypts $1$, the commitments on the BB are the same as the ones in a real election; whereas, if $C$ encrypts $0$, the commitments of all the voters are commitments of $0$'s except one honest voter's commitment is the tally results. In the latter case, the adversary $\mc{A}$'s winning probability is exactly $1/2$. Since the zero knowledge proofs are perfectly simulatable, it is easy to see that the advantage of $\mc{B}$ is the same as the advantage of $\mc{A}$.
Moreover, the running time of $\mc{B}$ is ${poly(\lambda)\cdot (n+1)^{m}}\cdot2^\phi=O(2^{\lambda^{c'}})$ steps. By exploiting the distinguishing advantage of $\mathcal{A}$, $\mathcal{B}$ can break the hiding property of the option-encoding commitment scheme in $O(2^{\lambda^{c'}})=o(2^{\lambda^c})$ steps, thus leading to contradiction.
\end{proof}


\bibliographystyle{ACM-Reference-Format-Journals}
\bibliography{MyBibFile}


\begin{thebibliography}{00}


\ifx \showCODEN    \undefined \def \showCODEN     #1{\unskip}     \fi
\ifx \showDOI      \undefined \def \showDOI       #1{{\tt DOI:}\penalty0{#1}\ }
  \fi
\ifx \showISBNx    \undefined \def \showISBNx     #1{\unskip}     \fi
\ifx \showISBNxiii \undefined \def \showISBNxiii  #1{\unskip}     \fi
\ifx \showISSN     \undefined \def \showISSN      #1{\unskip}     \fi
\ifx \showLCCN     \undefined \def \showLCCN      #1{\unskip}     \fi
\ifx \shownote     \undefined \def \shownote      #1{#1}          \fi
\ifx \showarticletitle \undefined \def \showarticletitle #1{#1}   \fi
\ifx \showURL      \undefined \def \showURL       #1{#1}          \fi

\bibitem[\protect\citeauthoryear{Adida}{Adida}{2008}]%
        {adida-helios-2008}
{Ben Adida}. 2008.
\newblock \showarticletitle{Helios: Web-based Open-Audit Voting}. In {\em
  USENIX Security Symposium}.
\newblock


\bibitem[\protect\citeauthoryear{Appel}{Appel}{2011}]%
        {Appel:2011:SSV:2019599.2019603}
{Andrew~W. Appel}. 2011.
\newblock \showarticletitle{Security Seals on Voting Machines: A Case Study}.
\newblock {\em ACM Trans. Inf. Syst. Secur.\/} {14}, 2, Article 18 (Sept.
  2011), 29 pages.
\newblock
\showISSN{1094-9224}
\showDOI{%
\url{http://dx.doi.org/10.1145/2019599.2019603}}


\bibitem[\protect\citeauthoryear{Aublin, Ben~Mokhtar, and Qu{\'e}ma}{Aublin
  et~al\mbox{.}}{2013}]%
        {aublin2013rbft}
{P-L Aublin}, {S Ben~Mokhtar}, {and} {Vivien Qu{\'e}ma}. 2013.
\newblock \showarticletitle{RBFT: Redundant byzantine fault tolerance}. In {\em
  IEEE ICDCS}.
\newblock


\bibitem[\protect\citeauthoryear{Aublin, Guerraoui, Kne{\v{z}}evi{\'c},
  Qu{\'e}ma, and Vukoli{\'c}}{Aublin et~al\mbox{.}}{2015}]%
        {next700tocs}
{Pierre-Louis Aublin}, {Rachid Guerraoui}, {Nikola Kne{\v{z}}evi{\'c}}, {Vivien
  Qu{\'e}ma}, {and} {Marko Vukoli{\'c}}. 2015.
\newblock \showarticletitle{The next 700 BFT protocols}.
\newblock {\em ACM Transactions on Computer Systems (TOCS)\/} {32}, 4 (2015),
  12.
\newblock


\bibitem[\protect\citeauthoryear{Bellare and Rogaway}{Bellare and
  Rogaway}{2005}]%
        {bellare-rogaway-notes}
{Mihir Bellare} {and} {Phillip Rogaway}. 2005.
\newblock Introduction to modern cryptography.
\newblock {UCSD} {CSE} 207 Course Notes.   (2005).
\newblock


\bibitem[\protect\citeauthoryear{Benaloh, Byrne, Eakin, Kortum, McBurnett,
  Pereira, Stark, Wallach, Fisher, Montoya, Parker, and Winn}{Benaloh
  et~al\mbox{.}}{2013}]%
        {benaloh2013starvote}
{Josh Benaloh}, {Michael~D. Byrne}, {Bryce Eakin}, {Philip~T. Kortum}, {Neal
  McBurnett}, {Olivier Pereira}, {Philip~B. Stark}, {Dan~S. Wallach}, {Gail
  Fisher}, {Julian Montoya}, {Michelle Parker}, {and} {Michael Winn}. 2013.
\newblock \showarticletitle{{STAR}-Vote: {A} Secure, Transparent, Auditable,
  and Reliable Voting System}. In {\em {EVT/WOTE} '13}.
\newblock


\bibitem[\protect\citeauthoryear{Bohli and Pashalidis}{Bohli and
  Pashalidis}{2011}]%
        {Bohli:2011:RPN:1952982.1952986}
{Jens-Matthias Bohli} {and} {Andreas Pashalidis}. 2011.
\newblock \showarticletitle{Relations Among Privacy Notions}.
\newblock {\em ACM Trans. Inf. Syst. Secur.\/} {14}, 1, Article 4 (June 2011),
  24 pages.
\newblock
\showISSN{1094-9224}
\showDOI{%
\url{http://dx.doi.org/10.1145/1952982.1952986}}


\bibitem[\protect\citeauthoryear{Castro and Liskov}{Castro and Liskov}{2002}]%
        {castro2002practical}
{Miguel Castro} {and} {Barbara Liskov}. 2002.
\newblock \showarticletitle{Practical Byzantine fault tolerance and proactive
  recovery}.
\newblock {\em ACM Transactions on Computer Systems (TOCS)\/} {20}, 4 (2002),
  398--461.
\newblock


\bibitem[\protect\citeauthoryear{Chaum}{Chaum}{2001}]%
        {chaum2001surevote}
{David Chaum}. 2001.
\newblock \showarticletitle{SureVote: Technical Overview}. In {\em Proceedings
  of the Workshop on Trustworthy Elections} {\em (WOTE)}.
\newblock


\bibitem[\protect\citeauthoryear{Chaum, Essex, Carback, Clark, Popoveniuc,
  Sherman, and Vora}{Chaum et~al\mbox{.}}{2008}]%
        {chaum2008scantegrity}
{David Chaum}, {Aleks Essex}, {Richard Carback}, {Jeremy Clark}, {Stefan
  Popoveniuc}, {Alan Sherman}, {and} {Poorvi Vora}. 2008.
\newblock \showarticletitle{Scantegrity: End-to-end voter-verifiable
  optical-scan voting}.
\newblock {\em Security \& Privacy, IEEE\/} {6}, 3 (2008), 40--46.
\newblock


\bibitem[\protect\citeauthoryear{Chaum and Pedersen}{Chaum and
  Pedersen}{1993}]%
        {CP}
{David Chaum} {and} {Torben~P. Pedersen}. 1993.
\newblock \showarticletitle{Wallet Databases with Observers}. In {\em CRYPTO
  '92}. Springer-Verlag, 89--105.
\newblock
\showISBNx{3-540-57340-2}


\bibitem[\protect\citeauthoryear{Chaum, Ryan, and Schneider}{Chaum
  et~al\mbox{.}}{2005}]%
        {chaum-esorics-2005}
{David Chaum}, {Peter Y.~A. Ryan}, {and} {Steve~A. Schneider}. 2005.
\newblock \showarticletitle{A Practical Voter-Verifiable Election Scheme}. In
  {\em {ESORICS} 2005}. 118--139.
\newblock


\bibitem[\protect\citeauthoryear{Chondros, Zhang, Zacharias, Diamantopoulos,
  Maneas, Patsonakis, Delis, Kiayias, and Roussopoulos}{Chondros
  et~al\mbox{.}}{2016}]%
        {DDEMOS@ICDCS2016}
{Nikos Chondros}, {Bingsheng Zhang}, {Thomas Zacharias}, {Panos
  Diamantopoulos}, {Stathis Maneas}, {Christos Patsonakis}, {Alex Delis},
  {Aggelos Kiayias}, {and} {Mema Roussopoulos}. 2016.
\newblock \showarticletitle{D-DEMOS: A Distributed, End-to-end Verifiable,
  Internet Voting system}. In {\em Distributed Computing Systems (ICDCS), 2016
  IEEE 36th International Conference on}.
\newblock


\bibitem[\protect\citeauthoryear{Clarkson, Chong, and Myers}{Clarkson
  et~al\mbox{.}}{2008}]%
        {clarkson2008civitas}
{Michael~R. Clarkson}, {Stephen Chong}, {and} {Andrew~C. Myers}. 2008.
\newblock \showarticletitle{Civitas: Toward a Secure Voting System}. In {\em
  {IEEE} Symposium on Security and Privacy}.
\newblock


\bibitem[\protect\citeauthoryear{Clement, Kapritsos, Lee, Wang, Alvisi, Dahlin,
  and Riche}{Clement et~al\mbox{.}}{2009a}]%
        {clement2009upright}
{Allen Clement}, {Manos Kapritsos}, {Sangmin Lee}, {Yang Wang}, {Lorenzo
  Alvisi}, {Mike Dahlin}, {and} {Taylor Riche}. 2009a.
\newblock \showarticletitle{Upright cluster services}. In {\em Proc. of ACM
  SOSP}.
\newblock


\bibitem[\protect\citeauthoryear{Clement, Wong, Alvisi, Dahlin, and
  Marchetti}{Clement et~al\mbox{.}}{2009b}]%
        {clement2009making}
{Allen Clement}, {Edmund~L Wong}, {Lorenzo Alvisi}, {Michael Dahlin}, {and}
  {Mirco Marchetti}. 2009b.
\newblock \showarticletitle{Making Byzantine Fault Tolerant Systems Tolerate
  Byzantine Faults.}. In {\em NSDI}, Vol.~9. 153--168.
\newblock


\bibitem[\protect\citeauthoryear{community}{community}{2015a}]%
        {Netty}
{Netty community}. 2015a.
\newblock Netty, an asynchronous event-driven network application framework.
\newblock \url{http://netty.io/}.   (2015).
\newblock


\bibitem[\protect\citeauthoryear{community}{community}{2015b}]%
        {PostgreSQL}
{PostgreSQL community}. 2015b.
\newblock PostgreSQL RDBMS.
\newblock \url{http://www.postgresql.org/}.   (2015).
\newblock


\bibitem[\protect\citeauthoryear{Cowling, Myers, Liskov, Rodrigues, and
  Shrira}{Cowling et~al\mbox{.}}{2006}]%
        {cowling2006hq}
{James Cowling}, {Daniel Myers}, {Barbara Liskov}, {Rodrigo Rodrigues}, {and}
  {Liuba Shrira}. 2006.
\newblock \showarticletitle{HQ replication: A hybrid quorum protocol for
  Byzantine fault tolerance}. In {\em Proceedings of USENIX OSDI}.
\newblock


\bibitem[\protect\citeauthoryear{Cramer, Damg{\aa}rd, and Schoenmakers}{Cramer
  et~al\mbox{.}}{1994}]%
        {CDS94}
{Ronald Cramer}, {Ivan Damg{\aa}rd}, {and} {Berry Schoenmakers}. 1994.
\newblock \showarticletitle{Proofs of Partial Knowledge and Simplified Design
  of Witness Hiding Protocols}. In {\em CRYPTO '94}. Springer Berlin
  Heidelberg, 174--187.
\newblock


\bibitem[\protect\citeauthoryear{Cramer, Gennaro, and Schoenmakers}{Cramer
  et~al\mbox{.}}{1997}]%
        {CGS-eurocrypt-1997}
{Ronald Cramer}, {Rosario Gennaro}, {and} {Berry Schoenmakers}. 1997.
\newblock \showarticletitle{A Secure and Optimally Efficient Multi-Authority
  Election Scheme}. In {\em {EUROCRYPT}}.
\newblock


\bibitem[\protect\citeauthoryear{Culnane, Ryan, Schneider, and Teague}{Culnane
  et~al\mbox{.}}{2015}]%
        {culnane2014vVote}
{Chris Culnane}, {Peter Y.~A. Ryan}, {Steve Schneider}, {and} {Vanessa Teague}.
  2015.
\newblock \showarticletitle{vVote: A Verifiable Voting System}.
\newblock {\em ACM Trans. Inf. Syst. Secur.\/} {18}, 1, Article 3 (June 2015),
  30 pages.
\newblock
\showISSN{1094-9224}
\showDOI{%
\url{http://dx.doi.org/10.1145/2746338}}


\bibitem[\protect\citeauthoryear{Culnane and Schneider}{Culnane and
  Schneider}{2014}]%
        {culnane2014peered}
{Chris Culnane} {and} {Steve Schneider}. 2014.
\newblock \showarticletitle{A peered bulletin board for robust use in
  verifiable voting systems}. In {\em Computer Security Foundations Symposium
  (CSF), 2014 IEEE 27th}. IEEE, 169--183.
\newblock


\bibitem[\protect\citeauthoryear{Diamantopoulos, Maneas, Patsonakis, Chondros,
  and Roussopoulos}{Diamantopoulos et~al\mbox{.}}{2015}]%
        {IC@ICPADS2015}
{P. Diamantopoulos}, {S. Maneas}, {C. Patsonakis}, {N. Chondros}, {and} {M.
  Roussopoulos}. 2015.
\newblock \showarticletitle{Interactive Consistency in Practical,
  Mostly-Asynchronous Systems}. In {\em Parallel and Distributed Systems
  (ICPADS), 2015 IEEE 21st International Conference on}. 752--759.
\newblock
\showDOI{%
\url{http://dx.doi.org/10.1109/ICPADS.2015.99}}


\bibitem[\protect\citeauthoryear{Dini}{Dini}{2003}]%
        {dini2003secure}
{Gianluca Dini}. 2003.
\newblock \showarticletitle{A secure and available electronic voting service
  for a large-scale distributed system}.
\newblock {\em Future Generation Computer Systems\/} {19}, 1 (2003), 69--85.
\newblock


\bibitem[\protect\citeauthoryear{El~Gamal}{El~Gamal}{}]%
        {ElGamal}
{Taher El~Gamal}.
\newblock \showarticletitle{A Public Key Cryptosystem and a Signature Scheme
  Based on Discrete Logarithms}. In {\em Springer-Verlag CRYPTO 1984}.
\newblock


\bibitem[\protect\citeauthoryear{El~Gamal}{El~Gamal}{1985}]%
        {elgamal-crypto-1985}
{Taher El~Gamal}. 1985.
\newblock \showarticletitle{A Public Key Cryptosystem and a Signature Scheme
  Based on Discrete Logarithms}. In {\em CRYPTO}. Springer-Verlag, 10--18.
\newblock


\bibitem[\protect\citeauthoryear{Feige, Fiat, and Shamir}{Feige
  et~al\mbox{.}}{1988}]%
        {feige1988zero}
{Uriel Feige}, {Amos Fiat}, {and} {Adi Shamir}. 1988.
\newblock \showarticletitle{Zero-knowledge proofs of identity}.
\newblock {\em Journal of cryptology\/} {1}, 2 (1988), 77--94.
\newblock


\bibitem[\protect\citeauthoryear{Fisher, Carback, and Sherman}{Fisher
  et~al\mbox{.}}{2006}]%
        {fisher-wote-2006}
{K. Fisher}, {R. Carback}, {and} {A. Sherman}. 2006.
\newblock \showarticletitle{Punchscan: introduction and system definition of a
  high-integrity election system}. In {\em {WOTE}}.
\newblock


\bibitem[\protect\citeauthoryear{Gj{\o}steen}{Gj{\o}steen}{2013}]%
        {gjosteen2013norway}
{Kristian Gj{\o}steen}. 2013.
\newblock \showarticletitle{The Norwegian Internet Voting Protocol}.
\newblock {\em {IACR} Cryptology ePrint Archive\/}  {2013} (2013), 473.
\newblock
\showURL{%
\url{http://eprint.iacr.org/2013/473}}


\bibitem[\protect\citeauthoryear{Grigorik}{Grigorik}{2013}]%
        {coast2coast}
{Ilya Grigorik}. 2013.
\newblock High Performance Browser Networking: What every web developer should
  know about networking and web performance.
\newblock
  \url{http://chimera.labs.oreilly.com/books/1230000000545/ch01.html#PROPAGATION_LATENCY}.
    (2013).
\newblock


\bibitem[\protect\citeauthoryear{Hemminger et~al\mbox{.}}{Hemminger
  et~al\mbox{.}}{2005}]%
        {Netem}
{Stephen Hemminger} {and} {others}. 2005.
\newblock \showarticletitle{Network emulation with NetEm}. In {\em Linux conf
  au}. Citeseer, 18--23.
\newblock


\bibitem[\protect\citeauthoryear{Inc.}{Inc.}{2015}]%
        {protobuf}
{Google Inc.} 2015.
\newblock Google protocol buffers.
\newblock \url{https://code.google.com/p/protobuf/}.   (2015).
\newblock


\bibitem[\protect\citeauthoryear{{I}nternet~{P}olicy
  {I}nstitue}{{I}nternet~{P}olicy {I}nstitue}{2001}]%
        {IPI-2001}
{{I}nternet~{P}olicy {I}nstitue}. 2001.
\newblock Report of the National Workshop on Internet Voting: Issues and
  Research Agenda.
\newblock   (March 2001).
\newblock


\bibitem[\protect\citeauthoryear{Kiayias, Zacharias, and Zhang}{Kiayias
  et~al\mbox{.}}{2015}]%
        {DEMOS}
{Aggelos Kiayias}, {Thomas Zacharias}, {and} {Bingsheng Zhang}. 2015.
\newblock \showarticletitle{End-to-End Verifiable Elections in the Standard
  Model}. In {\em {EUROCRYPT} 2015}. 468--498.
\newblock


\bibitem[\protect\citeauthoryear{Kotla, Alvisi, Dahlin, Clement, and
  Wong}{Kotla et~al\mbox{.}}{2007}]%
        {kotla2007zyzzyva}
{R. Kotla}, {L. Alvisi}, {M. Dahlin}, {A. Clement}, {and} {E. Wong}. 2007.
\newblock \showarticletitle{Zyzzyva: Speculative Byzantine Fault Tolerance}. In
  {\em SOSP}.
\newblock


\bibitem[\protect\citeauthoryear{Kutylowski and Zag{\'{o}}rski}{Kutylowski and
  Zag{\'{o}}rski}{2010}]%
        {kutylowski2010SCV}
{Miroslaw Kutylowski} {and} {Filip Zag{\'{o}}rski}. 2010.
\newblock \showarticletitle{Scratch, Click {\&} Vote: {E2E} Voting over the
  Internet}. In {\em Towards Trustworthy Elections, New Directions in
  Electronic Voting}. 343--356.
\newblock
\showDOI{%
\url{http://dx.doi.org/10.1007/978-3-642-12980-3_21}}


\bibitem[\protect\citeauthoryear{Laisi}{Laisi}{2015}]%
        {PGASYNC}
{Antti Laisi}. 2015.
\newblock Asynchronous PostgreSQL Java driver.
\newblock \url{https://github.com/alaisi/postgres-async-driver/}.   (2015).
\newblock


\bibitem[\protect\citeauthoryear{Lynch}{Lynch}{1996}]%
        {Lynch:1996:DA}
{Nancy Lynch}. 1996.
\newblock {\em Distributed Algorithms}.
\newblock Morgan Kaufmann.
\newblock


\bibitem[\protect\citeauthoryear{Maneas}{Maneas}{2015}]%
        {ManeasThesis}
{Stathis Maneas}. 2015.
\newblock {\em Implementation and evaluation of a distributed, end-to-end
  verifiable, internet voting system}.
\newblock MSc. University of Athens.
\newblock


\bibitem[\protect\citeauthoryear{MIRACL}{MIRACL}{2015}]%
        {MIRACL}
{MIRACL}. 2015.
\newblock MIRACL Multi-precision Integer and Rational Arithmetic C/C++ Library.
\newblock \url{http://www.certivox.com/miracl/}.   (2015).
\newblock


\bibitem[\protect\citeauthoryear{Moran and Naor}{Moran and Naor}{2010}]%
        {Moran:2010:SVE:1698750.1698756}
{Tal Moran} {and} {Moni Naor}. 2010.
\newblock \showarticletitle{Split-ballot Voting: Everlasting Privacy with
  Distributed Trust}.
\newblock {\em ACM Trans. Inf. Syst. Secur.\/} {13}, 2, Article 16 (March
  2010), 43 pages.
\newblock
\showISSN{1094-9224}
\showDOI{%
\url{http://dx.doi.org/10.1145/1698750.1698756}}


\bibitem[\protect\citeauthoryear{Neumann}{Neumann}{1993}]%
        {neumann-ncsc-1993}
{P.G. Neumann}. 1993.
\newblock \showarticletitle{Security criteria for electronic voting}. In {\em
  National {C}omputer {S}ecurity {C}onference}. 478--481.
\newblock


\bibitem[\protect\citeauthoryear{Pease, Shostak, and Lamport}{Pease
  et~al\mbox{.}}{1980}]%
        {lamport.ic}
{M. Pease}, {R. Shostak}, {and} {L. Lamport}. 1980.
\newblock \showarticletitle{Reaching Agreement in the Presence of Faults}.
\newblock {\em J. ACM\/} {27}, 2 (April 1980), 228--234.
\newblock
\showISSN{0004-5411}


\bibitem[\protect\citeauthoryear{Pedersen}{Pedersen}{1991}]%
        {vss}
{TorbenPryds Pedersen}. 1991.
\newblock \showarticletitle{Non-Interactive and Information-Theoretic Secure
  Verifiable Secret Sharing}.
\newblock In {\em Advances in Cryptology — CRYPTO}.
\newblock
\showISBNx{978-3-540-55188-1}


\bibitem[\protect\citeauthoryear{Quisquater, Quisquater, Quisquater,
  Quisquater, Guillou, Guillou, Guillou, Guillou, Guillou, and
  Guillou}{Quisquater et~al\mbox{.}}{1990}]%
        {quisquater1990explain}
{Jean-Jacques Quisquater}, {Myriam Quisquater}, {Muriel Quisquater},
  {Micha{\"e}l Quisquater}, {Louis Guillou}, {Marie Guillou}, {Ga{\"\i}d
  Guillou}, {Anna Guillou}, {Gwenol{\'e} Guillou}, {and} {Soazig Guillou}.
  1990.
\newblock \showarticletitle{How to explain zero-knowledge protocols to your
  children}. In {\em Advances in Cryptology-CRYPTO’89 Proceedings}. Springer,
  628--631.
\newblock


\bibitem[\protect\citeauthoryear{Schneier}{Schneier}{1996}]%
        {schneier1996applied}
{Bruce Schneier}. 1996.
\newblock {\em Applied cryptography}.
\newblock John Wiley \& Sons.
\newblock


\bibitem[\protect\citeauthoryear{Tiemens}{Tiemens}{2015}]%
        {JavaShamir}
{Tim Tiemens}. 2015.
\newblock Shamir's Secret Share in Java.
\newblock \url{https://github.com/timtiemens/secretshare}.   (2015).
\newblock


\bibitem[\protect\citeauthoryear{Zag{\'o}rski, Carback, Chaum, Clark, Essex,
  and Vora}{Zag{\'o}rski et~al\mbox{.}}{2013}]%
        {zagorski2013remotegrity}
{Filip Zag{\'o}rski}, {Richard~T Carback}, {David Chaum}, {Jeremy Clark},
  {Aleksander Essex}, {and} {Poorvi~L Vora}. 2013.
\newblock \showarticletitle{Remotegrity: Design and use of an end-to-end
  verifiable remote voting system}. In {\em Applied Cryptography and Network
  Security}.
\newblock


\end{thebibliography}

\end{document}